\newcommand{\lyxmathsym}[1]{\ifmmode\begingroup\def\b@ld{bold}
  \text{\ifx\math@version\b@ld\bfseries\fi#1}\endgroup\else#1\fi}
\providecommand{\tabularnewline}{\\}
\numberwithin{equation}{section}
\theoremstyle{definition}
 \newtheorem{example}{\protect\examplename}
\theoremstyle{remark}
\newtheorem{rem}{\protect\remarkname}
\theoremstyle{definition}
\newtheorem{defn}{\protect\definitionname}
\theoremstyle{plain}
\newtheorem{fact}{\protect\factname}
\theoremstyle{remark}
\theoremstyle{plain}
\newtheorem{cor}{\protect\corollaryname}
\theoremstyle{plain}
\newtheorem{lem}{\protect\lemmaname}
\theoremstyle{plain}
\newtheorem{prop}{\protect\propositionname}
\theoremstyle{plain}
\newtheorem*{cor*}{\protect\corollaryname}
\theoremstyle{plain}
\newtheorem{thm}{\protect\theoremname}
\theoremstyle{definition}
\newtheorem*{example*}{\protect\examplename}
\theoremstyle{plain}
\newtheorem*{prop*}{\protect\propositionname}
\newcounter{primetheorem}
\renewcommand{\theprimetheorem}{\arabic{primetheorem}'}
\newenvironment{primedtheorem}[1][]{%
    \refstepcounter{primetheorem}%
    \noindent\textbf{Theorem \theprimetheorem.} #1}{}
\definecolor{green}{RGB}{00, 180, 00}
\definecolor{red}{RGB}{180, 00, 00}
\providecommand{\claimname}{Claim}
\providecommand{\corollaryname}{Corollary}
\providecommand{\lemmaname}{Lemma}
\providecommand{\definitionname}{Definition}
\providecommand{\examplename}{Example}
\providecommand{\factname}{Fact}
\providecommand{\propositionname}{Proposition}
\providecommand{\remarkname}{Remark}
\providecommand{\theoremname}{Theorem}
\begin{document}
\title{Heterogeneous Noise and Stable Miscoordination}
\author{Srinivas Arigapudi\thanks{Department of Economic Sciences, IIT Kanpur, \protect\protect\href{mailto:arigapudi@iitk.ac.in}{arigapudi@iitk.ac.in}.}
\and Yuval Heller\thanks{Department of Economics, Bar-Ilan University, \protect\protect\href{mailto:yuval.heller@biu.ac.il}{yuval.heller@biu.ac.il}.} \and Amnon Schreiber\thanks{Department of Economics, Bar-Ilan University, \protect\protect\href{mailto:amnon.schreiber@biu.ac.il}{amnon.schreiber@biu.ac.il}.}\thanks{This paper replaces an 
earlier working paper titled \textquotedblleft Sampling
Dynamics and Stable Mixing in Hawk--Dove Games.\textquotedblright \ We are grateful 
to Vince Crawford, Josef Hofbauer, Bill Zame, and seminar and conference participants at Bar-Ilan University, University of Haifa, Learning, Evolution and Games, SWET 2023 (UC Irvine), UC Los Angeles, UC Riverside, UC San Diego, and University of Southern California, 
as well as the editor and anonymous reviewers, for their  valuable feedback. We also thank Luis Izquierdo and Segismundo Izquierdo for their help in implementing some of the simulations carried out in the paper. YH gratefully acknowledges the financial support of the European Research
Council (\#677057), the Israeli Science Foundation (\#448/22), and the US--Israel Binational Science Foundation (\#2020022). SA gratefully acknowledges the financial support of the Fine fellowship of the Technion and an Initiation grant from IIT Kanpur.}}
\maketitle
\begin{abstract}
\noindent 

Coordination games feature two types of equilibria: pure equilibria, where players successfully coordinate their actions, and mixed equilibria, where players frequently experience miscoordination. We investigate learning dynamics where agents observe the actions of a random sample of their opponents. First, we show that when all agents have the same sample size, whether it is small or large, their behavior converges to one of the pure coordinated equilibria. By contrast, our main results show that stable miscoordination often persists when some agents make decisions based on small samples while others rely on large samples. 

\noindent \textbf{Keywords}: sampling 
best-response dynamics, action-sampling
dynamics, coordination games, hawk--dove games, evolutionary stability,
logit dynamics.\\
\noindent\textbf{JEL Classification}: C72, C73.\\
Final pre-print of a manuscript accepted for publication in \textit{American Economic Journal: Microeconomics}.
\end{abstract}

\section{Introduction\label{sec:Introduction}}
Coordination games provide a valuable framework for modeling real-life situations, where the optimal response to an opponent's action is often to mimic that action. In the context of two-player two-action coordination games, we observe three equilibria of two distinct types: two strict pure equilibria, allowing successful coordination of actions, and a mixed equilibrium, characterized by frequent miscoordination.

A fundamental result in evolutionary
game theory is that under a broad set of learning dynamics, the mixed
equilibrium is unstable and populations, in which agents are randomly
matched to play coordination games,  converge to everyone playing
one of the pure equilibria (as surveyed in Section \ref{Sec-discussion}).\footnote{\label{footnote1}
For example, \citet[p. 946]{myerson2015tenable} state that ``if such a [coordination] game is often played in culturally familiar settings, the
mixed equilibrium appears very unlikely. One would expect individuals to develop an understanding that coordinates their expectations at one of the strict equilibria.''}
By contrast, in this paper we show that miscoordination 
can be stable if the populations are heterogeneous in the sense that some agents rely on anecdotal evidence, while other agents have accurate information about the opponents' behavior. 

\paragraph{Highlights of the Model}

Consider a setup in which pairs of agents from an infinite  population are repeatedly randomly matched to play a (one-shot) two-action coordination
game. Agents occasionally die and are replaced by new agents 
(or,
alternatively, agents occasionally receive opportunities to revise
their actions). The new agents do not have precise information about
the aggregate behavior in the population, and estimate
this behavior through sampling. Specifically, the population
is characterized by a distribution of sample sizes.
 Each agent with sample size $k$ observes the behavior of $k$ random
opponents, and then adopts the action that is a best response to her
sample (with an arbitrary tie-breaking 
rule).\footnote{This behavior can be interpreted by each new agent utilizing her own sample to calculate a maximum likelihood estimate of the aggregate behavior.} 
These learning dynamics,
which seem plausible in various setups, are called \emph{sampling
best-response dynamics }(\citealp*{sandholm2001almost,
osborne2003sampling,oyama2015sampling},
henceforth abbreviated as \emph{sampling dynamics}).

\paragraph{Main Results}
We first show that heterogeneity in sample size  is
necessary for stable miscoordination. Specifically, 
Theorem \ref{thm1} shows that if
all agents have the same sample size, 
then all states with miscoordination are unstable.
Our next two results show that introducing heterogeneity in sample size often yields stable miscoordination. Theorem \ref{thm2} shows that for many coordination games, there are many heterogeneous distributions  that admit a stable interior state with miscoordination. Theorem \ref{thm3} further shows that under some assumptions on the payoffs, these stable interior states always induce a high level of miscoordination,  in the sense that most pairs of matched agents miscoordinate.

The intuition for why heterogeneity is necessary (Theorem \ref{thm1}) and often sufficient (Theorems \ref{thm2} and \ref{thm3}) 
for stable miscoordination is as follows. 
The probability that a new agent plays the first action $a$ is an increasing 
function of the share of opponents who play $a$, and the probability coincides with this share if it is either zero or one. 
The derivative of this
function captures the sensitivity of the behavior of new agents to small changes in the share of opponents who play $a$.

Observe that the average value of the derivative in the unit interval is $\frac{1-0}{1-0}=1$. One can show that if all agents have the same sample size, the slope is a unimodal function of the share of agents who play $a$ and the interior stationary state is close to the peak of the slope. 
This implies that the slope in the interior stationary state is greater than one, which, in turn, implies that small perturbations gradually increase: an initial small increase in the share of agents playing $a$ would eventually take the population to the state in which everyone plays $a$. 

By contrast, for heterogeneous populations in which some agents have small samples and the others have large samples, the slope of the probability of playing $a$ as a function of the share of agents who play $a$ is bimodal, and in many cases the interior stationary state is close to the local minimum between the two peaks. This implies that the slope in the interior equilibrium is less than one, which, in turn, implies that small perturbations vanish, and that the interior stationary state is locally stable.

Our final main result (Theorem \ref{thm:global-mixed}) fully characterizes  environments in which the populations converge
to miscoordination from almost all initial states (rather than being only locally stable as in Theorems \ref{thm2} and \ref{thm3}). This
holds for a relatively narrow set of environments in which the coordination game is asymmetric, the two players have different preferred outcomes, and 
a sufficiently large share of agents in the population have sample size 1.

\paragraph{Extensions and Insights}

Our baseline model analyzes two-player two-action coordination games with a specific class of evolutionary dynamics, namely, sampling dynamics. In Section \ref{Sec:Extensions}, we relax various key assumptions (detailed in Online Appendix \ref{online-appendix}) and demonstrate the robustness of our main results. Specifically, we extend our results to games with more than two actions in Appendix \ref{sec-multiple-actions}, and to games with more than two players in Appendix \ref{sec-multiple-players}. Lastly, Appendix \ref{sec-logit} numerically demonstrates that our qualitative result holds across a broad class of learning dynamics, including the widely used logit dynamics (\citealp{fudenberg1995consistency}).

Taken together, our results show that the conventional wisdom that
miscoordination is unstable is not accurate. Miscoordination can be
stable in heterogeneous populations in which some agents rely on limited data, while other agents have access to comprehensive
data. We discuss the empirical relevance of our results in Section \ref{subsec-empirical}, and their  experimentally testable implications in Section \ref{subsec-experimental}.

The asymptotically stable  interior  states identified in our results  are typically not Nash equilibria. A significant share of agents with small samples choose actions that are not best responses to the true aggregate behavior. This prompts the question of why these agents do not learn to increase their sample sizes, or why they are not eliminated from the population. We propose two explanations. First, many real-life scenarios involve both inexperienced and experienced players. For instance, in the housing market, there are first-time buyers or sellers who interact with seasoned real-estate investors (coordination games in housing markets are discussed in Section \ref{subsec-empirical}). The former group has limited data, and acquiring more reliable data may be costly or challenging, especially given the inherent biases of real estate agents. As a steady stream of first-time participants enters the market, 
the share of agents with small samples remains stable. Another explanation is the ``law of small numbers,'' a common cognitive  bias where individuals mistakenly believe that small samples accurately reflect the larger population (\citealp*{tversky1971belief}). This bias leads players with small samples to underinvest in obtaining more data.

\paragraph{Structure}

Section \ref{sec:symmetric} analyzes symmetric coordination games. In Section \ref{sec:general}, we extend the analysis to general coordination games. In Section \ref{Sec-discussion}, we discuss our results, and in Section \ref{sec:related}
 we survey the related literature. We conclude in Section \ref{sec:Conclusion}.  
The formal proofs are presented in Appendix \ref{appendix-proofs}.  Online Appendix \ref{online-appendix} presents extensions of our model.

\section{Symmetric Coordination Games}\label{sec:symmetric}

\subsection{Baseline Model}
Consider two-player two-action symmetric coordination game in normal form where each player can choose between actions $A = \{a, b\}$. The standard one-parameter payoff matrix for a coordination game is shown on the left side of Table \ref{tab-symmetric}: players receive a low payoff (normalized to 0) if they miscoordinate (i.e., one player chooses $a$ while the opponent chooses $b$). If they coordinate on both choosing action $b$, they receive a high payoff (normalized to 1). If they coordinate on both choosing action $a$, they receive a payoff of $u > 0$.

\begin{table}

\caption{Two-Player Two-Action Symmetric Coordination Game}\label{tab-symmetric}

\centering{}%
\begin{tabular}{c|c|c|c|}
\multicolumn{2}{c}{} & \multicolumn{2}{c}{\textcolor{red}{Opponent}}\tabularnewline
\cline{3-4} \cline{4-4} 
\multicolumn{2}{c|}{} & \textcolor{red}{~~$a$~~} & \textcolor{red}{~~$b$~~}\tabularnewline
\cline{2-4} \cline{3-4} \cline{4-4} 
\multirow{2}{*}{\textcolor{blue}{Player}} & \textcolor{blue}{$a$} & \textcolor{blue}{$u$} & \textcolor{blue}{$0$}\tabularnewline
\cline{2-4} \cline{3-4} \cline{4-4} 
 & \textcolor{blue}{$b$} & \textcolor{blue}{0} & \textcolor{blue}{$1$}\tabularnewline
\cline{2-4} \cline{3-4} \cline{4-4} 
\multicolumn{4}{c}{Standard representation}\tabularnewline
\multicolumn{4}{c}{$\frac{u_{11}-u_{21}}{u_{22}-u_{12}}>0$}\tabularnewline
\end{tabular}~~~~~~~~~~~~~~%
\begin{tabular}{c|c|c|c|}
\multicolumn{2}{c}{} & \multicolumn{2}{c}{\textcolor{red}{Opponent}}\tabularnewline
\cline{3-4} \cline{4-4} 
\multicolumn{2}{c|}{} & \textcolor{red}{~~$a$~~} & \textcolor{red}{~~$b$~~}\tabularnewline
\cline{2-4} \cline{3-4} \cline{4-4} 
\multirow{2}{*}{\textcolor{blue}{Player}} & \textcolor{blue}{$a$} & \textcolor{blue}{$u_{11}$} & \textcolor{blue}{$u_{12}$}\tabularnewline
\cline{2-4} \cline{3-4} \cline{4-4} 
 & \textcolor{blue}{$b$} & \textcolor{blue}{$u_{21}$} & \textcolor{blue}{$u_{22}$}\tabularnewline
\cline{2-4} \cline{3-4} \cline{4-4} 
\multicolumn{4}{c}{Original representation}\tabularnewline
\multicolumn{4}{c}{$u_{11}>u_{21},\,\,u_{22}>u_{12}$}\tabularnewline
\end{tabular}
\end{table}

In Appendix \ref{subsec-general-coord}, we formally show that this standard one-parameter representation captures, without loss of generality, all two-action symmetric coordination games, 
as represented by the four-parameter form on the right side of Table \ref{tab-symmetric}. This is because our dynamics, as defined in Eqs. \eqref{eq:dynamics-sym} and \eqref{eq:action-sampling-dyanmics}, depend only on the differences between a player's payoffs in action profiles where the opponent plays the same action. Consequently, the dynamics are invariant to (1) adding a constant to both payoffs of the row player within the same column, and (2) dividing all of a player's payoffs by a positive constant.
 
Let \textbf{boldfaced} letters denote profiles (vectors of length two); for example, we let $\textbf{a}$ denote the action profile $(a, a)$. We identify each mixed action by the probability it assigns to the first action ($a$), denoted by $p \in [0, 1]$. The degenerate mixed action $1$ (resp., $0$) is identified with the pure action $a$ (resp., $b$). Note that the coordination game admits three Nash equilibria, all of which are symmetric: two pure equilibria, $\textbf{a}$ and $\textbf{b}$, where players coordinate their actions, and a mixed equilibrium $\boldsymbol{p}^{NE} = \left(\frac{1}{1+u}, \frac{1}{1+u}\right)$, where players frequently experience miscoordination.

\subsection{Evolutionary Dynamics}\label{subsec-one-population-dynamic}
Consider a continuum of agents with total mass one, who are randomly matched to play the one-shot coordination game. Aggregate behavior at time $t\in\mathbb{R}^{+}$ is described by
a \emph{state} $p\in\left[0,1\right]$, representing the share of agents playing action $a$ at time $t$. A state $p$
is \emph{interior} (i.e., mixed) if $p\in(0,1).$ 

Agents die at a constant rate of 1, and are replaced by new agents.
The evolutionary process is represented by a function $w:\left[0,1\right]\rightarrow\left[0,1\right]$,
which describes the share of new agents who
play action $a$ as a function of the current state. Thus, the
instantaneous change in the share of agents who
play $a$ is given by the
following dynamics (where $w$ 
 is as defined in Eq. \ref{eq:action-sampling-dyanmics} below):
 \vspace{-7pt}
 \begin{equation}\label{eq:dynamics-sym}
\dot{p}=w(p)-p.
 \vspace{-7pt}
\end{equation}

\paragraph{Sample Sizes}

We allow heterogeneity in sample size used by new agents. Let
$\theta\in\Delta\left(\mathbb{N}_{>0}\right)$ denote the distribution
of sample sizes of new agents. 
For simplicity, we assume that $\theta$
has a finite support. A share $\theta(k)$ of the
new agents have a sample of size $k$.  
Let $\text{supp}(\theta)$
denote the support of $\theta$. If there exists some $k$ for
which $\theta(k)=1,$ then we use $k$ to denote the degenerate
(homogeneous) distribution $\theta\equiv k$.
\begin{defn}
An \emph{environment} is a pair $(u,\theta)$,
where $\left(u\right)$ describes the payoff from coordinating on $a$ and $\theta$ describes
the distribution of sample sizes. 
\end{defn}

\paragraph{Sampling Best-Response Dynamics}

Sampling best-response dynamics (henceforth  \emph{sampling dynamics}; \citealp*{sandholm2001almost,oyama2015sampling}
) fit situations
in which agents do not know the exact distribution of actions in the
population. New agents estimate the aggregate behavior
by sampling opponents' actions. Specifically, each new agent with
sample size $k$ 
samples $k$ randomly
drawn agents from the population and then plays the action
that is the best response to the sample. 
One possible interpretation of this behavior is that a new agent utilizes her own sample to calculate a maximum likelihood estimation for the overall behavior of the population (or utilizes any other unbiased estimation procedure $\grave{\textrm{a}}$ la \citealp{salant2020statistical}).

To simplify notation,
we assume that in the case of a tie, the new agent plays $a$. All of our main
results 
(namely, Theorems \ref{thm1}--\ref{thm4}) remain the same under any tie-breaking rule.
Let $X(k,p)\sim Bin\left(k,p\right)$ denote a random variable
with binomial distribution with parameters $k$ (number of trials)
and $p$ (probability of success in each trial), which is interpreted
as the number of $a$'s in the sample. Observe that the sum of
payoffs from playing action $a$ against the sample is $u\cdot X(k,p)$
and the sum of payoffs from playing action $b$ against the sample
is $k-X(k,p)$.
This implies that action $a$ is a best response to a sample of size
$k$ iff $u\cdot X(k,p)\geq k-X(k,p)\Leftrightarrow X(k,p)\ge\frac{k}{u+1}$.
This, in turn, implies that the sampling dynamics for environment $\left(\boldsymbol{u},\boldsymbol{\theta}\right)$
are given by
\begin{equation}
w(p)=\sum_{k\in\text{supp}(\theta)}\theta(k)\cdot\Pr\left(X\left(k,p\right)\geq\frac{k}{u+1}\right),\label{eq:action-sampling-dyanmics}
\end{equation}
where $w(p)$ represents the probability that a random new agent (with a sample size distributed according to $\theta$) finds action $a$ to be the best response to her sample and thus chooses $a$).

Observe that $\Pr\left(X\left(k,p\right)\geq\frac{k}{u+1}\right)=\sum_{l=\left\lceil \frac{k}{u+1}\right\rceil }^{k}\left(\begin{array}{c}
k\\
l
\end{array}\right)p^{l}\left(1-p\right)^{k-l}$ , which is a polynomial of $p_{j}$ of degree $k$. This implies
that $w(p)$ is a polynomial of degree $\max(\textrm{supp}(\theta_{i}))$.

\subsection{Preliminary Analysis and Examples}
Stationary states (which have been termed \emph{sampling equilibrium} in \citealp{osborne2003sampling}) are states in which new agents play the same distribution of actions as the incumbents. Eq. (\ref{eq:dynamics-sym}) immediately implies that $\dot p>0$ iff $w(p)>p$, that $\dot p<0$ iff $w(p)<p$, and that $p$ is stationary iff $p$ is a fixed point of $w$.  This, in turn, implies the following:\footnote{See Appendix \ref{sub-standard-definitions} for the formal definitions of dynamic stability. 
A state $p$ is \emph{asymptotically stable} if populations starting close to $p$ stay close to $p$ and eventually converge to $p$. A state $p$ is \emph{unstable} if there exist populations starting arbitrarily close to $p$ that move away from $p$.}

\begin{fact}\label{fact-w-stable}
    
\begin{enumerate}
    \item State 0 is asymptotically stable iff $w(p)<p$ in a right neighborhood of 0.
    \item State 1 is asymptotically stable iff $w(p)>p$ in a left neighborhood of 1.
   \item An interior stationary state $p\in(0,1)$ is asymptotically stable iff $w(p)>p$ in a left neighborhood of $p$ and $w(p)<p$ in a right neighborhood of $p$. 
       \item A neighboring stationary state of an asymptotically stable state is unstable.     
    \item The limit $\lim_{t\rightarrow\infty}p\left(t\right)$ exists and is a stationary state for any initial state $p(0).$ 
\end{enumerate}
\end{fact}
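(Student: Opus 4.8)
The plan is to read \eqref{eq:dynamics-sym} as a one-dimensional autonomous ODE $\dot p = f(p)$ with $f(p) := w(p)-p$, and to extract all five claims from the sign structure of $f$. First I would record the standing properties of $f$. Since $w$ is a polynomial, $f$ is a polynomial, hence smooth and in particular Lipschitz on $[0,1]$, so by Picard--Lindel\"of the solution $p(t)$ with any $p(0)\in[0,1]$ exists and is unique. A direct evaluation gives $w(0)=0$ and $w(1)=1$ (the deterministic samples $X(k,0)=0$ and $X(k,1)=k$ make each threshold event impossible or sure), so $f(0)=f(1)=0$; together with uniqueness this shows that $0$ and $1$ are stationary and that $[0,1]$ is forward invariant. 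The key structural observation is the polynomial dichotomy: either $f\equiv 0$ --- in which case every state is stationary, none is asymptotically stable, and all five claims hold vacuously or trivially --- or $f$ is a nonzero polynomial whose zeros in $[0,1]$ are finite in number and isolated. I would assume the latter henceforth.

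The engine of the argument is a monotone-convergence lemma: on any open interval between two consecutive zeros of $f$ the sign of $f$ is constant, so a solution starting there is strictly monotone, and by uniqueness it can neither reach nor cross an adjacent zero in finite time; being monotone and bounded it converges, and by continuity its limit is a zero of $f$. This immediately yields claim 5: an arbitrary $p(0)$ either is a stationary state (then $p(t)\equiv p(0)$) or lies strictly between two consecutive stationary states, and in the latter case the lemma gives a stationary limit.

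For claims 1--3 I would use that, by isolatedness, each punctured one-sided neighborhood of a stationary state carries a constant sign of $f$. For an interior stationary $p^*$, the left and right punctured neighborhoods each have a definite sign; running through the four sign patterns and invoking the monotone-convergence picture shows that the solution converges to $p^*$ from both sides exactly when $f>0$ on the left and $f<0$ on the right, i.e.\ $w(p)>p$ on the left and $w(p)<p$ on the right, and that every other pattern produces trajectories starting arbitrarily close that move away, i.e.\ instability. This is claim 3; claims 1 and 2 are its boundary specializations, where forward invariance means only the single interior-facing neighborhood is relevant (the right neighborhood of $0$, the left neighborhood of $1$). In each case I would check both directions against the $\epsilon$--$\delta$ definitions in Appendix \ref{sub-standard-definitions}: the sign condition gives monotone convergence that is simultaneously Lyapunov-stable (the trajectory stays trapped between its start and $p^*$) and attracting, while its failure exhibits an escaping trajectory.

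Finally, for claim 4, let $p^*$ be asymptotically stable and let $q$ be a neighboring stationary state, meaning no stationary state lies strictly between $q$ and $p^*$. If $q<p^*$, then asymptotic stability forces $f>0$ on the immediate left of $p^*$ by claims 1--3; since $(q,p^*)$ contains no zero of $f$, the sign is constant there, so $f>0$ on the immediate right of $q$ as well, and the monotone-convergence lemma produces trajectories starting just above $q$ that move away toward $p^*$, so $q$ is unstable. The case $q>p^*$ is symmetric. I expect the only genuine obstacle to be bookkeeping rather than depth: carefully matching the one-sided sign conditions to the formal definitions of stability and instability in both directions of each ``iff,'' correctly handling the boundary states so that the unilateral neighborhoods at $0$ and $1$ suffice, and flagging the degenerate case $f\equiv 0$ so that the isolated-zeros argument is invoked only where it is valid.
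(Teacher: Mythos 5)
Your proposal is correct and follows essentially the same route as the paper: the paper presents Fact \ref{fact-w-stable} as an immediate consequence of the sign structure of the one-dimensional dynamics $\dot p = w(p)-p$ (i.e., $\dot p>0$ iff $w(p)>p$, $\dot p<0$ iff $w(p)<p$, stationarity iff $w(p)=p$), together with the polynomial nature of $w$, which is exactly the phase-line argument you spell out. Your write-up simply makes explicit the standard supporting details (Picard--Lindel\"of uniqueness, isolation of the finitely many zeros of $w(p)-p$ in the nondegenerate case, the degenerate case $w(p)\equiv p$, and the monotone-convergence lemma), all of which are handled correctly.
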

Figure \ref{figure-symmetric} illustrates the phase plots of the sampling dynamics and the $w(p)$ curves for four environments, all of which have $u = 1.2$ but differ in their distributions of sample sizes. The top-left panel depicts the case in which all agents 
have sample size 1. In this scenario, $w(p)\equiv p;$ i.e., all states are stationary. By contrast, if at least some agents in the population sample multiple actions, then $w(p)-p$ is a polynomial of positive degree with a finite number of solutions. This implies that $w$ has a finite number of fixed points, and, therefore, the dynamics have a finite number of stationary states.
\begin{fact}\label{fact-fintie-sym}
If $\theta\equiv 1$, then all states are stationary. Otherwise, there exists a finite number of stationary states. 
\end{fact}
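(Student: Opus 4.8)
The plan is to treat the two cases separately, starting with the homogeneous case $\theta\equiv 1$. Here I would compute $w$ directly from Eq.~\eqref{eq:action-sampling-dyanmics}. With sample size $k=1$ the variable $X(1,p)$ equals $1$ with probability $p$ and $0$ with probability $1-p$, while the threshold $\tfrac{1}{u+1}$ lies strictly between $0$ and $1$ because $u>0$. Hence the event $\{X(1,p)\ge \tfrac{1}{u+1}\}$ coincides with $\{X(1,p)=1\}$, giving $w(p)=p$ for every $p$. (Intuitively, an agent who observes a single opponent best-responds by copying the observed action, so the probability she plays $a$ equals the sampled frequency $p$.) Since $w(p)-p\equiv 0$, every $p\in[0,1]$ is a fixed point of $w$, and because a state is stationary iff it is a fixed point of $w$ (as noted after Eq.~\eqref{eq:dynamics-sym}), every state is stationary.

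For the second case I would first observe that $\theta\not\equiv 1$ forces $\max(\mathrm{supp}(\theta))\ge 2$: the support lies in $\mathbb{N}_{>0}$, so if all mass were placed on $1$ we would have $\theta\equiv 1$. Writing $K:=\max(\mathrm{supp}(\theta))\ge 2$, I would invoke the fact, already established in the paragraph following Eq.~\eqref{eq:action-sampling-dyanmics}, that $w$ is a polynomial of degree $K$. Then $w(p)-p$ is a polynomial of degree $K\ge 2$, since subtracting the linear term $p$ cannot cancel a leading term of degree at least $2$; in particular it is a nonzero polynomial, and therefore has at most $K$ real roots. Because the stationary states are exactly the fixed points of $w$, i.e.\ the roots of $w(p)-p$, there are finitely many of them.

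The only point that deserves care — and the sole real obstacle — is justifying that $w$ genuinely attains degree $K$ rather than suffering a cancellation of its top coefficient. I would secure this by checking that each summand $\Pr(X(k,p)\ge m_k)$, with $m_k:=\lceil k/(u+1)\rceil$, has degree exactly $k$ with a nonzero leading coefficient. Using the incomplete-beta representation $\Pr(X(k,p)\ge m)=m\binom{k}{m}\int_0^p t^{m-1}(1-t)^{k-m}\,dt$, the coefficient of $p^k$ works out to $\binom{k-1}{m-1}(-1)^{k-m}$, which is nonzero because $1\le m_k\le k$ (indeed $u>0$ gives $0<\tfrac{k}{u+1}<k$, so $1\le m_k\le k$). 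Since only the single largest sample size $K$ produces a degree-$K$ term, its coefficient $\theta(K)\binom{K-1}{m_K-1}(-1)^{K-m_K}$ cannot be cancelled by any other summand and is nonzero, confirming $\deg w=K$ and completing the argument.
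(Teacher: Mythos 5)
Your proof is correct and follows essentially the same route as the paper: for $\theta\equiv 1$ you get $w(p)\equiv p$ so every state is stationary, and otherwise $w(p)-p$ is a nonzero polynomial of degree $\max(\mathrm{supp}(\theta))\geq 2$, hence has finitely many roots, which are exactly the stationary states. The only place you go beyond the paper's own (informal) argument is in verifying that the leading coefficient $\theta(K)\binom{K-1}{m_K-1}(-1)^{K-m_K}$ of $w$ is nonzero, where the paper simply asserts, in the paragraph following Eq.~\eqref{eq:action-sampling-dyanmics}, that $w$ has degree $\max(\mathrm{supp}(\theta))$; that check is exactly right and closes the one potential gap (cancellation of the top coefficient) in the degree claim.
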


\begin{figure}[t]
\caption{Phase Plots and $w(p)$ Curves for Various Environments with
$u=1.2$}\label{figure-symmetric}

\begin{centering}
\begin{tabular}{cc}
\includegraphics[scale=0.37]{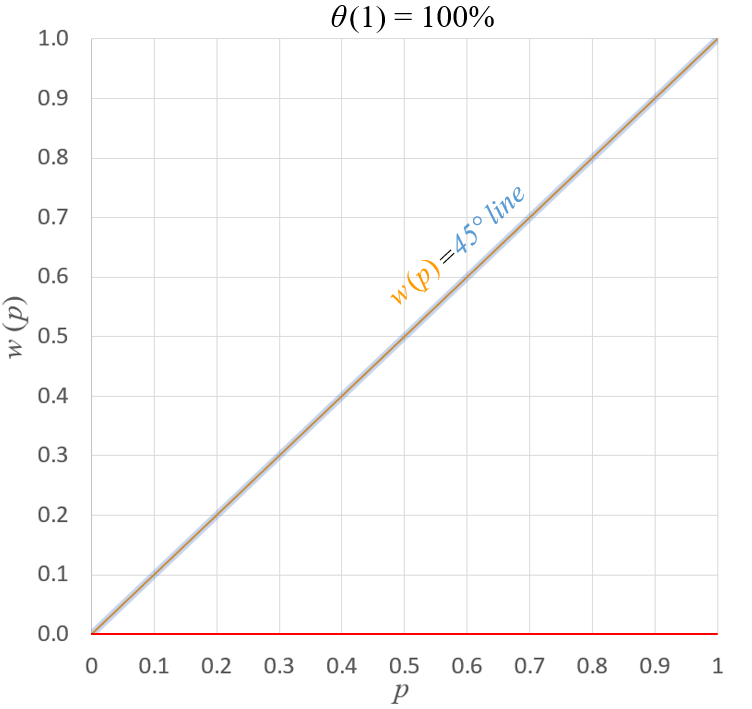} & \includegraphics[scale=0.37]{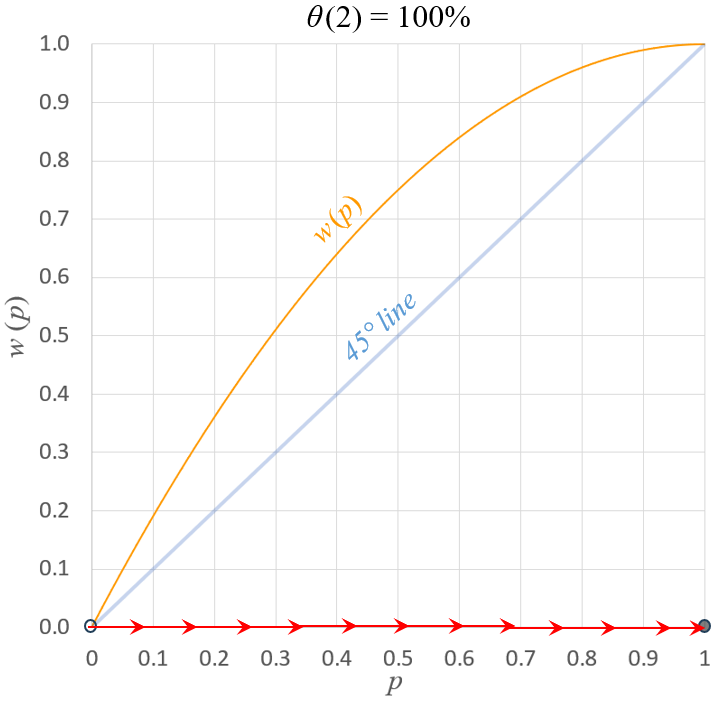}\tabularnewline
\includegraphics[scale=0.37]{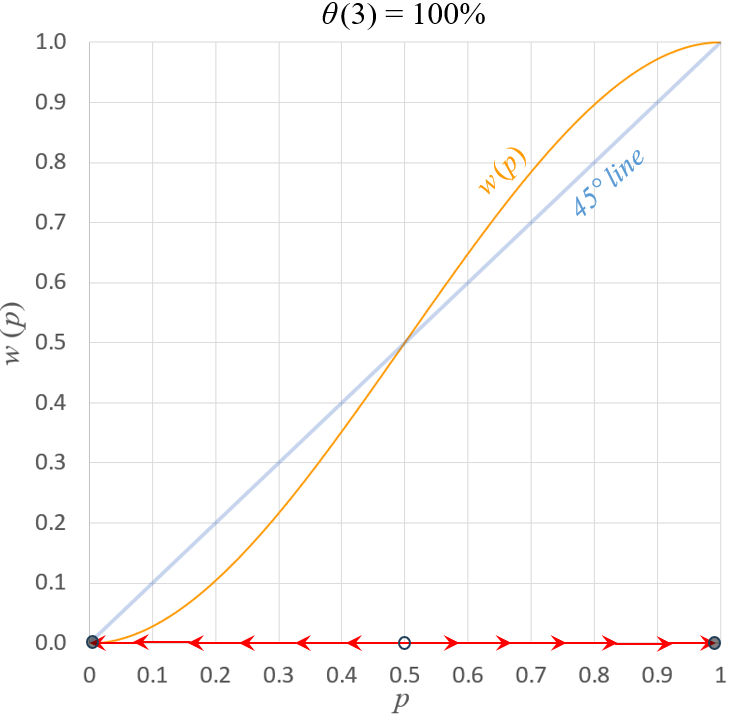} & \includegraphics[scale=0.37]{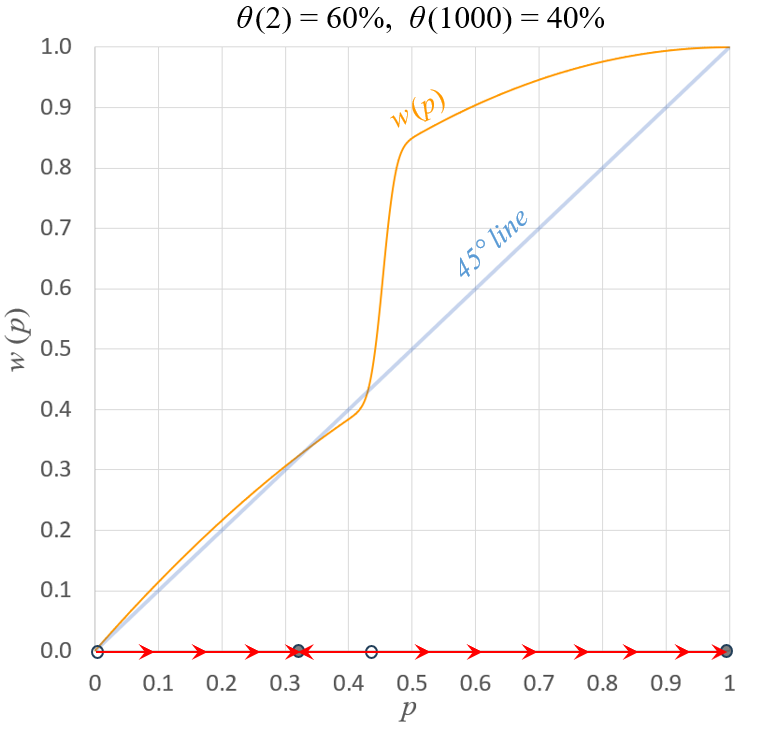}\tabularnewline
\end{tabular}
\par\end{centering}
{\small{}The figure illustrates for each of four environments the one-dimensional phase plot of the sampling dynamics and the $w(p)$ curve relative to the 45$^{\circ}$ line. The intersection points of the $w(p)$ curve and the 45$^{\circ}$ line are the stationary states. A solid (resp., hollow) dot represents an asymptotically stable (resp.,
unstable) stationary state.}{\small\par}
\end{figure}

The top-right panel of Figure \ref{figure-symmetric} illustrates an environment in which all agents sample two actions. In this scenario, only the pure states are stationary, and the population converges from  all initial states to everyone playing  $a$. The intuition is that since $\boldsymbol{a}$ is a risk-dominant equilibrium, a new agent plays $a$  when at least one of the two actions they observe is $a$. This occurs with a probability of $1-(1-p)^2=p(2-p)>p$, which implies that $\dot p>0$ $\forall p>0$. Consequently, the proportion of agents playing $a$ always increases.

The bottom-left panel illustrates the environment in which all agents sample three actions. This environment admits a unique interior stationary state, $p=0.5$, which is unstable; any population beginning to the left (resp., right) of this interior state converges to everyone playing $b$ (resp., $a$). The intuition is that the best response to a sample of three actions is the modal action in the sample. One can show that the probability that $a$ is the modal action is less than $p$ iff $p<0.5,$ which implies that $p=0.5$ is the unique interior stationary state and that it is unstable.

\subsection{Homogeneity and Unstable Miscoordination}
Our first main result shows that the scenarios illustrated in Figure \ref{figure-symmetric}  are the only possible cases for  homogeneous populations. Specifically,  environments in which all agents have the same sample size admit at most one interior stationary state, which must be unstable. This implies that almost all initial states converge
to one of the pure equilibria. 

\begin{thm}
\label{thm:global-stability-pure-sym}\label{thm1}Assume that $\theta\equiv k>1$. There exists at most one interior
stationary state, and this state (if it exists) is unstable.
\end{thm}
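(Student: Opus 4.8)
The plan is to convert the global question about fixed points of $w$ into an elementary sign-counting argument by exploiting a closed form for $w'$. Write $m := \lceil k/(u+1)\rceil$, so that $w(p)=\Pr\!\left(X(k,p)\geq m\right)=\sum_{l=m}^{k}\binom{k}{l}p^{l}(1-p)^{k-l}$, with $m\in\{1,\dots,k\}$ since $u>0$ and $k>1$. Differentiating term by term and using the identities $\binom{k}{l}l=k\binom{k-1}{l-1}$ and $\binom{k}{l}(k-l)=k\binom{k-1}{l}$, the sum telescopes and all but one term cancels, giving
\begin{equation*}
w'(p)=k\binom{k-1}{m-1}\,p^{m-1}(1-p)^{k-m}=:c\,p^{m-1}(1-p)^{k-m},\qquad c>0 .
\end{equation*}
This is the standard identity expressing a binomial tail through an incomplete Beta integral; carrying out the telescoping carefully is the one computational point, but it is routine.

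Two structural consequences follow at once. First, $w'>0$ on $(0,1)$, and since $\log\!\left(p^{m-1}(1-p)^{k-m}\right)$ has strictly negative second derivative, the factor $p^{m-1}(1-p)^{k-m}$ is strictly log-concave, hence \emph{unimodal}: $w'$ strictly increases and then strictly decreases on $(0,1)$. Therefore $g'(p):=w'(p)-1$ is unimodal as well, so each of its level sets is small; in particular $g'$ has at most two zeros in $(0,1)$. Second, $m\geq 1$ gives $w(0)=0$ and $m\leq k$ gives $w(1)=1$, so the auxiliary function $g(p):=w(p)-p=\dot{p}$ vanishes at both endpoints.

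For the first conclusion, note that interior stationary states are exactly the interior zeros of $g$. If there were two of them, $p_1<p_2$, then $g$ would vanish at the four points $0<p_1<p_2<1$, and Rolle's theorem would force $g'$ to have at least three zeros in $(0,1)$, contradicting the at-most-two bound obtained from unimodality. Hence there is at most one interior stationary state.

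For the second conclusion, suppose the interior stationary state $p^{*}$ exists, so $g(0)=g(p^{*})=g(1)=0$. Applying Rolle on $[0,p^{*}]$ and on $[p^{*},1]$ produces zeros $q_1\in(0,p^{*})$ and $q_2\in(p^{*},1)$ of $g'$; since $g'$ has at most two zeros, these are precisely its zeros and $p^{*}$ is not among them. Because $g'$ is unimodal with two roots, it is strictly positive between them, so $g'(p^{*})>0$. Thus $g$ crosses zero from below at $p^{*}$: $\dot{p}=g(p)<0$ just to the left and $\dot{p}>0$ just to the right, so trajectories move away from $p^{*}$, which by Fact~\ref{fact-w-stable}(3) rules out asymptotic stability and exhibits $p^{*}$ as unstable. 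The main obstacle is not any individual step but pinning down the derivative identity exactly, since the whole argument rests on the unimodality of $w'$ that it delivers; once that is secured, the Rolle and sign-counting reasoning is short.
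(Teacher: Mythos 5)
Your proof is correct, and it takes a genuinely different route from the paper's. The paper handles uniqueness by citing \citet[Theorem 1]{green1983fixed} as a black box (Fact \ref{fact-green}), and handles instability indirectly: it shows that one of the pure states ($0$ if $u<1$, $1$ if $u>1$) is asymptotically stable via an $O(\epsilon^2)$ perturbation argument, and then invokes Fact \ref{fact-w-stable}(4) that a neighbor of an asymptotically stable state is unstable. You instead make everything local and self-contained: the closed form $w'(p)=k\binom{k-1}{m-1}p^{m-1}(1-p)^{k-m}$, strict log-concavity of $w'$, and Rolle's theorem give both the at-most-two bound on zeros of $w'-1$ (which re-proves Green's result) and the key sign fact $w'(p^*)>1$, so that $w-p$ crosses zero from below at $p^*$ and trajectories escape on both sides. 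Your approach avoids the case split on $u\lessgtr 1$ and pins down the mechanism of instability (slope above one at the interior fixed point), which is exactly the intuition the paper describes informally in the introduction; the paper's approach is shorter given the citation and additionally records the asymptotic stability of the pure states. Notably, your Rolle-based technique is the same one the paper deploys for the harder two-population case (Proposition \ref{prop-binomial}), so you have in effect specialized that argument to one population. One small imprecision: your statement that $w'$ ``strictly increases and then strictly decreases'' fails in the edge cases $m=1$ and $m=k$, where $w'$ is monotone; but the strict log-concavity you actually use for the level-set bound and for positivity of $w'-1$ between its two roots covers these cases uniformly (and in those cases $w$ is strictly concave or convex, so no interior fixed point exists and the instability claim is vacuous), so the argument stands.
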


\begin{proof}
Define $F_{m}^{k}\left(p\right)\equiv\Pr\left(X\left(k,p\right)\geq m\right)$
as the probability of having at least $m$ successes in $k$ trials
when the probability of success in each trial is $p$. Observe that
$F_{m}^{k}\left(0\right)=0$, $F_{m}^{k}\left(1\right)=1,$ and $\left(F_{m}^{k}\right)'>0$.
It is known that $F_{m}^{k}$ has at most one interior fixed point:

\begin{fact}\label{fact-green}[{\citealp[Theorem 1]{green1983fixed}}]
Fix arbitrary integers $0<m\leq k$. Then there is at most one $p\in\left(0,1\right)$
such that $F_{m}^{k}\left(p\right)=p.$
\end{fact}

The fact that $\theta\equiv k>1$ implies that $w(p)=F_m^k\left(p\right)$
for some $1\leq m\leq k$. 
This implies that any stationary
state ${p}^{\ast}$ must satisfy $F_{m}^{k}\left(p^{\ast}\right)=p^{\ast}$.
Fact  \ref{fact-green} implies that this holds for at most
one interior state $\hat{p}$. Therefore, the stationary state $\hat{p}$ (if it exists) is a neighbor of both pure stationary states $0$ and $1$. 
Observe that if $u<1$ (resp., $u>1$), then an agent plays $a$ (resp., $b$) only when observing at least two $a$'s (resp., $b$'s) in her sample. This implies that if $p=\epsilon$ (resp., $p=1-\epsilon)$, where $\epsilon<<1$, then the probability that the agent plays action $a$ (resp., $b$) is $O(\epsilon^2),$ which implies that the population converges to everyone playing $b$ (resp., $a$) from any sufficiently close state. This implies that state 0 (resp., state 1) is asymptotically stable, which, in turn, implies that the neighboring stationary state $\hat{p}$ is unstable.
\end{proof}

\subsection{Heterogeneity and Stable Miscoordination}
Our second main result shows that introducing heterogeneity in sample size often yields  stable miscoordination. Specifically, Theorem \ref{thm:stable-miscor-sym} shows that for any  $u$ 
and for all distributions in which the sample sizes are at most $u+1$, there exists an interval of $\alpha$-values such that if the sample sizes of a share $\alpha$ of agents are significantly increased (regardless of their initial sample size), the resulting environment admits an asymptotically stable interior state characterized by miscoordination.

\begin{thm}
\label{thm2}\label{thm:stable-miscor-sym}

Fix any $(u,\theta)$ satisfying $1<\max(\emph{supp}(\theta))<\max\left(\frac{1}{u}+1, u+1\right)$. There is  a proportion $\alpha\in(0,1)$ such that substantially increasing the sample sizes of a share $\alpha$ of the agents induces an environment with an asymptotically stable interior state.
\end{thm}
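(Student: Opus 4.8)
The plan is to first reduce to the case $u>1$. The environment is invariant under relabeling the two actions, which replaces $u$ by $1/u$ and the state $p$ by $1-p$; since the hypothesis $1<\max(\mathrm{supp}(\theta))<\max(\tfrac1u+1,\,u+1)$ is symmetric under this exchange and asymptotic stability is preserved by the coordinate change $p\mapsto 1-p$, I may assume $u>1$, so that the bound reads $\max(\mathrm{supp}(\theta))<u+1$. Then every agent in the original population has best-response threshold $\lceil k/(u+1)\rceil=1$, i.e.\ she plays $a$ whenever her sample contains at least one $a$. Hence the contribution of the unmodified agents to $w$ is $s(p):=\sum_{k}\theta(k)\big(1-(1-p)^{k}\big)$, which is strictly increasing and strictly concave, with $s(0)=0$, $s(1)=1$, derivative $\bar k:=\sum_{k}k\,\theta(k)>1$ at $0$, and $s(p)>p$ on $(0,1)$ (the strictness coming from $\max(\mathrm{supp}(\theta))>1$).

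Next I would model ``substantially increasing'' the samples of a share $\alpha$ as raising them to a common large size $K$. Their probability of choosing $a$ is $\Phi_K(p):=\Pr\!\big(X(K,p)\ge K/(u+1)\big)$, which by a large-deviation bound tends to $0$ uniformly on every interval $[0,\,p^{NE}-\varepsilon]$ as $K\to\infty$, where $p^{NE}=1/(1+u)$. Thus the modified response $w_K=(1-\alpha)s+\alpha\Phi_K$ converges uniformly on compact subsets of $[0,p^{NE})$ to the limit $(1-\alpha)s$.

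I would then pin down $\alpha$. Set $g(p):=1-p/s(p)$, so that $(1-\alpha)s(p)>p$ exactly when $\alpha<g(p)$. Strict concavity of $s$ gives $s(p^{NE})<\bar k\,p^{NE}$, equivalently $g(p^{NE})<g(0^{+})=1-1/\bar k$, so the interval $\big(g(p^{NE}),\,g(0^{+})\big)\subset(0,1)$ is nonempty; fix any $\alpha$ in it. For this $\alpha$ the limiting map satisfies $(1-\alpha)s(p)-p>0$ for small $p>0$ (because $(1-\alpha)\bar k>1$) and $(1-\alpha)s(p^{NE})-p^{NE}<0$ (because $\alpha>g(p^{NE})$). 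I then choose $p_{1}<p_{2}$ in $(0,p^{NE})$ witnessing these two signs.

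Finally I would transfer the conclusion to finite $K$. By the uniform convergence on $[p_{1},p_{2}]$, we get $w_K(p_{1})>p_{1}$ and $w_K(p_{2})<p_{2}$ for all large $K$. Since $w_K$ is a polynomial it has finitely many fixed points, so $p^{*}_{K}:=\sup\{p\in[p_{1},p_{2}]:w_K(p)\ge p\}$ is a fixed point in $(p_{1},p_{2})\subset(0,1)$ with $w_K(p)>p$ in a left neighborhood and $w_K(p)<p$ in a right neighborhood; by Fact~\ref{fact-w-stable}(3) it is an asymptotically stable interior state, proving the claim. The main obstacle is the discontinuity of the $K\to\infty$ limit: the stable state must be kept strictly below $p^{NE}$, where the large-sample term is negligible and the convergence is uniform, and one must invoke that $w_K$ is a polynomial (hence has finitely many fixed points) to upgrade a mere sign change into a genuine stable crossing. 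The other point requiring care is the nonemptiness of the admissible interval for $\alpha$, which rests entirely on the strict concavity of $s$.
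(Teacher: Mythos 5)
Your proposal follows essentially the same route as the paper's proof (the sketch following Theorem \ref{thm:stable-miscor-sym} and the formal argument for Theorem \ref{thm:locally-stable_interior_if_pure_is_unstable} in Appendix \ref{Proof-Thm-2'}): reduce to $u\geq 1$; note that every original agent, having sample size below $u+1$, plays $a$ iff her sample contains at least one $a$, so the unmodified agents contribute the strictly concave function $s(p)$ lying above the diagonal; note that agents with very large samples contribute essentially nothing on $[0,p^{NE}-\varepsilon]$; choose $\alpha$ in an open interval so that $(1-\alpha)s$ is above the diagonal near $0$ but below it before $p^{NE}$; and turn the sign change into a stable crossing via finiteness of fixed points and Fact \ref{fact-w-stable}(3). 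Your differences are presentational and, if anything, cleaner: you argue directly in the one-population setting rather than through Theorem \ref{thm:locally-stable_interior_if_pure_is_unstable} and Proposition \ref{prop:stationary-coincide}, and you get the admissible interval for $\alpha$ from global strict concavity of $s$ (the inequality $s(p^{NE})<\bar k\,p^{NE}$) instead of the paper's second-order Taylor expansion near $0$ with an explicitly small $\hat p$.

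One step, as written, can fail and needs a small repair. The point $p^{*}_{K}=\sup\{p\in[p_{1},p_{2}]:w_K(p)\ge p\}$ is indeed a fixed point with $w_K<\mathrm{id}$ on a right neighborhood, but it need not satisfy $w_K>\mathrm{id}$ on a left neighborhood: if $w_K-\mathrm{id}$ has a zero in $(p_1,p_2)$ at which it touches the axis from below (negative on both punctured sides), that zero is the supremum of your set, and such a point is only semi-stable, hence not asymptotically stable. The fix uses exactly the ingredient you already invoke (finitely many zeros of the polynomial $w_K-\mathrm{id}$): either replace the weak inequality by a strict one, $q=\sup\{p\in[p_{1},p_{2}]:w_K(p)> p\}$, in which case points with $w_K>\mathrm{id}$ accumulate at $q$ from the left and finiteness of zeros forces $w_K>\mathrm{id}$ on a left punctured neighborhood of $q$ and $w_K<\mathrm{id}$ on a right one; or simply observe that the sign of $w_K-\mathrm{id}$ on the finitely many intervals between consecutive zeros must switch from $+$ at $p_1$ to $-$ at $p_2$ across some zero, and that zero is asymptotically stable by Fact \ref{fact-w-stable}(3). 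With this correction your argument is complete.
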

Theorem \ref{thm:stable-miscor-sym} is implied by Theorem \ref{thm:locally-stable_interior_if_pure_is_unstable} and Proposition \ref{prop:stationary-coincide}. In what follows, we sketch the proof (which is illustrated in the bottom-right panel of Figure \ref{figure-symmetric}).
\begin{proof}[Sketch of proof] 
Assume w.l.o.g.  that $u\geq1$. Observe that a new agent with a sample size smaller than $u+1$ plays $a$ iff she observes $a$ in her sample. This implies that $w(\epsilon)>\epsilon$ and $\frac{w(p)}{p}$ is decreasing in $p$. Further, note that an agent with a very large sample almost never plays $a$ as long as $p<p^{NE}.$ 
Thus, there is an interval of $\alpha$'s such that significantly increasing the sample sizes of a share $\alpha$  of the agents induces an environment in which the share
of new agents playing  $a$ is (1) above $\epsilon$
in state $\epsilon$, and (2) below
$\hat{p}$ in some state $\hat{p}$. This implies that there is a stable interior
state between ${\epsilon}$ and ${\hat{p}}$.  
\end{proof}

Theorem \ref{thm:stable-miscor-sym} immediately implies the following corollary:
\begin{cor}
For any $u\neq 1$, there exists a distribution of sample sizes that admits an asymptotically stable interior state with miscoordination.   
\end{cor}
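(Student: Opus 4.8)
The plan is to obtain the corollary as a direct specialization of Theorem \ref{thm:stable-miscor-sym}: it suffices to exhibit, for each fixed $u \neq 1$, a single \emph{base} distribution $\theta$ satisfying the hypothesis $1 < \max(\text{supp}(\theta)) < \max\left(\frac{1}{u}+1,\, u+1\right)$, and then feed this $(u,\theta)$ into the theorem. The theorem then produces a proportion $\alpha$ and a modified distribution (obtained by substantially increasing the sample sizes of a share $\alpha$ of the agents) whose induced environment admits an asymptotically stable interior state. Taking that modified distribution as the witness required by the corollary completes the argument, once we observe that any interior state $p \in (0,1)$ necessarily exhibits miscoordination: when agents are randomly matched, a positive fraction $2p(1-p)>0$ of the matched pairs consists of one agent playing $a$ and one playing $b$.

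First I would choose the degenerate base distribution $\theta \equiv 2$, so that every agent in the base environment samples exactly two opponents; then $\max(\text{supp}(\theta)) = 2$, and the hypothesis of Theorem \ref{thm:stable-miscor-sym} collapses to the single inequality $2 < \max\left(\frac{1}{u}+1,\, u+1\right)$. I would verify this by a short case split. If $u > 1$, then $u + 1 > 2$, so the right-hand side is at least $u+1 > 2$. If $u < 1$, then $\frac{1}{u} > 1$, so $\frac{1}{u} + 1 > 2$ and again the right-hand side exceeds $2$. Hence $\theta \equiv 2$ satisfies the hypothesis for every $u \neq 1$, and Theorem \ref{thm:stable-miscor-sym} applies, yielding the desired distribution.

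The argument has essentially no analytic obstacle, since all of the real work is carried by Theorem \ref{thm:stable-miscor-sym}; the only point requiring care is the boundary case $u = 1$. At $u = 1$ one has $\max\left(\frac{1}{u}+1,\, u+1\right) = 2$, so the strict inequality $2 < 2$ fails, and indeed no integer sample size $k$ can satisfy $1 < k < 2$. This is precisely why the corollary excludes $u = 1$: the base environment must leave room, between a smallest non-trivial sample size and the threshold $\max\left(\frac{1}{u}+1,\, u+1\right)$, for the theorem's construction, and such room exists exactly when $u \neq 1$. I would therefore present the corollary as a one-line deduction, emphasizing that the single base choice $\theta \equiv 2$ works uniformly across all admissible values of $u$.
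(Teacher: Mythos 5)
Your proposal is correct and takes essentially the same route as the paper, which derives the corollary as an immediate consequence of Theorem \ref{thm:stable-miscor-sym}; your contribution is simply to make the implicit witness explicit by choosing the base distribution $\theta \equiv 2$ and checking that $1 < 2 < \max\left(\frac{1}{u}+1,\, u+1\right)$ holds precisely when $u \neq 1$. The case split and the observation that any interior state $p \in (0,1)$ yields miscoordination probability $2p(1-p) > 0$ are both sound, so nothing is missing.
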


\subsection{Comparison with \cite{oyama2015sampling}}
In the case of a homogenous population with $\theta \equiv k > 1$ (as analyzed in Theorem \ref{thm1}), if $k\leq \left\lceil u\right\rceil$, then $\Pr\left(X\left(k,p\right)\geq\frac{k}{u+1}\right)$
is concave in $p$ and lacks an interior fixed point, implying that $p = 1$ is globally stable (see the top-right panel
of Figure \ref{figure-symmetric}). Conversely, if $k> \left\lceil u\right\rceil$, then $\Pr\left(X\left(k,p\right)\geq\frac{k}{u+1}\right)$ is initially convex, later becoming concave, and has exactly one interior fixed
point which is unstable (see the bottom-left panel of Figure \ref{figure-symmetric}).

In the general case, $w(p)$ is a convex combination of these scenarios, weighted by coefficients
$\theta(k)$. When $\theta(\cdot)$ places large mass on $k> \left\lceil u\right\rceil$, the latter convexity-concavity effect dominates, making the dynamics resemble a standard best-response dynamic with a single interior fixed point which is unstable.
Conversely, when $\theta(\cdot)$ assigns large mass on $k\leq \left\lceil u\right\rceil$, the concavity effect prevails, rendering $p = 1$ globally stable. A sufficient condition for this scenario, given by \citeauthor{oyama2015sampling}'s (\citeyear{oyama2015sampling}) Theorem, can be restated in our setup as follows (assuming $u$ is not an integer):
\[
\sum_{k=1}^{\left\lceil u\right\rceil }\theta\left(k\right)\left(1-\left(\frac{\left\lceil u\right\rceil -1}{\left\lceil u\right\rceil }\right)^{k}\right)>\frac{1}{\left\lceil u\right\rceil}.
\] 

Our analysis, particularly Theorem \ref{thm2}, focuses on cases where this condition is unmet. Specifically, we examine scenarios where the mass on  $k\leq \left\lceil u\right\rceil$ is intermediate,  allowing both the concavity and convexity-concavity effects to influence the dynamics, resulting in multiple interior fixed points (see the bottom-right panel of Figure \ref{figure-symmetric}).

For concreteness, consider the case analyzed in Observation 1 in \citet[p. 257]{oyama2015sampling}, where the population includes two groups: a share $\theta(\left\lceil u\right\rceil)$ of agents with sample size $\left\lceil u\right\rceil$ and the remaining agents with very large samples. \citeauthor{oyama2015sampling} show that global convergence to everyone playing $a$ occurs if $\theta(\left\lceil u\right\rceil)$ exceeds a minimal threshold,
 which equals $\frac{2}{3}$ for $1<u\leq 2.$ The minimal threshold decreases with $u$, and approximately equals 
 $\frac{1.58}{\left\lceil u\right\rceil}$ for large $u$. Applying our proof's method in this scenario shows that for any $u$, there is an open interval starting at $\frac{1}{\left\lceil u\right\rceil},$ such that if $\theta(\left\lceil u\right\rceil)$ is in this interval, then there is an asymptotically stable interior state. For example, when $u=1.5$, there is an asymptotically stable interior state for any $\theta(2)$ in the interval $(0.5,0.6)$.

\section{General Coordination Games}\label{sec:general}
In this section, we relax the assumption of symmetry and analyze general two-player, two-action coordination games. 
We first extend Theorems \ref{thm1} and \ref{thm2} to this setup and then prove two new results for asymmetric coordination games where players disagree on the preferred outcome. Further generalization  to games with more than two actions is presented in Appendix \ref{sec-multiple-actions}, and to those with more than two players in Appendix \ref{sec-multiple-players}. 
\subsection{Model}
We adapt our definitions to allow asymmetry between the two players. Let $i\in\left\{ 1,2\right\} $ denote
one of the players (``she''), and $j$ her opponent (``he''). 
For each $i$, let $A_{i}=\left\{ a_{i},b_{i}\right\}$ represent player $i$'s actions.  The standard two-parameter payoff matrix for a coordination
game is in Table \ref{tab:standard-coordination-game-gen}: players receive 0 if they miscoordinate, 1 if they coordinate on both playing $\textbf{b}\equiv\left(b_{1},b_{2}\right)$, and 
 $u_i>0$ if they coordinate on both playing $\textbf{a}\equiv\left(a_{1},a_{2}\right)$. 
By relabeling the actions, we can assume w.l.o.g. that action profile $\textbf{a}$ is Player 1's weakly preferred outcome, i.e., $u_1\geq 1.$
We define the coordination game as (1)  \emph{symmetric} if $u_{1}=u_{2}$, 
and (2) \emph{antisymmetric} if $u_{1}=\frac{1}{u_{2}}.$ Antisymmetric coordination games can be interpreted as \emph{battle of the sexes} games, where
Player 1's preference for \textbf{a} over \textbf{b} is as strong as Player 2's preference for $\textbf{b}$ over $\textbf{a}$.

\begin{table}
\caption{\label{tab:standard-coordination-game-gen}Standard Representation
of a Two-Action Coordination Game
(\textcolor{blue}{$u_1\geq1$, }\textcolor{red}{$u_2$}~>~0)}

\centering{}%
\begin{tabular}{|c|c|c|}
\hline 
 & \textcolor{red}{\emph{$a_{2}$}} & \textcolor{red}{\emph{$b_{2}$}}\tabularnewline
\hline 
\textcolor{blue}{\emph{$~~a_{1}~~$}} & \emph{\Large{}$\,\,{}_{\underset{\,}{{\color{blue}u_{1}}}}{\color{red}\,^{\overset{\,}{u_{2}}}}$~~} & \emph{\Large{}${\color{blue}_{\underset{\,}{0}}}\,\,^{{\color{red}\overset{\,}{0}}}$}\tabularnewline
\hline 
\textcolor{blue}{\emph{$~~b_{1}~~$}} & \emph{\Large{}$_{\underset{\,}{{\color{blue}0}}}\,\,^{\overset{\,}{{\color{red}0}}}$} & \emph{\Large{}$_{{\color{blue}1}}\,\,\,\,^{{\color{red}1}}$}\tabularnewline
\hline 
\end{tabular}
\end{table}

We identify each mixed action by the probability assigned to the first action ($a_{i}$), and denote it by $p_{i}\in\left[0,1\right]$. A state $\textbf{p}$ is \emph{interior} if $p_1,p_2\in(0,1)$. The coordination game admits three Nash equilibria: 
two pure 
equilibria, $\textbf{a}$ and $\textbf{b},$ and an interior (mixed) equilibrium  $\boldsymbol{p}^{NE}=\left(\frac{1}{1+u_{2}},\frac{1}{1+u_{1}}\right)$ with miscoordination. Appendix \ref{subsec-general-coord} shows that this two-parameter representation captures all two-action games with two strict equilibria 
(and, in particular, hawk--dove games).

 \subsection{Evolutionary Dynamics}
In general games, agents are assigned one of two roles: Player 1 or Player 2. We model this interaction using a two-population dynamic, with two distinct unit-mass populations where agents from Population 1 are randomly matched with agents from Population 2.  Aggregate behavior at time $t\in\mathbb{R}^{+}$ is described by
a \emph{state} $\mathbf{p}\left(t\right)\equiv\left(p_{1}\left(t\right),p_{2}\left(t\right)\right)\in\left[0,1\right]^{2}$,
where $p_{i}\left(t\right)$
is the share of agents playing action $a_{i}$ in population $i$ at time $t$. A state $\mathbf{p}$
is \emph{interior} (i.e., mixed) if $p_{1},p_{2}\in(0,1).$ 

As in the baseline model, agents die at a constant rate of 1, and are replaced by new agents. Let $\theta_{i}\in\Delta\left(\mathbb{N}_{>0}\right)$ denote the distribution
of sample sizes of new agents in population $i$, with $\theta_{i}$
having finite support. We redefine an \emph{environment} to be a pair
$\left(\boldsymbol{u},\boldsymbol{\theta}\right)=\left(\left(u_{1},u_{2}\right),\left(\theta_{1},\theta_{2}\right)\right)$,
where $\left(u_{1},u_{2}\right)$ are the payoffs, and $\left(\theta_{1},\theta_{2}\right)$ are
the distributions of sample sizes. An environment is symmetric if $\theta_1=\theta_2$ and $u_1=u_2$.

Each new agent of population $i$ with sample size $k$ samples $k$ actions from the other population and plays the best-response action to the  sample.  Using similar calculations as in Section \ref{subsec-one-population-dynamic}, we derive the sampling dynamics for the environment $\left(\boldsymbol{u},\boldsymbol{\theta}\right)$, where  the random variable $X(k,p_{j})\sim Bin\left(k,p_{j}\right)$ denotes the number of $a_{j}$'s in the sample:
\begin{equation}
w_{i}\left(p_{j}\right)=\sum_{k\in\text{supp}(\theta_{i})}\theta_{i}(k)\cdot\Pr\left(X\left(k,p_{j}\right)\geq\frac{k}{u_{i}+1}\right),\,\,\,\,\,\,\,\,\,\,\,\, \dot{p}_{i}=w_{i}(p_j)-p_{i}.\label{eq:action-sampling-dyanmics-g}
\end{equation}

Observe that  $w_{i}\left(p_{j}\right)$ is a strictly increasing polynomial of finite degree
$\max(\textrm{supp}(\theta_{i}))$, and that $w_i(0)=0$ and $w_i(1)=1$. This implies
that the inverse function $w_{i}^{-1}:[0,1]\rightarrow[0,1]$ exists, that it
is continuously differentiable, and that $w_{i}^{-1}(0)=0$ and $w_{i}^{-1}(1)=1.$ 

\subsection{Preliminary Analysis and Examples}
 Eq. (\ref{eq:action-sampling-dyanmics-g}) immediately implies that $\dot p_i>0$ iff $w_i(p_j)>p_i$, $\dot p_i<0$ iff $w_i(p_j)<p_i$, and that $\boldsymbol{p}$ is stationary iff $p_i=w_i(p_j)$ for each player $i$ (which implies that $p_i=w_i(w_j(p_i))$ for each player $i$).  This, in turn, implies:
 \begin{prop}[Adaptation of Fact \ref{fact-w-stable}]\label{prop-w-stable}~\\ 
 \begin{enumerate}
  \vspace{-28px}
    \item State \textbf{0} is asymptotically stable iff $w_2(p_1)<w_1^{-1}(p_1)$ in a right neighborhood of 0.
    \vspace{-3px}
    \item State \textbf{1} is asymptotically stable iff $w_2(p_1)>w_1^{-1}(p_1)$ in a left neighborhood of 1.
    \vspace{-3px}
   \item An interior stationary state $\textbf{p}$ is asymptotically stable iff $w_2(p_1)>w_1^{-1}(p_1)$ in a left neighborhood of $p_1$ and  $w_2(p_1)<w_1^{-1}(p_1)$  in a right neighborhood of $p_1$. 
   \vspace{-3px}
    \item A neighboring stationary state of an asymptotically stable state is unstable.\footnote{Stationary states $\underline{\boldsymbol{p}}, \boldsymbol{\overline{p}}$ are neighbors if no stationary state $\hat{\boldsymbol{p}}$ exists with $\underline{p}_{i}<\hat{p}_{i}<\overline{p}_{i}$.} 
    \vspace{-3px}
    \item $\lim_{t\rightarrow\infty}\mathbf{p}\left(t\right)$
    exists and is a stationary state for any $\textbf{p}(0).$ 
    \vspace{10px}

\end{enumerate}
 \end{prop}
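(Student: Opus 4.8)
The plan is to exploit that these two-population dynamics form a \emph{cooperative} (monotone) planar system and to reduce the whole stability analysis to a one-dimensional sign condition, exactly paralleling Fact~\ref{fact-w-stable}. Writing $f_1(\mathbf{p})=w_1(p_2)-p_1$ and $f_2(\mathbf{p})=w_2(p_1)-p_2$, the off-diagonal partial derivatives $\partial f_1/\partial p_2=w_1'(p_2)$ and $\partial f_2/\partial p_1=w_2'(p_1)$ are nonnegative because each $w_i$ is increasing; hence the flow is order-preserving for the coordinatewise order on $[0,1]^2$. The two nullclines are the increasing graphs $N_1=\{p_2=w_1^{-1}(p_1)\}$ (where $\dot p_1=0$) and $N_2=\{p_2=w_2(p_1)\}$ (where $\dot p_2=0$), both running from $(0,0)$ to $(1,1)$; the stationary states are exactly their intersections, which correspond to the zeros of
\[
g(p_1):=w_2(p_1)-w_1^{-1}(p_1).
\]
The sign conditions in parts~1--3 are precisely statements about how $g$ changes sign at the relevant point, so the proposition asserts that local stability is governed by the local sign pattern of $g$.

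For the stability (``if'') directions I would build arbitrarily small positively invariant rectangles around the state. A routine check of the vector field on the four edges shows that $R=[a_1,b_1]\times[a_2,b_2]$ is positively invariant iff one can choose $a_2\in[w_1^{-1}(a_1),w_2(a_1)]$ and $b_2\in[w_2(b_1),w_1^{-1}(b_1)]$, which is possible precisely when $g(a_1)\geq 0\geq g(b_1)$. Under the hypothesis of part~3 ($g>0$ on the left and $g<0$ on the right of $p_1^{\ast}$) I take $a_1\uparrow p_1^{\ast}$ and $b_1\downarrow p_1^{\ast}$ to obtain a shrinking family of invariant neighborhoods of $\mathbf{p}^{\ast}$, giving Lyapunov stability; since the sign change makes $p_1^{\ast}$ an isolated zero of $g$, $\mathbf{p}^{\ast}$ is the unique equilibrium in a small such $R$, and as monotone planar flows admit no periodic orbits, Poincar\'e--Bendixson forces every trajectory in $R$ to converge to $\mathbf{p}^{\ast}$. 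The same rectangles anchored at a corner handle the pure states: for $\mathbf{0}$ one uses $[0,b_1]\times[0,b_2]$ (needing only $g(b_1)\leq 0$, i.e.\ $g<0$ just to the right of $0$), and symmetrically for $\mathbf{1}$, which reproduces parts~1--2.

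For the ``only if'' directions, and for part~4, I would run the construction in reverse. If the sign pattern at a stationary state is not ($+$ on the left, $-$ on the right)---say $g<0$ immediately to the left of $p_1^{\ast}$---then for each such $p_1$ the lens $\{\,w_2(p_1)<p_2<w_1^{-1}(p_1)\,\}$ is nonempty, lies below $N_1$ and above $N_2$, and there $\dot p_1<0$ and $\dot p_2<0$; a trajectory entering this lens cannot escape through $N_1$ (where the field points straight down) nor through $N_2$ (where it points inward), so $p_1$ decreases monotonically and the orbit stays bounded away from $\mathbf{p}^{\ast}$. Since such points sit arbitrarily close to $\mathbf{p}^{\ast}$, the state is not stable. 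Applying this at the neighbor of an asymptotically stable state---where the intervening sign-free region forces $g$ to have the ``wrong'' sign on the adjacent side---yields part~4. Finally, part~5 follows because $[0,1]^2$ is itself positively invariant (the field points inward on its boundary, as $0\leq w_i\leq 1$); being a cooperative planar system with finitely many equilibria---$g$ is real-analytic and not identically zero except in the degenerate case $\theta_1\equiv\theta_2\equiv 1$, which is the linear system $\dot p_1=p_2-p_1,\ \dot p_2=p_1-p_2$ that converges to the diagonal by inspection---every bounded trajectory converges to an equilibrium.

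The main obstacle is the non-hyperbolic case, where the two nullclines are tangent and $g$ has a higher-order zero: there linearization is inconclusive, as the Jacobian eigenvalues are $-1\pm\sqrt{w_1'(p_2^{\ast})\,w_2'(p_1^{\ast})}$ and stability is borderline exactly when the product equals $1$. Phrasing everything through the sign of $g$ via the invariant rectangles and the lens regions, rather than through eigenvalues, is what lets a single criterion cover both the transversal and the tangential crossings uniformly; the one place that needs genuine care is verifying that these lens regions are invariant up to the pinch at $\mathbf{p}^{\ast}$, which is where the monotonicity of the flow does the essential work.
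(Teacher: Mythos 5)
Your proposal is correct, and it reaches the paper's criterion through the same geometric skeleton but with noticeably different machinery. The paper's proof (Appendix C) is entirely self-contained: it classifies states by their position relative to the two increasing curves $w_2(p_1)$ and $w_1^{-1}(p_1)$, proves by direct trajectory-tracking that every orbit either converges or enters the ``lens'' between the curves (Lemma 1), that inside the lens both coordinates move monotonically to the nearest intersection point (Lemma 2), and then extracts the stability criterion (Lemma 3) and Parts 1--5 from these two lemmas. You instead exploit that the system is planar cooperative: you get Lyapunov stability from a family of shrinking positively invariant rectangles $[a_1,b_1]\times[a_2,b_2]$ (whose existence is exactly the condition $g(a_1)\geq 0\geq g(b_1)$ for $g=w_2-w_1^{-1}$), and you get convergence by citing Poincar\'e--Bendixson together with the absence of closed orbits in monotone planar flows, rather than proving monotone convergence in the lens by hand. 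Your instability argument for the ``only if'' directions and Part 4 is essentially identical to the paper's (the lens region where both $\dot p_1,\dot p_2<0$ carries nearby orbits a definite distance away), and Part 5 via Hirsch-type convergence for bounded cooperative planar systems replaces the paper's Lemmas 1--2. What your route buys is brevity and a clean conceptual frame (everything reduces to the sign of $g$ plus standard monotone-systems theory); what the paper's route buys is elementariness---no appeal to Poincar\'e--Bendixson or to the theory of monotone dynamical systems---which matters because the cited convergence facts, while standard, would otherwise need references or proofs.

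Two small points to tighten. First, your claim that $g$ is real-analytic is not quite right: $w_1^{-1}$ can fail to be analytic at the finitely many points where $w_1'=0$. The clean statement is that the zeros of $g$ coincide with the zeros of the polynomial $h(p_1)=w_1\bigl(w_2(p_1)\bigr)-p_1$ (and $\mathrm{sign}\,g=\mathrm{sign}\,h$ since $w_1$ is strictly increasing), which is how the paper gets finiteness of stationary states in Fact 2$'$. Second, when you invoke Poincar\'e--Bendixson inside a small invariant rectangle containing a single equilibrium, the omega-limit set could a priori also be a homoclinic loop; this is ruled out by the same cooperative structure you already cite (or by index theory), so you should say so explicitly rather than dismiss only periodic orbits. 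Neither point is a genuine gap---both are repaired by arguments you already have on the table.
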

Proposition \ref{prop-w-stable} is proven in Appendix \ref{app-proof-of-Prop-1} and illustrated in Figure \ref{figure-asymmetric}. The figure presents the phase plots of the dynamics along with the $w_i(p_j)$ curves for four environments, all having $u_1 = u_2 = 1.2$ but differing in their distributions of sample sizes (the environments are the two-population counterparts of those in Figure \ref{figure-symmetric}). The orange dotted curve represents $w_2(p_1)$, indicating the states where $\dot{p}_{2}=0$. In states above
this curve, $\dot{p}_{2}<0$, and in states below it, $\dot{p}_{2}>0$).
The purple solid curve represents either $w_1(p_2)$ (plotted against $p_2$ in the \textit{Y}-axis) or  $w_{1}^{-1}\left(p_{1}\right)$ (plotted against $p_1$ in the \textit{X}-axis). In states to the right of this curve, $\dot{p}_{1}<0$, and in states to the left, $\dot{p}_{1}>0$.

\begin{figure}[h]
\caption{Phase Plots and $w_i\left(p_j\right)$ Curves for Various Environments with
$u_1=u_2=1.2$}\label{figure-asymmetric}

\begin{centering}
\includegraphics[scale=0.5]{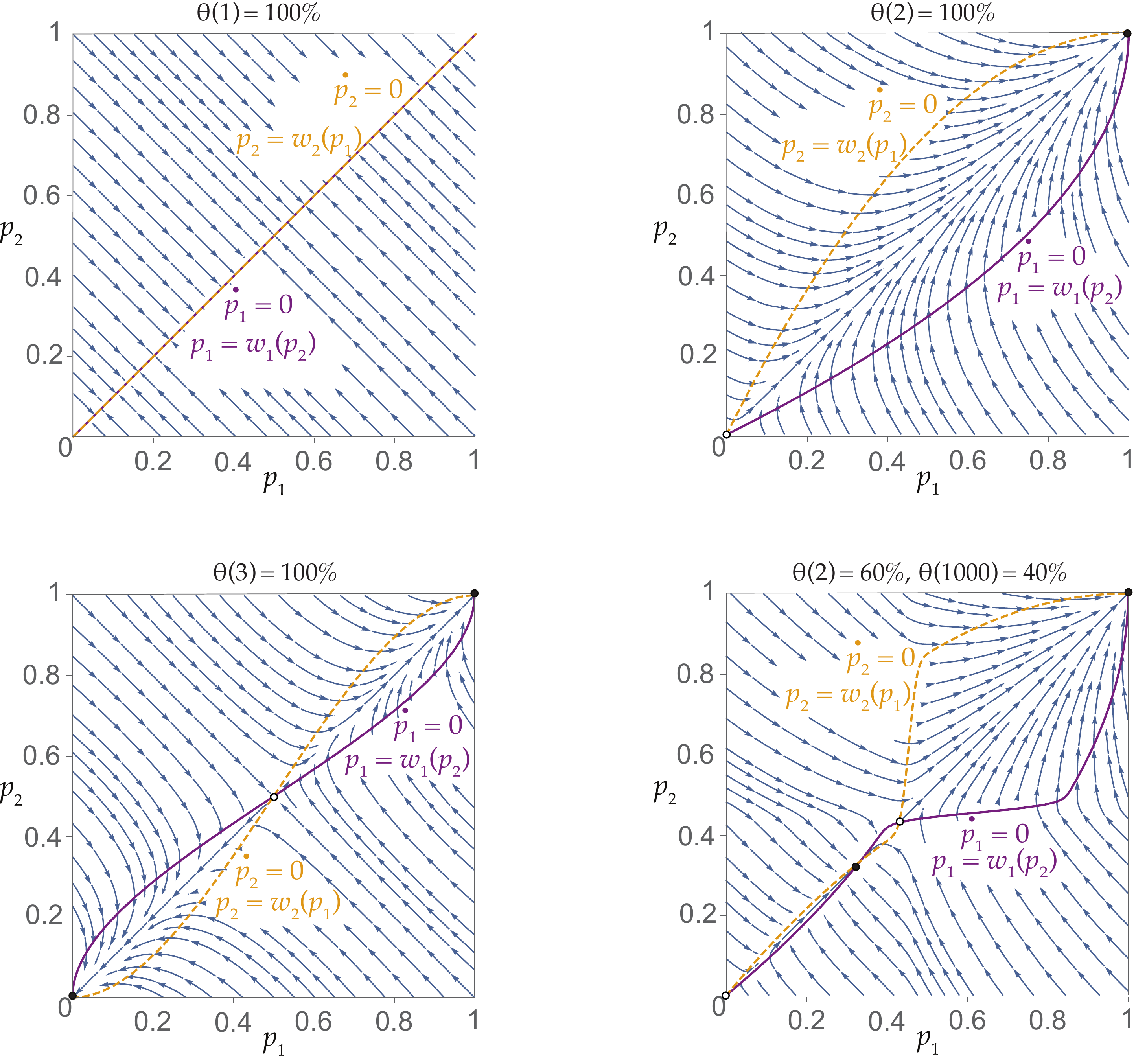}
\par\end{centering}
{\small{}The figure illustrates for each of four environments the two-dimensional phase plot of the sampling dynamics and the $w_i(p_j)$ curves. The intersection points of the two $w_i(p_j)$ curves are the stationary states. A solid (resp., hollow) dot represents an asymptotically stable (resp.,
unstable) stationary state.}{\small\par}
\end{figure}
The top-left panel shows that with all agents having a sample size of 1, all symmetric states $(p,p)$ are stationary, as in the baseline model. By contrast, when some agents in one of the populations sample multiple actions, $w_1(w_2(p))$ becomes a polynomial of positive degree greater than one. This implies that $w_1\circ w_2$ has a finite number of fixed points, which means the dynamics have a finite number of stationary states:

\theoremstyle{plain}
\newtheorem*{fact2prime}{Fact 2'}
\begin{fact2prime}\label{fact-fintie-asym} 
If  $\theta_i\equiv1$ for each player $i$, a state is stationary iff it is symmetric. Otherwise,  the number of stationary states is finite.
\end{fact2prime}

Symmetric coordination games should be modeled as single-population dynamics  if agents cannot condition their play on their role (Player 1 or Player 2), and as two-population dynamics if they can.  Notably, in all scenarios illustrated in Figures \ref{figure-symmetric} and \ref{figure-asymmetric}, both types of dynamics produce the same set of asymptotically stable states, and, hence, the same predictions about the long-run behavior in symmetric coordination games. Our next result shows that this holds true in general (see Appendix \ref{proof-prop-2-sym} for the proof).\footnote{The equivalence between one-population and two-population dynamics also holds in the original payoff matrix with four parameters (as shown on the right side of Table \ref{tab-symmetric}), provided that the two strict equilibria are on the main diagonal. If the strict equilibria are off the diagonal, they become infeasible in a single-population context. Consequently, in one-population dynamics, hawk--dove games transform into anti-coordination games, leading to convergence to an interior state characterized by miscoordination, whether under monotone or sampling dynamics. By contrast, hawk--dove games remain coordination games in two-population dynamics, as detailed in Appendix \ref{subsec-general-coord}.} The intuition is that in any asymmetric state (say with $p_1<p_2$) the share of new agents playing $a_i$ is larger in population 1, which pushes the populations toward the 45$^\circ$ line of symmetric state. Once in a symmetric state, the dynamics (Eq. \ref{eq:action-sampling-dyanmics-g}) coincide with the one-population dynamics, Eqs. \eqref{eq:dynamics-sym}--\eqref{eq:action-sampling-dyanmics}.

\begin{prop}\label{prop:stationary-coincide}
    Let $(\textbf{u},\boldsymbol{\theta})$ represent a symmetric environment. A state $\textbf{p}$ is  asymptotically stable under two-population dynamics iff $\textbf{p}$  is symmetric (i.e., $p\equiv p_1=p_2$) and $p$ is  asymptotically stable under one-population dynamics $(u,\theta)$.
\end{prop}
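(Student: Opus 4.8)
The plan is to prove the two directions of the ``iff'' separately, relying on Proposition \ref{prop-w-stable} to translate dynamic stability into conditions on the curves $w_2(p_1)$ and $w_1^{-1}(p_1)$, and on the symmetry hypothesis $w_1=w_2\equiv w$ (which follows from $u_1=u_2$ and $\theta_1=\theta_2$). The key reduction is that on the diagonal, where $p_1=p_2=p$, the two-population stationarity condition $p_i=w_i(p_j)$ collapses to the single equation $p=w(p)$, which is exactly the one-population stationarity condition from Eq. \eqref{eq:dynamics-sym}. So the first step is to establish that every symmetric stationary state of the two-population system corresponds to a one-population stationary state and vice versa, and that the pure states $\mathbf{0},\mathbf{1}$ are symmetric stationary states under both dynamics.

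The harder direction is ``only if'': showing that an asymptotically stable two-population state must be symmetric. Here I would argue that any asymmetric state is not asymptotically stable by exploiting the monotonicity of $w$ together with the contraction toward the diagonal described in the paragraph preceding the proposition. Concretely, I would show that if a stationary state $\mathbf{p}$ has $p_1\neq p_2$, then since $w$ is strictly increasing and fixes $0$ and $1$, any stationary state off the diagonal would have to satisfy $p_1=w(p_2)$ and $p_2=w(p_1)$ with $p_1\neq p_2$; composing gives $p_1=w(w(p_1))$, and I would argue that the relevant asymmetric fixed points of $w\circ w$ either do not exist or fail the stability test. The cleanest route is to show directly that the vector field has a component pushing toward the $45^\circ$ line: in the region $p_1<p_2$ one has $w(p_1)<w(p_2)$, so $\dot p_1-\dot p_2=w(p_2)-p_1-(w(p_1)-p_2)=(w(p_2)-w(p_1))+(p_2-p_1)>0$, meaning the gap $p_2-p_1$ strictly increases wherever it is positive. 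This monotone divergence of the off-diagonal coordinate rules out asymptotic stability of any asymmetric state, since trajectories starting near such a state with a slightly larger initial gap move away.

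For the ``if'' direction, I would take a symmetric state $(p,p)$ that is asymptotically stable under the one-population dynamics and verify the two-population stability criterion of Proposition \ref{prop-w-stable}. By Fact \ref{fact-w-stable}, one-population stability at an interior $p$ means $w(q)>q$ for $q$ just below $p$ and $w(q)<q$ for $q$ just above $p$; I must convert this into the condition that $w_2(p_1)>w_1^{-1}(p_1)$ to the left of $p$ and $w_2(p_1)<w_1^{-1}(p_1)$ to the right. Since $w_1=w_2=w$, the inequality $w(p_1)>w^{-1}(p_1)$ is equivalent, by applying the increasing map $w$ (or its inverse), to $w(w(p_1))>p_1$, and I would relate the sign of $w\circ w-\mathrm{id}$ near $p$ to the sign of $w-\mathrm{id}$ near $p$ using the fact that at the fixed point the slope $w'(p)$ is positive. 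The pure-state cases $\mathbf{0}$ and $\mathbf{1}$ are handled by the analogous one-sided versions of Facts \ref{fact-w-stable} and Proposition \ref{prop-w-stable}.

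The main obstacle I anticipate is the bookkeeping in the ``if'' direction: translating the one-population sign condition on $w-\mathrm{id}$ into the two-population sign condition on $w_2-w_1^{-1}$ (equivalently on $w\circ w-\mathrm{id}$) requires care, because the two-population linearization is governed by the product of slopes $w_1'(p)\,w_2'(p)=w'(p)^2$ rather than by $w'(p)$ itself, and one must confirm that the strict-inequality (non-degenerate) crossings line up so that a generic stable one-population fixed point maps to a stable two-population fixed point. I would address this by noting $w'(p)>0$ everywhere (strict monotonicity), so the sign of $w(w(p_1))-p_1$ agrees with the sign of $w(p_1)-p_1$ in a punctured neighborhood of $p$, which is exactly what is needed; the non-generic tangency case $w'(p)=1$ can be excluded or handled by appealing to Fact \ref{fact-fintie-sym} (finitely many stationary states) so that strict sign changes occur.
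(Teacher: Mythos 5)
Your overall architecture matches the paper's: reduce everything to (i) the observation that the stationary states of the two-population system are exactly the symmetric states $(p,p)$ with $w(p)=p$, and (ii) a strict-monotonicity argument translating the one-population stability conditions of Fact \ref{fact-w-stable} into the two-population conditions of Proposition \ref{prop-w-stable}. Your ``if'' direction is correct and is essentially the paper's proof: since $w$ is strictly increasing, $w(q)>w^{-1}(q)$ iff $w(w(q))>q$ iff $w(q)>q$ (and likewise with all inequalities reversed), so the one-sided sign conditions in Fact \ref{fact-w-stable} and Proposition \ref{prop-w-stable} coincide. Note that monotonicity alone settles this; no linearization, genericity, or appeal to Fact \ref{fact-fintie-sym} is needed, so your worry about the tangency case $w'(p)=1$ is moot.

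The genuine problem is in your ``only if'' direction. Your computation $\dot p_1-\dot p_2=\left(w(p_2)-w(p_1)\right)+\left(p_2-p_1\right)>0$ in the region $p_2>p_1$ is right, but your reading of it is backwards: $\dot p_1-\dot p_2>0$ means $\frac{d}{dt}\left(p_2-p_1\right)<0$, so the gap $p_2-p_1$ strictly \emph{decreases} --- the vector field contracts toward the $45^{\circ}$ line (exactly the intuition stated in the paper just before the proposition); it does not diverge from it. Consequently your ``monotone divergence rules out asymptotic stability'' argument fails, and note that contraction toward the diagonal would in any case be compatible with stability, so this route cannot deliver instability of asymmetric states. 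What actually closes this direction is simpler and is already contained in your formula: since $\dot p_1-\dot p_2\neq 0$ at every asymmetric state, no asymmetric state is stationary, and asymptotic stability is defined only for (and requires) stationary states. This is the differential restatement of the paper's Lemma \ref{lemma:stationary-coincide}: if a stationary state had $p_1<p_2$, then $p_2=w(p_1)<w(p_2)=p_1$, a contradiction; equivalently, a strictly increasing map has no 2-cycles, which also resolves your hedge that asymmetric fixed points of $w\circ w$ ``either do not exist or fail the stability test'' --- they do not exist. With that repair your proof goes through and coincides with the paper's.
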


\subsection{Generalizing Theorems \ref{thm1} and \ref{thm2}}
Theorem \ref{thm:global-stability-pure} shows that in environments where all agents within each population have the same sample size, there is at most one interior stationary state, which is unstable. Consequently, almost all initial states converge to one of the pure equilibria.

\vspace{5px}
\begin{primedtheorem}\label{thm:global-stability-pure}
\emph{Assume that $\theta_{i}\equiv k_{i}>1$
for each $i\in\left\{ 1,2\right\} $. There exists at most one interior stationary state, and this state (if it exists) is unstable.}
\end{primedtheorem}

The proof, detailed in Appendix \ref{app-proof-of-thm1'}, extends Fact \ref{fact-green} by showing that a composition of two cumulative binomial distributions has at most one interior fixed point. This result, presented as Proposition \ref{prop-binomial} in Appendix \ref{subsec:General-Result-forbinomial}, may be of independent interest.

Next, we show that in many environments, there is an interval of $\alpha_i$ values where increasing the sample sizes of $\alpha_i$ agents in population $i$ (regardless of their current sample sizes) leads to an environment with an asymptotically stable interior state characterized by miscoordination. This is illustrated in the bottom-right panel of Figure \ref{figure-asymmetric}.

\vspace{5px}
\begin{primedtheorem}\label{thm:locally-stable_interior_if_pure_is_unstable}
\emph{In an environment where $1<\max(\emph{supp}(\theta_i))< u_i+1$ for each population $i$, there is a proportion $\alpha_i\in (0,1)$ such that substantially increasing the sample sizes of a share $\alpha_i$ of the agents in each population $i$ induces an asymptotically stable interior state.}
\end{primedtheorem}

See Appendix \ref{Proof-Thm-2'} for the proof. The following subsections present new results for asymmetric coordination games with different preferred outcomes for the players.

\subsection{Stable States with a High Level of Miscoordination\label{subsec:Heterogeneity-no-dominant}}
Theorem \ref{thm:locally-stable_interior_if_pure_is_unstable} does not provide a uniform minimal bound for the level of miscoordination in the stable interior state. In the next result, we provide sufficient conditions that ensure high levels of miscoordination (>50\%) in the stable interior state in games where the two populations have different preferred coordinated outcomes (i.e., $u_2 < 1 < u_1$).

\begin{figure}[h]
\begin{centering}
\caption{Illustrative Phase Plots for Theorems \ref{thm:locally-stableinterior_ASD} and \ref{thm:global-mixed}
 \label{fig:fig3-local-stablity}}
\includegraphics[scale=0.59]{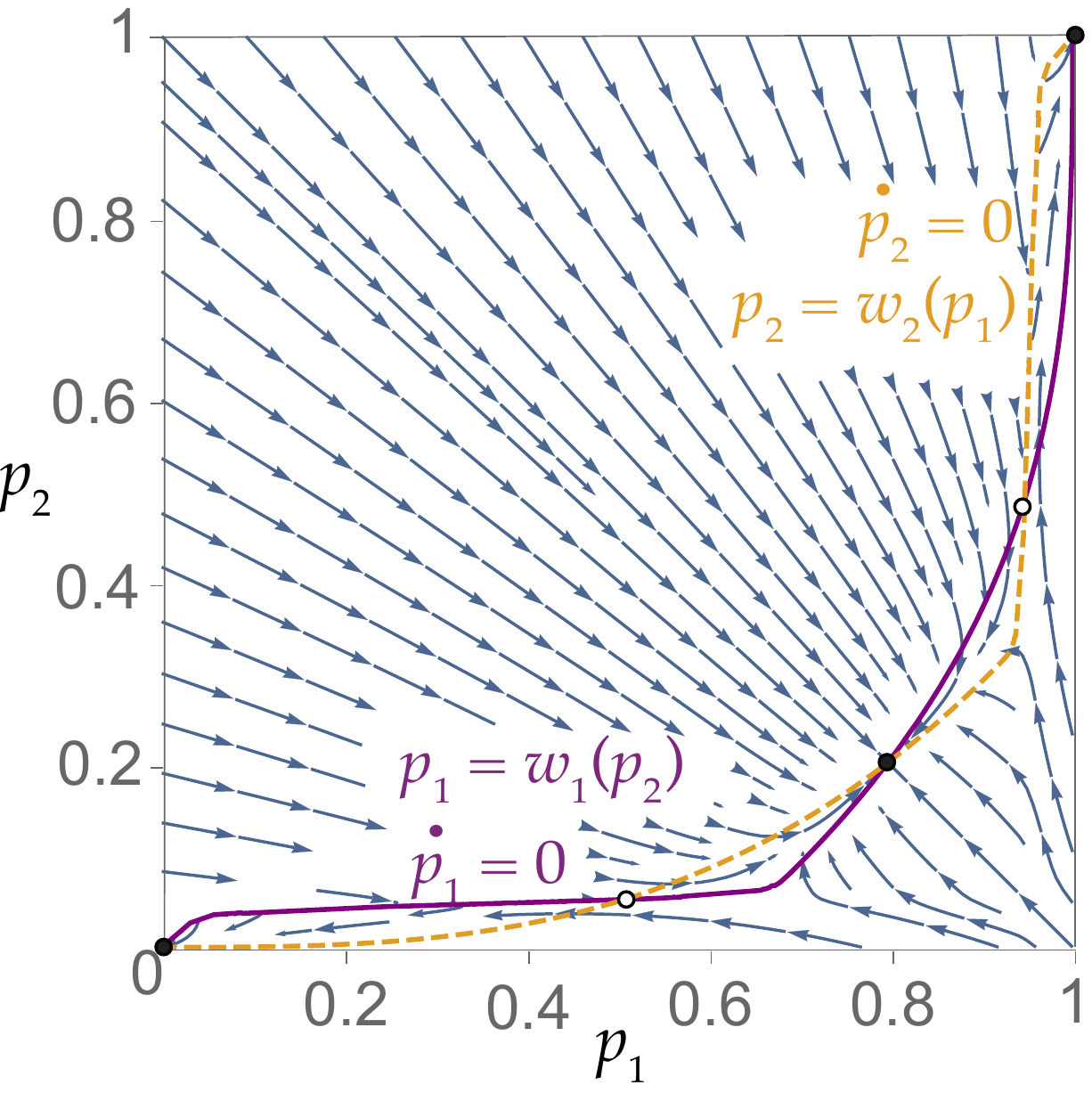}~~~~~~~~~\includegraphics[scale=0.59]{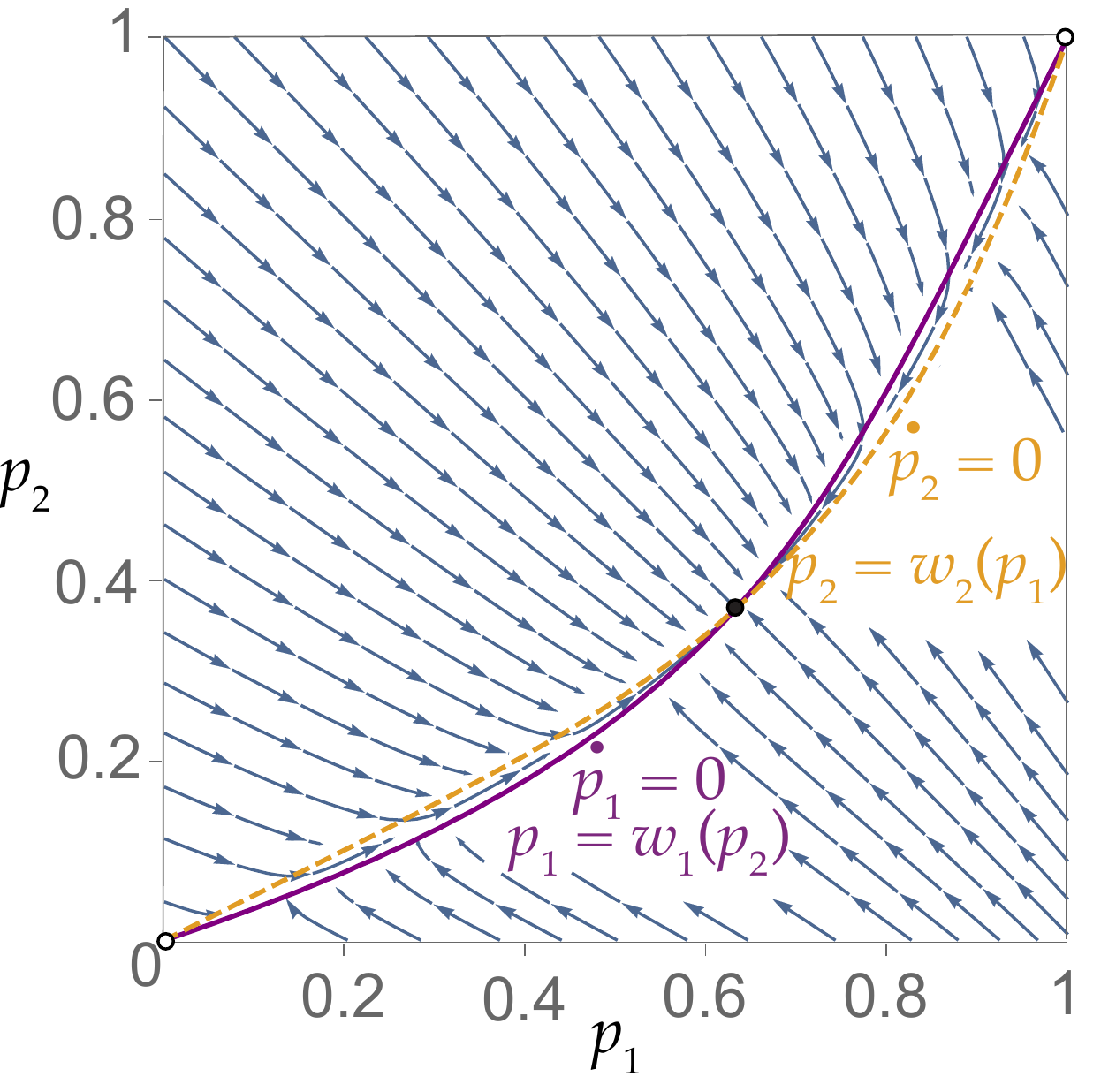}
\par\end{centering}
\small{The left panel illustrates Theorem \ref{thm:locally-stableinterior_ASD}. It shows an 
antisymmetric game with
$u_{1}=\frac{1}{u_{2}}=20$, where 50\% (resp.,
50\%) of the agents have sample size 3 (resp., 1,000). The mixed Nash
equilibrium of this game is $\left(0.95,0.05\right)$.
The environment admits three interior stationary states: two unstable states at $\left(0.47,0.05\right)$ and  $\left(0.95,0.53\right)$), and a stable
 state at $\left(0.77,0.23\right)$ 
with a miscoordination probability of $65\%=0.77^2+0.23^2$.} The right panel illustrates  Theorem \ref{thm:global-mixed}. It shows an 
antisymmetric game with
$u_{1}=\frac{1}{u_{2}}=5$, where 50\%  of the agents have sample size 1 and 50\% have sample size 5. The environment admits a globally stable interior state $(0.63,0.37)$.
\end{figure}
   
The \emph{miscoordination probability} induced by state $\mathbf{p}$ is the fraction of matched agent pairs (one from each population) who miscoordinate by selecting different actions. This miscoordination probability is given by $p_1\cdot (1-p_2)+p_2\cdot (1-p_1).$

\begin{thm}
\label{thm:locally-stableinterior_ASD}\label{thm3} For any sample size distribution profile, if  $u_1$ and $\frac{1}{u_2}$ are sufficiently large, then significantly increasing the sample size of  half of the agents in each population induces an asymptotically stable interior state with a miscoordination probability of at least 50$\%.$ 
\end{thm}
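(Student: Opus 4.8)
The plan is to work in the two-population setup and exploit the structure of antisymmetric games, since $u_1 = \frac{1}{u_2}$ large is the relevant regime suggested by the left panel of Figure \ref{fig:fig3-local-stablity}. The key observation is that for population $i$, a new agent plays $a_i$ iff she observes enough $a_j$'s in her sample, specifically $X(k,p_j) \geq \frac{k}{u_i+1}$. When $u_1$ is large, the threshold $\frac{k}{u_1+1}$ is very small, so a Player-1 agent plays $a_1$ whenever she observes \emph{even one} $a_2$ in her sample; symmetrically, when $\frac{1}{u_2}$ is large (i.e., $u_2$ small), a Player-2 agent plays $b_2$ whenever she observes even one $b_1$, equivalently she plays $a_2$ only when her entire sample consists of $a_1$'s. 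I would first make this ``extreme threshold'' behavior precise: as $u_1, \frac{1}{u_2} \to \infty$, the functions $w_1$ and $w_2$ converge (pointwise on the relevant range) to the limiting forms determined by the small-sample and large-sample groups, and I would track how the stable interior fixed point of the composition $w_1 \circ w_2$ behaves in this limit.

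The core of the argument is to locate the stable interior state and bound its miscoordination probability $p_1(1-p_2) + p_2(1-p_1)$ below by $\frac{1}{2}$. First I would invoke Theorem \ref{thm:locally-stable_interior_if_pure_is_unstable} (or rather its proof method) to guarantee that, once half the agents in each population are given very large samples, an asymptotically stable interior state exists: the small-sample half produces $w_i(\epsilon) > \epsilon$ near the pure state, while the large-sample half is nearly a step function that forces $w_i$ below the diagonal in an intermediate range, creating the sign changes required by Proposition \ref{prop-w-stable}(3). Second, and this is where the $50\%$ miscoordination bound comes from, I would compute the limiting location of the stable state. The heuristic from Figure \ref{fig:fig3-local-stablity} is that the stable state sits near the ``middle'' where both large-sample groups are nearly indifferent, and by the antisymmetry $u_1 = 1/u_2$ the two populations' thresholds are mirror images, which should force $p_1$ and $p_2$ to lie on opposite sides of $\frac{1}{2}$ (roughly $p_1 > \frac{1}{2} > p_2$). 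Once $p_1 > \frac12 > p_2$, one checks $p_1(1-p_2) + p_2(1-p_1) \geq \frac12$ directly: writing $p_1 = \frac12 + s$, $p_2 = \frac12 - t$ with $s,t \geq 0$ gives miscoordination probability $\frac12 + 2st \geq \frac12$, so it suffices to show the stable state straddles $\frac{1}{2}$ in this way.

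The main obstacle will be converting the asymptotic/limiting picture into a clean statement that holds for all \emph{sufficiently large} $u_1, \frac{1}{u_2}$ uniformly in the (arbitrary) sample-size distribution. The difficulty is twofold: (i) I must show the relevant stable fixed point genuinely survives the perturbation from the idealized step-function limit to finite but large $u_1, 1/u_2$, which requires a continuity/implicit-function argument controlling how the fixed points of $w_1 \circ w_2$ move; and (ii) I must rule out the possibility that the stable state drifts to the \emph{same} side of $\frac{1}{2}$ for both populations, which would break the straddling property and could push miscoordination below $\frac12$. I would handle (ii) by pinning down the limiting fixed point explicitly: in the $u_1, 1/u_2 \to \infty$ limit the small-sample group in population $1$ plays $a_1$ with probability $\to 1 - (1-p_2)^{k}$-type expressions that exceed $p_1$ at small $p_1$, while the large-sample group collapses the indifference point toward the Nash value $p^{NE} = (\frac{1}{1+u_2}, \frac{1}{1+u_1}) \to (1,0)$, and a careful accounting of these competing forces should show the stable intersection is bounded away from the corners and straddles $\frac{1}{2}$. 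The antisymmetry is essential here, as it provides the symmetry of the two populations' best-response structure that forces the straddle.
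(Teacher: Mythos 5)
Your skeleton matches the paper's proof in Appendix \ref{subsec:Proof-of-Theorem-3}: the extreme-threshold observation (once $u_1$ exceeds the maximal sample size $\hat k_1$, a small-sample Population-1 agent plays $a_1$ whenever she sees a single $a_2$, while once $u_2<1/\hat k_2$ a small-sample Population-2 agent plays $a_2$ only when her \emph{entire} sample is $a_1$'s), the straddling target $\hat p_2<\tfrac12<\hat p_1$, and the closing algebra $p_1(1-p_2)+p_2(1-p_1)=\tfrac12+2st$ are all exactly the paper's ingredients. But the step that actually proves the theorem---establishing that the stable state straddles $\tfrac12$---is precisely the step you leave open (``a careful accounting of these competing forces should show\dots''), and the mechanism you call ``essential,'' antisymmetry, is not available. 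The theorem nowhere assumes $u_1=1/u_2$; it only requires $u_1$ and $1/u_2$ to each be large, with the threshold allowed to depend on the given distribution profile. Moreover, even in an antisymmetric game the mirror relation $w_2(p)=1-w_1(1-p)$ that would force a symmetric fixed point requires $\theta_1=\theta_2$, whereas the theorem quantifies over arbitrary profiles $(\theta_1,\theta_2)$; with different distributions in the two populations there is no symmetry to exploit, so your argument for the straddle collapses exactly in the generality the statement demands. (A side issue: your worry that the bound must hold ``uniformly in the arbitrary sample-size distribution'' reverses the quantifiers---the distribution is fixed first and the payoff bound may depend on it; a distribution-uniform bound is impossible, since for any fixed $u_1$ a distribution with maximal sample size above $u_1+1$ destroys the extreme-threshold behavior.)

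The paper obtains the straddle with no symmetry and no limiting or perturbation argument, via one-sided bounds at $p_1=\tfrac12$; this is what your proposal is missing, and it also dissolves your obstacles (i) and (ii). Take ``sufficiently large'' to mean $p_2^{NE}=\frac{1}{1+u_1}<\frac{1}{2^{\hat k_2+1}}-\epsilon$ and $p_1^{NE}=\frac{1}{1+u_2}>1-\frac{1}{2^{\hat k_1+1}}+\epsilon$, and after the modification write $w_i=\tfrac12 w_i^{\theta_i}+\tfrac12 w_i^{\bar k}$ with $\bar k$ chosen \emph{after} the payoffs. At $p_1=\tfrac12$, the small-sample half of Population 2 plays $a_2$ with probability $\sum_k\theta_2(k)2^{-k}\in\left[2^{-\hat k_2},\tfrac12\right]$, and the large-sample half plays $a_2$ with probability below $\epsilon$ because $\tfrac12<p_1^{NE}-\epsilon$; hence $p_2^{NE}+\epsilon<w_2\left(\tfrac12\right)<\tfrac12$. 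On the other axis, for every $p_2>p_2^{NE}+\epsilon$ we have $w_1(p_2)>\tfrac12(1-\epsilon)+\tfrac12 p_2>\tfrac12$, so $w_1^{-1}\left(\tfrac12\right)<p_2^{NE}+\epsilon<w_2\left(\tfrac12\right)$. Thus the curve $w_2$ lies above $w_1^{-1}$ at $p_1=\tfrac12$ and below it at $p_1=w_1\left(\tfrac12\right)<1-\frac{1}{2^{\hat k_1+1}}<p_1^{NE}-\epsilon$, so it crosses from above to below at some $\hat p_1\in\left(\tfrac12,w_1\left(\tfrac12\right)\right)$; Proposition \ref{prop-w-stable}(3) gives asymptotic stability, and $\hat p_2=w_2(\hat p_1)<\tfrac12\left(\hat p_1+\epsilon\right)<\tfrac12$ delivers the straddle. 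Everything is a finite chain of explicit inequalities: large $u_1,1/u_2$ make the small-sample behavior \emph{exact} (not a limit to be perturbed), and the near-step behavior comes only from $\bar k$, which is free to be as large as needed.
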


\begin{proof}[Sketch of proof]
The argument is illustrated in the left panel of Figure \ref{fig:fig3-local-stablity}. 
Let $\theta'_i$ (resp., $\theta_i$) be the sample size distribution in population $i$ after (resp., before) the sample sizes of half of the agents are substantially  increased. Observe that
(1) $w_2^{\theta'_2}(\frac{1}{2})\approx\frac{1}{2}w_2^{\theta_2}(\frac{1}{2})$ and (2) $\left(w_1^{\theta'_1}\right)^{-1}(\frac{1}{2})\approx p_2^{NE}<<1~\Rightarrow~ w_2^{\theta'_2}\left(\frac{1}{2}\right)>\left(w_1^{\theta'_1}\right)^{-1}\left(\frac{1}{2}\right).$ Analogous arguments imply that $w_1^{\theta'_2}(\frac{1}{2})<(w_2^{\theta'_2})^{-1}(\frac{1}{2}).$  Thus, the curve $w_2^{\theta'_2}$ is above  (below) $\left(w_1^{\theta'_2}\right)^{-1}$ at $\frac{1}{2}$ ($\left(w_1^{\theta'_1}\right)^{-1}\left(\frac{1}{2}\right)>\frac{1}{2}$), which implies that there is a stationary state $\hat{\boldsymbol{p}}$ between $\frac{1}{2}$ and $\left(w_1^{\theta'_1}\right)^{-1}\left(\frac{1}{2}\right)$ such that the curve $w_2^{\theta'_2}$ is above (resp., below) $(w_1^{\theta'_2})^{-1}$ in a left (resp., right) neighborhood of $\hat{\boldsymbol{p}}$. Proposition  \ref{prop-w-stable}, then, implies that  $\hat{\boldsymbol{p}}$ is asymptotically stable.
Note that $\hat{\boldsymbol{p}}$ is in the bottom-right quarter of the unit square, which implies that its induced miscoordination probability is greater than 50\%. See Appendix \ref{subsec:Proof-of-Theorem-3} for a formal
proof.\qedhere
\end{proof}

\subsection{Global Convergence to Miscoordination\label{sec:global-convergence-ro-pure-states}}
Theorems \ref{thm:locally-stable_interior_if_pure_is_unstable} and \ref{thm3} demonstrate that many coordination games and heterogeneous distributions result in locally stable interior states, which often coexist with stable pure states.\footnote{In these cases, the long-run behavior depends on initial conditions, and the basin of attraction of the asymptotically stable interior state is often large (e.g., 87\% in the right panel of Figure \ref{fig:fig3-local-stablity}).} In this subsection, we characterize when an interior state with miscoordination is \emph{globally} stable, i.e., when it attracts almost all interior states. We show that this occurs for a small but significant set of environments where the $u_i$ values are on opposite sides of, and sufficiently far from, 1, and where a sufficiently large share of agents (but not all) in each population have a sample size of 1. By contrast, our previous results did not require the presence of agents with a sample size of 1.

In order to present our result, it would be helpful to define the $m$-truncated expectation as the expected value
when ignoring values larger than $m.$ Formally:
\begin{defn}
The $m$\emph{-truncated expectation} $\mathbb{E}_{\leq m}$ (resp.,
$\mathbb{E}_{<m}$) of distribution $\theta_{i}$ with support on
the natural numbers is defined as follows:\footnote{Observe that in our notation the parameter $k$ takes only positive
integer values (although we allow the upper bound $m$ to be a noninteger).}\\ 
\vspace{-20px}
$$\mathbb{E}_{\leq m}\left(\theta_{i}\right)=\sum_{1\leq k\leq m}\theta_{i}\left(k\right)\cdot k
\,\,\,\,\,\,\,\,\,\,\,\,\,\,\textrm{ (resp., } \mathbb{E}_{<m}\left(\theta_{i}\right)=\sum_{1\leq k<m}\theta_{i}\left(k\right)\cdot k).$$ 
\end{defn}

Our final main result shows that populations
converge to an interior stationary
state from almost any initial state if, and essentially only if, the product
of the truncated expectation of the distribution of sample sizes in each population and
the  share of agents with sample size 1 in the other population exceeds 1. 
\begin{thm}
\label{thm:global-mixed} \label{thm4}

\begin{enumerate}
\item \textbf{Global convergence to miscoordination}: Assume that \vspace{-10px}
\[
\theta_{1}\left(1\right)\cdot\mathbb{E}_{<\frac{1}{u_{2}}+1}\left(\theta_{2}\right)>1\,\,\textrm{\,\,and}\,\,\,\,\theta_{2}\left(1\right)\cdot\mathbb{E}_{
< u_{1}+1}\left(\theta_{1}\right)>1.
\]
If $\boldsymbol{p}(0)\notin\left\{ (0,0),(1,1)\right\} $, then $\lim_{t\rightarrow\infty}\mathbf{p}\left(t\right)\notin\left\{ (0,0),(1,1)\right\} $. 
\item \textbf{Local convergence to coordination}: Assume that 
\[
\theta_{1}\left(1\right)\cdot\mathbb{E}_{
\leq\frac{1}{u_{2}}+1}\left(\theta_{2}\right)<1\,\,\,\,\textrm{or}\,\,\,\,\theta_{2}\left(1\right)\cdot\mathbb{E}_{\leq u_{1}+1}\left(\theta_{1}\right)<1.
\]
Then at least one of the pure equilibria is asymptotically stable.
\end{enumerate}
\end{thm}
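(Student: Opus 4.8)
The plan is to reduce both parts to the linear stability of the two‑population dynamics at the pure states $\mathbf0=(0,0)$ and $\mathbf1=(1,1)$. At either corner the Jacobian of $\dot p_1=w_1(p_2)-p_1,\ \dot p_2=w_2(p_1)-p_2$ has diagonal $-1$ and off‑diagonal entries $w_1'$ and $w_2'$ (evaluated at the corner), so its eigenvalues are $-1\pm\sqrt{w_1'\,w_2'}$: the corner is asymptotically stable when the slope product $w_1'\,w_2'<1$ and is an unstable saddle when $w_1'\,w_2'>1$. Everything thus reduces to the four one‑sided slopes, which I would read off from the leading term of the binomial tails in \eqref{eq:action-sampling-dyanmics-g}.

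For those slopes I would use that, for small $p_j$, a size‑$k$ sampler in population $i$ chooses $a_i$ with probability of order $p_j$ precisely when its threshold $\lceil k/(u_i+1)\rceil$ equals $1$, i.e. when $k\le u_i+1$, contributing $k\,\theta_i(k)$ to $w_i'(0)$, while larger $k$ contribute $O(p_j^2)$. In the regime $u_2<1<u_1$ of interest this yields $w_1'(0)=\mathbb{E}_{\le u_1+1}(\theta_1)$ and $w_2'(0)=\theta_2(1)$, and the mirror‑image count of $b_i$‑samplers near $\mathbf1$ gives $w_1'(1)=\theta_1(1)$ and $w_2'(1)=\mathbb{E}_{<\frac1{u_2}+1}(\theta_2)$. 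The tie‑breaking rule (ties go to $a$) is exactly what makes the $\mathbf0$‑slope truncate with $\mathbb{E}_{\le}$ but the $\mathbf1$‑slope with $\mathbb{E}_{<}$: at an integer threshold the tie is resolved toward $a$, keeping that $k$ first‑order for $a_1$ at $\mathbf0$ yet pushing it to second‑order for $b_2$ at $\mathbf1$. Part 2 is then immediate: since $\mathbb{E}_{<}\le\mathbb{E}_{\le}$, each displayed inequality forces the matching slope product below $1$, making that corner a sink.

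For Part 1 the two hypotheses give $w_1'(0)w_2'(0)=\theta_2(1)\,\mathbb{E}_{\le u_1+1}(\theta_1)\ge\theta_2(1)\,\mathbb{E}_{<u_1+1}(\theta_1)>1$ and $w_1'(1)w_2'(1)=\theta_1(1)\,\mathbb{E}_{<\frac1{u_2}+1}(\theta_2)>1$, so both corners are saddles. The substantive step is to upgrade this to: no orbit starting outside $\{\mathbf0,\mathbf1\}$ converges to a corner. Here I would exploit the cooperative structure (both off‑diagonal Jacobian entries $w_1',w_2'$ are positive everywhere) together with the nullcline comparison of Proposition~\ref{prop-w-stable}. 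The saddle condition at $\mathbf0$ is equivalent to $w_2(p_1)>w_1^{-1}(p_1)$, i.e. $w_1(w_2(p_1))>p_1$, on a punctured right‑neighborhood of $0$, so the $\dot p_2=0$ nullcline lies strictly above the $\dot p_1=0$ nullcline there. Inspecting the three regions this creates shows that between the nullclines both $\dot p_1,\dot p_2>0$, above both $\dot p_1>0$, and below both $\dot p_2>0$; hence near $\mathbf0$ no region allows both coordinates to decrease, so no interior orbit can send both coordinates to $0$. The reflected argument rules out $\mathbf1$.

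To close, I would observe that the only boundary stationary states are $\mathbf0$ and $\mathbf1$ (setting $p_i\in\{0,1\}$ in $p_i=w_i(p_j)$ forces the other coordinate to the same corner) and that any non‑corner boundary start immediately enters the interior; since Proposition~\ref{prop-w-stable}(5) guarantees every orbit converges to a stationary state, the limit must be interior, which is Part 1. The hard part is exactly this upgrade from a linear saddle to the global no‑convergence claim: a saddle carries a one‑dimensional stable manifold, and the crux is that the cooperative structure forces that manifold to leave the invariant square at the corner, so it never captures an interior orbit. A secondary subtlety is the tie‑breaking asymmetry above, which is why Part 1 states the strict truncation $\mathbb{E}_{<u_1+1}$ at $\mathbf0$ (marginally stronger than the sharp saddle condition $\mathbb{E}_{\le u_1+1}$); the resulting gap with Part 2 is confined to the degenerate case in which $u_1+1$ or $\frac1{u_2}+1$ is an integer carrying positive mass --- the force of the word ``essentially.''
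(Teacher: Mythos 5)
Your route is the paper's own: the corner linearization with eigenvalues $-1\pm\sqrt{w_1'\,w_2'}$ and slopes given by truncated expectations (with the tie-breaking asymmetry $\mathbb{E}_{\le}$ at $\mathbf{0}$ versus $\mathbb{E}_{<}$ at $\mathbf{1}$) is exactly Proposition \ref{prop:pure-equilbiria-satbility}, and your three-region nullcline argument ruling out interior orbits converging to a saddle corner, combined with Proposition \ref{prop-w-stable}(5), is exactly the paper's Lemma \ref{cla-convergence-to-pure-iff-asymp} used in contrapositive. Your Part 1 is sound and at essentially the paper's level of rigor; the one step you state loosely (``no region allows both coordinates to decrease, so no interior orbit can send both coordinates to $0$'') does need the observation that the region between the nullclines is forward-invariant --- an orbit with both coordinates tending to $0$ would have to alternate between the above-both and below-both regions and hence cross the between region, where it is trapped moving up and to the right --- but that is precisely what Lemma \ref{lem:convergence-to-which-neighbour} supplies, and you cite that machinery.

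Part 2, however, contains a step that fails. Your slope formula $w_2'(0)=\theta_2(1)$, and hence the claim that ``each displayed inequality forces the matching slope product below $1$,'' is valid only when $u_2<1$. Restricting to the regime $u_2<1<u_1$ is legitimate in Part 1, whose hypotheses are vacuous otherwise, but Part 2's hypotheses are not confined to that regime. Concretely, take $u_1=u_2=2$ and $\theta_1=\theta_2$ with mass $\tfrac12$ on $k=1$ and $\tfrac12$ on $k=2$. Condition (ii) holds: $\theta_2(1)\cdot\mathbb{E}_{\le u_1+1}(\theta_1)=\tfrac12\cdot\tfrac32=\tfrac34<1$. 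Yet the true slope product at $\mathbf{0}$ is $\mathbb{E}_{\le u_1+1}(\theta_1)\cdot\mathbb{E}_{\le u_2+1}(\theta_2)=\tfrac32\cdot\tfrac32=\tfrac94>1$ (a size-$2$ sampler near $\mathbf{0}$ who sees one $a_j$ plays $a_i$, since $u_i=2>1$), so $\mathbf{0}$ is a saddle, not a sink: your inference proves the wrong corner stable. The conclusion of the theorem survives, but through the other corner: when $u_2\ge1$ one has $\mathbb{E}_{<\frac{1}{u_2}+1}(\theta_2)=\theta_2(1)$ and $\theta_1(1)\le\mathbb{E}_{\le u_1+1}(\theta_1)$, so either displayed inequality forces $\theta_1(1)\,\theta_2(1)<1$, making $\mathbf{1}$ (i.e., $\boldsymbol{a}$) asymptotically stable. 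Your proof needs this case split --- note that the paper's Proposition \ref{prop:pure-equilbiria-satbility} states the stability condition for $\boldsymbol{b}$ in terms of $\mathbb{E}_{\le u_2+1}(\theta_2)$, not $\theta_2(1)$, precisely because these differ outside the regime you assumed.
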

\begin{proof}[Sketch of proof. See Appendix \ref{subsec:Proof-of-Theorem-4} for a formal proof.]
Consider a slightly perturbed state $(1-\epsilon_{1},1-\epsilon_{2})$
near $\boldsymbol{a}=\left(1,1\right)$, where almost all agents play action $a_{i}$. The
probability of two rare actions ($b_{i}$) appearing in a 
new agent's sample is negligible, on the order of $O(\epsilon_{i}^{2})$.
For a new agent with a sample size of $k$, the probability of
observing a rare action is approximately $k\cdot\epsilon_{j}.$
This rare occurrence changes the new agent's perceived best response in population $i$ iff $1>(k-1)\cdot u_i$, which simplifies to $k<\frac{1}{u_i}+1$.

Thus, the
probability that a new agent in population $i$ adopts
the rare action $b_{i}$ is $\mathbb{E}_{<\frac{1}{u_{i}}+1}\left(\theta_{i}\right)\cdot\epsilon_j$.
The product of the shares of new agents adopting
the rare action in both populations is then $\mathbb{E}_{<\frac{1}{u_{1}}+1}\left(\theta_{1}\right)\cdot\epsilon_{2}\cdot\mathbb{E}_{<\frac{1}{u_{2}}+1}\left(\theta_{2}\right)\cdot\epsilon_{1}$.
This product determines whether the share of agents playing rare actions increases (indicating instability) or decreases (indicating asymptotic stability). Specifically, the system is unstable if $\mathbb{E}_{<\frac{1}{u_{1}}+1}\left(\theta_{1}\right)\cdot\mathbb{E}_{<\frac{1}{u_{2}}+1}\left(\theta_{2}\right)>1$, and asymptotically stable if the product is less than 1.

Observe that since we assume that $u_{1}\geq1$, it follows that $\frac{1}{u_{1}}+1\leq2$, implying that $\theta_{1}\left(1\right)=\mathbb{E}_{<\frac{1}{u_{1}}+1}\left(\theta_{1}\right)$. Thus, state $\textbf{a}$ becomes unstable if $\theta_1(1)\cdot\mathbb{E}_{<\frac{1}{u_{2}}+1}\left(\theta_{2}\right)>1$. By an analogous argument, state $\textbf{b}$ is unstable if $\theta_2(1)\cdot\mathbb{E}_{<u_1+1}\left(\theta_{2}\right)>1$. If both pure states are unstable, then we get global convergence to interior states with miscoordination.
\qedhere
\end{proof}
Observe that if both players prefer the same coordinated outcome (i.e., if $u_1,u_2>1$, then $\theta_{1}\left(1\right)\cdot\mathbb{E}_{
\leq\frac{1}{u_{2}}+1}\left(\theta_{2}\right)=\theta_{1}\left(1\right)\cdot\theta_{2}\left(1\right)<=1,$ with a strict inequality if either population has agents who sample more than one action). In this case, Part 2 of Theorem \ref{thm4} implies that the preferred state $\textbf{a}$ must be asymptotically stable. Therefore, global convergence to miscoordination can only occur if the players prefer different coordinated outcomes (i.e., if $u_2<1<u_2)$.

Theorem \ref{thm:global-mixed}  shows that global convergence to miscoordination requires heterogeneity in sample size within each population, including both agents with a sample size of one and those with larger samples (but not too large, as they have to allow a single observation of a rare action to influence
behavior). Specifically, in each population, the product of (1) the share of agents with a sample size of one and (2) the truncated expected sample size must be sufficiently large. Note that the farther the $u_i$'s are from one, the less restrictive the required truncated expected value becomes.

\begin{rem}

Several papers have  discussed the stability conditions of equilibria in terms of their level of risk-dominance (referred to as $q$-dominance  in \citealp{morris1995p,oyama2015sampling}). To facilitate comparison with the literature and provide a clearer interpretation of our results in the original (8-parameter) representation of a coordination game, we rephrase our results in terms of $q$-dominance in Appendix \ref{subsec:stability-pure-q-dominance}.
\end{rem}

\section{Discussion}\label{Sec-discussion}
\subsection{Extensions \label{Sec:Extensions}}

Our baseline model focuses on two-player two-action coordination games and on a specific class of sampling dynamics. Online Appendix \ref{online-appendix} extends our analysis 
to more than two players, more than two actions, and other classes of learning dynamics.

Appendix \ref{subsec-general-coord} shows that our baseline model captures all two-player two-action coordination games. Appendix \ref{sec-multiple-actions} shows that our main results  remain qualitatively similar in coordination games with more than two actions under the assumption that the payoffs on the main diagonal are positive, while all off-diagonal payoffs are set to zero. This important class of multi-player coordination games is called contracting games; see, e.g., \cite{young1998conventional}.  Appendix \ref{sec-multiple-players} generalizes our analysis to  the commonly studied minimum-effort coordination games with an arbitrary number of players  (\citealp{vanHuyck1990tacit}). 
In Appendix \ref{sec:logit} , we show that our result holds for logit dynamics (\citealp{fudenberg1995consistency}), where agents play a noisy best response. With homogeneous noise levels, stable miscoordination requires implausibly high noise. However, with heterogeneous noise levels, stable miscoordination can occur with moderate noise.

\subsection{Empirical Relevance}\label{subsec-empirical}
To demonstrate the empirical relevance of our findings, we need real-life scenarios with: (1) heterogeneous sample sizes, (2) persistent miscoordination within an interior state, and 
(3) a significant number of agents not maximizing payoffs. We suggest that these conditions may be present in bargaining situations in housing and used-car markets.

Coordination games model simple bargaining in these markets, where agents choose between high and low pricing strategies. For instance, a seller may demand a high price or settle for a low price, while a buyer may agree to a high price or insist on a low price, similar to hawk--dove coordination games (see Example \ref{exa:motivating-hawk--dove} in Appendix \ref{subsec-general-coord}). Markets often exhibit sample size heterogeneity, with professional players (real-estate investors and used-car dealers)
 having large samples and inexperienced players (those who have only bought or sold houses
or cars a couple of times) relying on anecdotal evidence.

\citet{koster2024housing} found that many Dutch housing sellers set low list prices, which aligns with profit maximization only if they have high annual discount factors (up to 50\%). This behavior matches a stable interior stable state in our model where some sellers use small, noisy samples to set non-payoff-maximizing low prices.\footnote{\citeauthor{koster2024housing}'s data does not allow separating between different types of sellers. By contrast, \citet{genesove2001loss} do allow separating the sellers into owner-occupants and investors, and show that the two groups have systematic differences in their list prices.
} 

\citet*{larsen2021efficiency} found that bargaining fails in about 35\% of wholesale used-car auctions where the highest bid is below the reserve price, consistent with a stable interior state with miscoordination. This inefficiency results in significant losses, representing 15\%–20\% of ex-post trade gains. While incomplete information about opponents' preferences is a common explanation for miscoordination, \citeauthor{larsen2021efficiency} suggests that it accounts for only a small portion of the observed inefficiencies.

\subsection{Testable Experimental Prediction}\label{subsec-experimental}

Our model offers a novel testable prediction: miscoordination can persist when agents have varying levels of information about opponents' aggregate behavior.  \cite{brunner2011stationary} compared learning models' predictive power using data from \cite{mckelvey2000effects}, \cite{goeree2003risk}, and \cite{selten2008stationary}, where agents (1) repeatedly play two-player games with anonymous opponents, and (2) rely on feedback from their most recent opponent. They show that sampling dynamics predict behavior as well as quantal response equilibrium in all games, outperform Nash equilibrium, and improve further in less noisy second halves of the experiments (\citealp{brunner2011stationary}, Figure 4).

\citeauthor{brunner2011stationary}'s results suggest that sample sizes of 3–12 fit the data best, aligning with typical estimates of short-term memory capacity. Global convergence to miscoordination (Theorem \ref{thm:global-mixed}) requires some agents to have a sample size of 1, which could be induced by varying the underlying game each round. In each round, subjects are informed of the current game’s payoff matrix and reminded of the opponent's behavior from the last time the same game was played. Many would likely rely only on this feedback (sample size 1), while a few might remember earlier feedback.

 \citeauthor{lyu-in-Xu-2022} (\citeyear{lyu-in-Xu-2022}) found that sampling dynamics predicted  80\% of subjects' behavior. Deviations, especially early on, were mainly due to subjects choosing the Pareto-dominant action to teach others to move from the Pareto-dominated equilibrium. These teaching incentives, which often succeeded, would likely have been reduced and the fit of sampling dynamics improved if (1) the equilibria had not been Pareto-ranked (e.g., battle of the sexes) or (2) the matching groups had been larger.
Testing the predictions of Theorems \ref{thm:locally-stable_interior_if_pure_is_unstable} and \ref{thm:locally-stableinterior_ASD}
regarding the local stability of miscoordination requires both an appropriate initial state (as in \citealp{lyu-in-Xu-2022}) and a setup where some agents receive feedback about the entire population's behavior, while others receive feedback only about their matched opponents.

\subsection{Asymptotic Behavior of Sampling Dynamics}
It is well known that sampling dynamics converge to best-response dynamics for large sample sizes, under fixed game parameters. Specifically, there exists a sufficiently large minimal sample size $\bar k$ such that, if all agents have sample sizes greater than $\bar k$, the dynamics resemble perturbed best-response dynamics, with globally stable pure equilibria.
By contrast, Theorem \ref{thm:locally-stableinterior_ASD} reveals that \emph{sampling dynamics remain qualitatively different from best responding
when considering a different order of limits}. Specifically, if one fixes an arbitrarily large minimal sample size $\bar k$, then there exists a bound on the game parameters $\bar u$, such that if $\frac{1}{u_2},u_1>\bar u$, then  there exists an asymptotically stable interior state with a high level (>50\%) of miscoordination for populations in which all agents have sample sizes larger than $\bar k$.
\subsection{Best Experienced Payoff-Sampling Dynamics}\label{subsec-BEP}
A different type of sampling dynamics in the literature is best experienced payoff-sampling dynamics (\citealp* {osborne1998games,mantilla2018efficiency,sandholm2019best,sandholm2020stability,Raj,arigapudi2020instability, izquierdo2022stability}). Under these dynamics, each new agent  with sample size $k$
observes the payoffs obtained by agents of her own population $i$ in (1) $k$ interactions
in which the incumbents played $a_i$, and (2) $k$ interactions in which the incumbents
played $b_i$. Following these observations, the new agent adopts the action that yielded the higher mean payoff. Arguably, these dynamics capture learning in situations in which new agents either may not be able to observe opponents’ actions or may lack information about the payoff matrix.\footnote{
\citealp{cardenas2015stable} show that best experienced payoff-sampling dynamics provide a good fit for explaining experimental behavior in common pool resource games.}

\cite{arigapudi2022sampling} study the stability of miscoordination under best experienced payoff-sampling dynamics in the subclass of asymmetric coordination games known as hawk--dove games.\footnote{
The dependency of best experienced payoff-sampling dynamics on payoffs per se, rather than on payoff differences, makes it more challenging to extend the results to other types of coordination games.} The main finding is that heterogeneity facilitates stable miscoordination also in best experienced payoff-sampling dynamics, but unlike in our setup, heterogeneity is not necessary for asymptotically stable miscoordination.

Specifically, comparing \citeauthor{arigapudi2022sampling}'s Theorem 1 and our Theorem \ref{thm4} shows that in any heterogeneous environment, if miscoordination is globally stable  under the sampling dynamics studied in this paper, then miscoordination is also globally stable under the best experienced payoff-sampling dynamics. In addition, under the best experienced payoff-sampling dynamics, global stability is achieved in heterogeneous populations if a sufficiently large share of agents have a sample size of either 1 or 2, whereas under the sampling dynamics studied in this paper, global stability is achieved only if a sufficiently large share of agents has a sample size of 1.

\citeauthor{arigapudi2022sampling}'s Theorem 2 demonstrates that, under best experienced payoff-sampling dynamics, homogeneous populations with relatively small sample sizes can induce asymptotically stable miscoordination in some environments. This contrasts with our setup, where Theorem \ref{thm:global-stability-pure} shows that heterogeneity is necessary for achieving asymptotically stable miscoordination.

The intuition is as follows. Under the sampling dynamics studied in this paper, the learning dynamics $w_i(p_j),$ Eq. \eqref{eq:action-sampling-dyanmics-g} depend on a single binomial random variable, which induces a unimodal sensitivity to perturbations, keeping the interior state close to the peak of the slope. This implies that small perturbations gradually increase. By contrast, best experienced payoff-sampling dynamics depend on the difference between two binomial random variables (as each agent compares two samples of size $k$, one when playing $a_i$ and one when playing $b_i$), allowing for bimodal distributions.
\vspace{-5px}
\section{Related Literature}\label{sec:related}
\vspace{-5px}
In this section, we survey the two strands of literature most relevant to our paper: (1) the instability of miscoordination and (2) sampling dynamics.
\paragraph{Instability of Miscoordination}
It is well known that 
strict equilibria satisfy strong
stability 
refinements, while mixed equilibria with miscoordination
do not even meet weak stability refinements. Specifically, strict equilibria satisfy the strong refinement of evolutionary stability (\citealp*{smith1973logic}),
whereas  mixed equilibria fail to satisfy
neutral stability (\citealp*{smith1982evolution}) or the milder
refinement of weak stability (\citealp*{heller2017instability}). Moreover,
it is well known that interior stationary states in 
all multiple-population
games cannot be asymptotically stable under the widely studied replicator
dynamics 
(\citealp[Proposition 2]{ritzberger1995evolutionary}).
Instability of interior
states is further documented for various classes of learning dynamics in
\citet*{crawford1989learning}.

Moreover, various papers in the literature have proven that populations playing coordination games 
converge to one of the pure equilibria from almost all initial states under various dynamics.
\citet*{kaniovski1995learning} show that there is global convergence to one
of the pure equilibria under sampling dynamics with sufficiently
large samples.
\citet*{oprea2011separating} show that there is global convergence to one of the pure equilibria under monotone dynamics (i.e., under any dynamics in which
an action becomes more frequent iff it yields a higher payoff than the alternative action).\footnote{\citet*{oprea2011separating} demonstrated this in the context of hawk--dove games, but the proof can be extended to encompass all coordination games.} 
The stochastic evolutionary dynamics literature (pioneered by \citealp*{kandori1993learning,young1993evolution};
see also the recent application to hawk--dove games in
\citealp*{bilancini2022memory}) shows that only pure equilibria
can be stochastically stable in large finite populations, where agents typically best respond to a large sample but occasionally err.

Conventional wisdom from the above results is that
interior states with miscoordination are unstable and unlikely to result from social learning (as summarized by \citealp{myerson2015tenable}, see Footnote \ref{footnote1}). We show that
this is not true 
under plausible learning dynamics in which agents base
their behavior on sampling actions of other agents, and there is
significant heterogeneity in sample size across the population.

\paragraph{Literature on Sampling Dynamics}
Sampling dynamics 
(aka sampling best-response dynamics or action-sampling dynamics) were introduced by \citet*{sandholm2001almost}
and \citet*{osborne2003sampling}. As argued by \citet*{oyama2015sampling},
the deterministic nature of sampling dynamics implies that when
there is convergence to a stable state 
(which is always the case in our setup; see Part 5 of Fact \ref{fact-w-stable} and Proposition \ref{prop-w-stable}), the convergence is fast.\footnote{Conditions in which stochastic dynamics induce fast convergence
are studied in \citet*{kreindler2013fast} and \citet*{arieli2020speed}.}  
\citet*{heller2018social} studied the conditions on the expected sample
size that induce global convergence for all payoff functions and
all sampling dynamics.
Recently, \cite{danenberg2022representative} introduced  a variant of these dynamics that is based on representative sampling with normal noise.\footnote{\citet*{hauert2018effects} use the term ``sampling dynamics'' to refer to a variant of replicator dynamics in which two agents are likely to be matched if one agent samples and mimics the other agent's behavior. Their conception of sampling dynamics differs from ours and from the literature cited above.} \citet*{arieli2024private} study sampling dynamics in a \textit{social learning} setting and provide conditions under which there is efficient product adoption.

\citet*{salant2020statistical} 
generalized sampling dynamics by allowing agents to use various
procedures to infer  opponents' aggregate behavior from their samples (the analysis is further extended in \citealp{sawa2021statistical};
see also \citealp{patil2024optimal}, who introduce a designer that can endogenize an agent's sample size). \citeauthor*{salant2020statistical} focused on unbiased inference procedures in which the agent's expected
belief about the share of opponents who play 
an action coincides with the
sample mean. 
Examples of unbiased procedures are maximum likelihood
estimation, beta estimation with a prior representing complete ignorance
and a truncated normal posterior around the sample mean. In our setup,
the payoffs are linear in the share of agents who play 
action $a_j$, which
implies that the agent's perceived best response depends only on the
expectation of her posterior belief. 
This implies that the behavior of agents in our model can be interpreted as each new agent utilizing her own sample to calculate  an 
arbitrary unbiased estimation procedure (such as maximum likelihood estimation) for the overall behavior of the opposing population.

\section{Conclusion\label{sec:Conclusion}}
Conventional wisdom, supported by key results in evolutionary game theory, posits that only pure (coordinated) outcomes are reasonable  long-run predictions of behavior in coordination games. 
By contrast, we show that plausible learning dynamics, in which new agents rely on samples to estimate and best respond to the behavior of the opponents' population, can induce stable miscoordination.
This happens if there is heterogeneity in sample size: some agents
have accurate information from large samples of the opponents' aggregate behavior,
while other agents rely on smaller samples.

Our baseline model examines two-player two-action coordination games and employs a specific class of sampling dynamics. Online Appendix \ref{online-appendix}  shows that our qualitative results hold in a much broader setup: more than two players, more than two actions, and under various classes of learning dynamics. 

Our key insight is that heterogeneous noise levels can induce stable miscoordination. Such heterogeneity is realistic in many real-world contexts, like housing markets, where professional investors and inexperienced participants interact. Furthermore, our model’s predictions can be experimentally tested, as discussed in Section \ref{Sec-discussion}.

\appendix
\section{Appendix: Formal Proofs}\label{appendix-proofs}
\subsection{General Result for Binomial Distributions\label{subsec:General-Result-forbinomial}}

Recall our notation both of $X(k,p)\sim Bin\left(k,p\right),$ denoting
a random variable with binomial distribution, and of the function
$F_{m}^{k}\left(p\right)\equiv\Pr\left(X\left(k,p\right)\geq m\right)$.
We next prove a result on the uniqueness of interior fixed points of binomial distributions. This result, which may be of independent interest, plays a crucial role in the proof of Theorem \ref{thm:global-stability-pure}.
\begin{prop}\label{prop-binomial}
Fix arbitrary integers satisfying $0<m_{1}\leq k_{1}$
and $0<m_{2}\leq k_{2}$. There exists at most one $p\in\left(0,1\right)$
such that $\left(F_{m_{1}}^{k_{1}}\circ F_{m_{2}}^{k_{2}}\right)\left(p\right)=p$.
\end{prop}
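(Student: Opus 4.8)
The plan is to mimic the mechanism behind Green's theorem (Fact \ref{fact-green}) but apply it to the composition. Write $G=F_{m_{2}}^{k_{2}}$ and $H=F_{m_{1}}^{k_{1}}$, and set $\Phi(p)=H(G(p))-p$. Since $G(0)=H(0)=0$ and $G(1)=H(1)=1$, we have $\Phi(0)=\Phi(1)=0$, so it suffices to show that $\Phi$ has at most one zero in $(0,1)$. Differentiating gives $\Phi'(p)=U(p)-1$, where $U(p):=H'(G(p))\,G'(p)$, and the entire argument reduces to controlling the shape of $U$.

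First I would note that if $\Phi$ had two distinct interior zeros $p_{1}<p_{2}$, then $\Phi$ would vanish at $0,p_{1},p_{2},1$, and Rolle's theorem would force $\Phi'$ — equivalently the equation $U(p)=1$ — to have at least three solutions in $(0,1)$. Hence the crux of the proof is to show that $U(p)=1$ has at most two solutions, and for this it is enough to prove that $U$ is unimodal on $(0,1)$, since a unimodal function meets any horizontal level at most twice.

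The key step is therefore to establish that $U$ is log-concave. Using $H'(q)\propto q^{m_{1}-1}(1-q)^{k_{1}-m_{1}}$ and $G'(p)\propto p^{m_{2}-1}(1-p)^{k_{2}-m_{2}}$ and taking logarithms yields
\[
\log U(p)=\text{const}+(m_{1}-1)\log G(p)+(k_{1}-m_{1})\log\bigl(1-G(p)\bigr)+(m_{2}-1)\log p+(k_{2}-m_{2})\log(1-p).
\]
The last two terms are concave with nonnegative coefficients. The remaining two are concave provided that both $G=F_{m_{2}}^{k_{2}}$ and $1-G$ are log-concave in $p$. This is the main obstacle, and the part that is ``of independent interest'': I would prove it by observing that $(F_{m_{2}}^{k_{2}})'(p)\propto p^{m_{2}-1}(1-p)^{k_{2}-m_{2}}$ is a Beta density, which is log-concave (its logarithm has second derivative $-\tfrac{m_{2}-1}{p^{2}}-\tfrac{k_{2}-m_{2}}{(1-p)^{2}}\le 0$), and then invoking the standard fact that both the CDF and the survival function of a log-concave density are log-concave. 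Log-concavity of $1-F_{m_{2}}^{k_{2}}$ can alternatively be reduced to the CDF case through the identity $1-F_{m}^{k}(p)=F_{k-m+1}^{k}(1-p)$. Thus $\log U$ is a sum of concave functions with nonnegative weights, so $U$ is log-concave and hence unimodal.

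Finally I would assemble the pieces: unimodality of $U$ bounds the solutions of $U=1$ by two, which via the Rolle argument rules out two interior zeros of $\Phi$ and gives at most one interior fixed point. One degenerate case needs a remark: if $k_{1}=k_{2}=1$ then $F_{1}^{1}$ is the identity and $U\equiv 1$, so every point is fixed; this case is excluded in the application (where $k_{i}>1$), and whenever $U\not\equiv 1$ the polynomial $U-1$ has finitely many roots, so the unimodality bound of two applies. I expect the log-concavity of the binomial CDF and survival function to be the only genuinely delicate ingredient; everything else is monotonicity and Rolle bookkeeping.
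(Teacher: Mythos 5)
Your proof is correct, and it shares the paper's skeleton---Rolle's theorem applied to $\Phi(p)=H(G(p))-p$, followed by a one-crossing/unimodality property of $U=\Phi'+1=(H\circ G)'$---but the key lemma is established by a genuinely different argument. The paper applies Rolle \emph{twice} (four zeros of $\Phi$ force at least two interior zeros of $\Phi''$) and then shows by direct computation that $\Phi''=(H\circ G)''$ vanishes at most once: after substituting the closed form $\bigl(F_{m}^{k}\bigr)'(p)=m\binom{k}{m}p^{m-1}(1-p)^{k-m}$, it rearranges $(H\circ G)''(p)=0$ into an equation whose left-hand side is strictly decreasing and whose right-hand side is strictly increasing in $p$. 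You stop one derivative earlier and instead prove that $U$ is log-concave, via the Beta-density form of $G'$ and $H'$ together with the standard fact (Bagnoli--Bergstrom/Pr\'ekopa) that CDFs and survival functions of log-concave densities are log-concave. The two routes establish essentially the same underlying fact---your log-concavity says $(\log U)'$ is non-increasing, the paper's rearrangement says $(\log U)'=0$ has at most one solution---so the difference is in proof technology rather than strategy. Your version is more conceptual, replaces the paper's ``one can verify'' algebra with a citable structural result, and generalizes beyond binomial tails (the argument only needs the outer function to have a Beta-type density and the inner function to be a CDF with log-concave density); the paper's version buys self-containedness, using nothing beyond elementary calculus. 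One further point in your favor: you explicitly flag the degenerate case $k_1=k_2=1$, where $U\equiv 1$, the composition is the identity, and the proposition as literally stated actually fails (every interior point is fixed). The paper's proof silently degenerates there as well (both sides of its monotone equation become identically zero), and both arguments are rescued only because the application assumes $k_i>1$.
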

\setcounter{prop}{3}
\begin{proof}
Let $w_{i}\left(p\right)\equiv F_{m_{i}}^{k_{i}}\left(p\right)$ for
each $i\in\left\{ 1,2\right\} $, $F\left(p\right)\equiv\left(F_{m_{1}}^{k_{1}}\circ F_{m_{2}}^{k_{2}}\right)\left(p\right)\equiv\left(w_{1}\circ w_{2}\right)\left(p\right),$ and
$G\left(p\right)=F\left(p\right)-p$. In what follows, we show that the function $G(\cdot)$ has at most one $p\in\left(0,1\right)$ such that $G\left(p\right)=0,$ which proves the result.

We have $G\left(0\right)=G\left(1\right)=0$. Assume to the contrary
that there exist two different interior points $0<\underline{p}<\overline{p}<1$
such that $G\left(\underline{p}\right)=G\left(\overline{p}\right)$.
Then $G$ equals zero at four points in the interval $\left[0,1\right]$.
By Rolle's theorem, this implies that $G'$ is equal to zero in at least
three points in the interval $\left(0,1\right),$ which further
implies that $G''$ is equal to zero in at least two interior points
in the interval $\left(0,1\right)$. Observe that $G''\equiv F''$.
Thus, in order to obtain a contradiction, we have to show that $F''\left(p\right)=0$
in at most one interior point. Recall that (see, e.g., \citealp[Eq. (5)]{green1983fixed})
\vspace{-10px}\begin{equation}\label{eq:green1983_eqn}
w'_{i}\left(p\right)=m_{i}\left(\begin{array}{c}
k_{i}\\
m_{i}
\end{array}\right)p^{m_{i}-1}\left(1-p\right)^{k_{i}-m_{i}}\,\Rightarrow\,\frac{w''_{i}\left(p\right)}{w'_{i}\left(p\right)}=\frac{m_{i}-1}{p}-\frac{k_{i}-m_{i}}{1-p}.
\end{equation}
For $p\in (0,1),$ using Eq. \eqref{eq:green1983_eqn}, we compute as follows:
\vspace{-10px}
\begin{align*}
F'\left(p\right) &= w_{1}'\left(w_{2}\left(p\right)\right)w_{2}'\left(p\right)\\
F''\left(p\right)&=w_{1}''\left(w_{2}\left(p\right)\right)\left(w_{2}'\left(p\right)\right)^{2}+w_{1}'\left(w_{2}\left(p\right)\right)w_{2}''\left(p\right)\\
&= w_{1}'\left(w_{2}\left(p\right)\right)w_{2}'\left(p\right)\left[\left(\frac{m_{1}-1}{w_{2}\left(p\right)}-\frac{k_{1}-m_{1}}{1-w_{2}\left(p\right)}\right)w_{2}'\left(p\right)+\frac{m_{2}-1}{p}-\frac{k_{2}-m_{2}}{1-p}\right].
\end{align*}

The fact that each $w_{i}\left(p\right)$ is strictly increasing implies
that $F''\left(p\right)=0$ iff 
\[
\left(\frac{m_{1}-1}{w_{2}\left(p\right)}-\frac{k_{1}-m_{1}}{1-w_{2}\left(p\right)}\right)w_{2}'\left(p\right)=\frac{k_{2}-m_{2}}{1-p}-\frac{m_{2}-1}{p}\Leftrightarrow
\]
\[
m_{2}\left(\begin{array}{c}
k_{2}\\
m_{2}
\end{array}\right)\left(\frac{m_{1}-1}{w_{2}\left(p\right)}-\frac{k_{1}-m_{1}}{1-w_{2}\left(p\right)}\right)p^{m_{2}-1}\left(1-p\right)^{k_{2}-m_{2}}=\frac{k_{2}-m_{2}}{1-p}-\frac{m_{2}-1}{p}\Leftrightarrow
\]
{\footnotesize{}
\[
\!\!\!\!\!\!\,\,\!\!m_{2}\left(\begin{array}{c}
k_{2}\\
m_{2}
\end{array}\right)\left(\frac{\left(m_{1}-1\right)p^{m_{2}}\left(1-p\right)^{k_{2}-m_{2}}}{\sum_{l=m_{2}}^{k_{2}}\left(\begin{array}{c}
k_{2}\\
l
\end{array}\right)p^{l}\left(1-p\right)^{k_{2}-l}}-\frac{\left(k_{1}-m_{1}\right)p^{m_{2}}\left(1-p\right)^{k_{2}-m_{2}}}{\sum_{l=0}^{m_{2}-1}\left(\begin{array}{c}
k_{2}\\
l
\end{array}\right)p^{l}\left(1-p\right)^{k_{2}-l}}\right)\frac{1}{p}=\frac{k_{2}-m_{2}}{1-p}-\frac{m_{2}-1}{p}\Leftrightarrow
\]
}
\vspace{-20px}
\[
m_{2}\left(\begin{array}{c}
k_{2}\\
m_{2}
\end{array}\right)\left(\frac{\left(m_{1}-1\right)}{\sum_{l=m_{2}}^{k_{2}}\left(\begin{array}{c}
k_{2}\\
l
\end{array}\right)\left(\frac{p}{1-p}\right)^{l-m_{2}}}-\frac{k_{1}-m_{1}}{\sum_{l=0}^{m_{2}-1}\left(\begin{array}{c}
k_{2}\\
l
\end{array}\right)\left(\frac{1-p}{p}\right)^{m_{2}-l}}\right)\frac{1}{p}=\frac{k_{2}-m_{2}}{1-p}-\frac{m_{2}-1}{p}.
\]
One can verify that the left-hand side of the above equation is strictly decreasing and the right-hand side is strictly increasing in $p.$ Therefore, there can be at most one point $p^{\ast}\in (0,1)$ where $F''(p^{\ast})=0.$ This completes the proof.\qedhere
\end{proof}

\subsection{Standard Definitions of Dynamic Stability\label{sub-standard-definitions}}

For completeness, we present some standard definitions
of dynamic stability that are used in the paper (see, e.g., \citealp[Chapter 5]{weibull1997evolutionary}).

A state is said to be stationary if it is a rest point of the dynamics. 
\begin{defn}
\label{def:equilbirium}\textit{One-population dynamics}:
$p^{*}\in\left[0,1\right]$
is \emph{stationary} if $w(p^{*})=p^{*}$.\\
\textit{Two-population dynamics}: $\mathbf{p}^{*}\in\left[0,1\right]^{2}$
is \emph{stationary} if $w_{i}(\mathbf{p}^{*})=p_{i}^{*}$
for each $i\in\left\{ 1,2\right\} $.
\end{defn}
A stationary state is Lyapunov stable if a population beginning near $\mathbf{p}^{*}$ stays close to it. $\mathbf{p}^{*}$ is asymptotically stable if it is Lyapunov stable and, in addition, nearby states eventually converge to it. A state is unstable if it is not Lyapunov stable. Formally (where for two-population dynamics the states should be boldfaced):
\begin{defn}
\label{def:lyaponouv-stability}
A stationary state $p^{*}$
is \emph{Lyapunov stable} if for every neighborhood $U$ of $p^{*}$
there is a neighborhood $V\subseteq U$ of $p^{*}$ such
that if the initial state $p\left(0\right)\in V$, then $p\left(t\right)\in U$
for all $t>0$. A state is \emph{unstable} if it is not Lyapunov stable. 
\end{defn}
\begin{defn}
A stationary state $p^{*}$ is \emph{asymptotically
stable} (or \emph{locally stable}) if it is Lyapunov stable and there
is some neighborhood $U$ of $p^{*}$ such that all trajectories
initially in $U$ converge to $p^{*}$; i.e., $p\left(0\right)\in U$
implies that $\lim_{t\rightarrow\infty}p\left(t\right)=p^{*}$. 
\end{defn}

\subsection{Proof of Proposition \ref{prop-w-stable} (Asymptotically Stable States)}\label{app-proof-of-Prop-1} 
The following three lemmas are helpful in proving  Proposition \ref{prop-w-stable} (and later results).
Our first lemma shows that any trajectory reaches the area between the curves.
\begin{lem}\label{lem:conv-to-area-between-curves}
Either $\lim_{t\rightarrow\infty}\mathbf{p}\left(t\right)$ exists and is a stationary state, or there exists
$t<\infty$ such that either $w_{1}^{-1}\left(p_{1}\left(t\right)\right)\leq p_{2}\left(t\right)\leq w_{2}\left(p_{1}\left(t\right)\right)$
or $w_{2}\left(p_{1}\left(t\right)\right)\leq p_{2}\left(t\right)\leq w_{1}^{-1}\left(p_{1}\left(t\right)\right)$.
\end{lem}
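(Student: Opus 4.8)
The plan is to analyze the two-dimensional phase plane directly using the sign conditions implied by Eq.~\eqref{eq:action-sampling-dyanmics-g}. Recall that $\dot p_1 > 0$ iff $p_1 < w_1(p_2)$, equivalently iff $p_2 > w_1^{-1}(p_1)$ (using that $w_1$ is strictly increasing), and $\dot p_2 > 0$ iff $p_2 < w_2(p_1)$. Thus the two curves $p_2 = w_1^{-1}(p_1)$ and $p_2 = w_2(p_1)$ partition the unit square into regions on which the sign pattern of $(\dot p_1, \dot p_2)$ is constant. The ``area between the curves'' referred to in the statement is precisely the set where the two velocity components have opposite horizontal/vertical tendencies, i.e. where $w_1^{-1}(p_1)$ and $w_2(p_1)$ sandwich $p_2$ (in either order). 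The goal is to show that a trajectory either converges to a stationary point or must enter this sandwiched region in finite time.

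First I would dispose of the degenerate case: if the trajectory's limit exists and is stationary, we are in the first alternative and there is nothing to prove (this is consistent with Part 5 of Proposition~\ref{prop-w-stable}, which may be proven alongside or assumed). So I would suppose the trajectory never enters the region between the curves for all $t$, and derive that it must then converge to a stationary state anyway. Outside the sandwiched region, both $p_2 - w_1^{-1}(p_1)$ and $p_2 - w_2(p_1)$ have the \emph{same} sign, which forces $(\dot p_1, \dot p_2)$ into one of two ``diagonal'' sign patterns: either $\dot p_1 > 0, \dot p_2 < 0$ throughout, or $\dot p_1 < 0, \dot p_2 > 0$ throughout (the argument is symmetric between these). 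In such a region $p_1$ is monotone increasing and $p_2$ monotone decreasing (or vice versa), each bounded in $[0,1]$, so both coordinates converge to limits $p_1^\ast, p_2^\ast$; by continuity of the dynamics the limit point is stationary. The key point is that the trajectory cannot oscillate: monotonicity of each coordinate (while it stays outside the sandwich region) prevents it from leaving one diagonal region and entering the opposite one without first crossing one of the two curves, i.e. passing through the sandwiched region.

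The main obstacle, and the step I would be most careful about, is verifying that remaining outside the sandwiched region genuinely forces \emph{strict} monotonicity of both coordinates until a stationary point is reached, and ruling out the trajectory asymptoting to one of the curves without the limit being stationary. Here I would use that $w_1, w_2$ are continuous (indeed polynomial, hence $w_1^{-1}$ is continuously differentiable, as noted after Eq.~\eqref{eq:action-sampling-dyanmics-g}), so the two boundary curves are continuous; if the monotone coordinates converge to $(p_1^\ast, p_2^\ast)$ lying strictly off both curves, then the velocity would be bounded away from zero near the limit, contradicting convergence. Hence the limit must lie on at least one curve where the corresponding velocity component vanishes; a short argument then shows it must lie on \emph{both}, giving stationarity. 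I would also handle boundary behavior on the edges of the unit square using $w_i(0)=0$ and $w_i(1)=1$, which ensure the pure states $\mathbf 0, \mathbf 1$ are stationary and that trajectories cannot escape $[0,1]^2$.
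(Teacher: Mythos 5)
Your proposal is correct and follows essentially the same route as the paper's proof: classify the regions cut out by the two curves via the signs of $(\dot p_1,\dot p_2)$, observe that outside the sandwiched region the trajectory moves monotonically in both coordinates (down-right when above both curves, up-left when below both), and conclude that it either crosses into the sandwiched region in finite time or, by monotone convergence plus continuity of the vector field, converges to an intersection point of the two curves. The only difference is presentational: you spell out the final step (a limit lying off either curve would force the corresponding velocity component to be bounded away from zero, contradicting convergence), which the paper states more tersely.
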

\begin{proof}
We say that state $\textbf{p}$
is \emph{above} (resp., \emph{below}) curve $w_{2}(p_{1})$ if $p_{2}>w_{2}(p_{1})$
(resp., ($p_{2}<w_{2}(p_{1})$). Similarly, we say that state $\textbf{p}$
is to the \emph{right} (resp., \emph{left}) of the curve $w_{1}(p_{2})$
if $p_{1}>w_{1}(p_{2})$ (resp., $p_{1}<w_{1}(p_{2})$). We say that
state $\textbf{p}$ is on the curve $w_{i}(p_{j})$ if $p_{i}=w_{i}(p_{j}).$
Since both curves are strictly increasing, we define being above a curve as being to the left of it, and being below a curve as being to the right of it. The states on the curve $w_{2}(p_{1})$
(resp., $w_{1}(p_{2})$) are characterized by having $\dot{p_{2}}=0$
(resp., $\dot{p_{1}}=0$). Observe that $\dot{p_{2}}>0$ (resp., $\dot{p_{2}}<0$)
in any state $\textbf{p}$ above and to the left (resp., below and
to the right) of the curve $w_{2}(p_{1})$. Similarly, $\dot{p_{1}}>0$
(resp., $\dot{p_{1}}<0$) in any state $\textbf{p}$ above and to
the left (resp., below and to the right) of the curve $w_{1}(p_{2})$. 

Any state $\textbf{p}\in[0,1]$ can be classified in one of $9=3\cdot3$
classes, depending on its relative location with respect to the two
curves, i.e., whether $\textbf{p}$ is below, above, or on each of
the two curves $w_{i}(p_{j})$. If state $\textbf{p}$ is on (resp.,
above, below) the curve $w_{2}(p_{1})$, then $\dot{p_{2}}$ is zero
(resp., negative, positive). Similarly, if state $\textbf{p}$ is
on (resp., above, below) the curve $w_{1}(p_{2})$, then $\dot{p_{1}}$
is zero (resp., positive, negative).
In particular, any state $\textbf{p}$ that is above (resp., below)
both curves must satisfy $\dot{p_{1}}>0>\dot{p_{2}}$. This implies
that any trajectory that begins above (resp., below) both curves must
always move downward and to the right. 
This (together with the fact
that $\dot{\boldsymbol{p}}\left(t\right)=(0,0)$  only at the intersection points of the two curves)
implies that the trajectory must either converge to a stationary state (i.e., to an intersection point of the two curves) or cross one of the curves, and reach 
a state $\boldsymbol{p}\left(t\right)$ that satisfies either $w_{1}^{-1}\left(p_{1}\left(t\right)\right)\leq p_{2}\left(t\right)\leq w_{2}\left(p_{1}\left(t\right)\right)$
or $w_{2}\left(p_{1}\left(t\right)\right)\leq p_{1}\left(t\right)\leq w_{1}^{-1}\left(p_{1}\left(t\right)\right)$.
\end{proof}
Our second lemma shows that any trajectory that reaches  the area between the curves converges to a neighboring stationary state.
\begin{lem}
\label{lem:convergence-to-which-neighbour}Let $\boldsymbol{p}\left(t\right)$
be an interior state for some $t\geq0$ and let $\underline{\boldsymbol{p}},\boldsymbol{\overline{p}}$ be neighboring stationary states with $\underline{p}_{i}\leq p_i(t)\leq \overline{p}_{i}$. We have
\begin{enumerate}
\item If $p_{2}\left(t\right)\in\left[w_{1}^{-1}\left(p_{1}\left(t\right)\right),w_{2}\left(p_{1}\left(t\right)\right)\right]$,
then $\lim_{t\rightarrow\infty}\mathbf{p}\left(t\right)=\overline{\boldsymbol{p}}$,
and
\item If $p_{2}\left(t\right)\in\left[w_{2}\left(p_{1}\left(t\right)\right),w_{1}^{-1}\left(p_{1}\left(t\right)\right)\right]$,
then $\lim_{t\rightarrow\infty}\mathbf{p}\left(t\right)=\underline{\boldsymbol{p}}$.
\end{enumerate}
\end{lem}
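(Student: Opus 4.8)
The plan is to treat case~1 in full---where both coordinates turn out to be weakly increasing, pushing the trajectory up and to the right toward $\overline{\boldsymbol{p}}$---and to obtain case~2 by the mirror-image argument. First I would convert the regional hypothesis of case~1 into sign information on the field, exactly as in the proof of Lemma~\ref{lem:conv-to-area-between-curves}. From $p_{2}(t)\le w_{2}(p_{1}(t))$ one gets $\dot{p}_{2}=w_{2}(p_{1})-p_{2}\ge 0$, and from $p_{2}(t)\ge w_{1}^{-1}(p_{1}(t))$, applying the strictly increasing map $w_{1}$ to both sides yields $w_{1}(p_{2})\ge p_{1}$, i.e.\ $\dot{p}_{1}=w_{1}(p_{2})-p_{1}\ge 0$. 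Hence throughout this region the motion is weakly up and to the right.

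Next I would establish forward-invariance of the closed lens $R=\{(p_{1},p_{2}):\underline{p}_{1}\le p_{1}\le\overline{p}_{1},\ w_{1}^{-1}(p_{1})\le p_{2}\le w_{2}(p_{1})\}$, whose only two corner points are the neighboring stationary states $\underline{\boldsymbol{p}}$ and $\overline{\boldsymbol{p}}$; indeed, the curves $w_{2}$ and $w_{1}^{-1}$ meet exactly at stationary states, and by the neighbor hypothesis they do not meet strictly between $\underline{\boldsymbol{p}}$ and $\overline{\boldsymbol{p}}$. On the upper boundary $p_{2}=w_{2}(p_{1})$ we have $\dot{p}_{2}=0$ and $\dot{p}_{1}\ge 0$, so, $w_{2}$ being strictly increasing, the trajectory is carried to a point strictly below this curve; on the lower boundary $p_{2}=w_{1}^{-1}(p_{1})$ we have $\dot{p}_{1}=0$ and $\dot{p}_{2}\ge 0$, carrying the trajectory strictly above it. Thus neither curve can be crossed, and since the lens pinches to the rest points $\underline{\boldsymbol{p}},\overline{\boldsymbol{p}}$ at its two ends---points that can be neither reached in finite time nor crossed, by uniqueness of solutions of the ODE---the trajectory remains in $R$ for all later times.

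I would then conclude by monotone convergence. Both $p_{1}(\cdot)$ and $p_{2}(\cdot)$ are non-decreasing and bounded above (by $\overline{p}_{1}$ and $\overline{p}_{2}$), so $\lim_{t\rightarrow\infty}\mathbf{p}(t)=\mathbf{p}^{*}$ exists; since the field is continuous, a bounded monotone trajectory of this autonomous system must converge to a rest point, so $\mathbf{p}^{*}$ is stationary and lies in $R$. The only stationary states in $R$ are its two corners, and because the trajectory is non-decreasing in both coordinates with $\mathbf{p}(t)\ge\underline{\boldsymbol{p}}$, the limit cannot be the lower corner unless $\mathbf{p}(t)=\underline{\boldsymbol{p}}$ already (a degenerate case that holds trivially); hence $\mathbf{p}^{*}=\overline{\boldsymbol{p}}$. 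Reflecting every inequality gives case~2, where both coordinates are non-increasing and the trajectory converges to $\underline{\boldsymbol{p}}$.

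The main obstacle is the careful handling of the pinch points: showing that the trajectory neither escapes the lens through a neighborhood of a stationary corner nor reaches a corner in finite time, and that a bounded monotone planar trajectory necessarily limits to a rest point rather than accumulating on a nontrivial set. Both facts rest on uniqueness of solutions of the autonomous ODE and on continuity of the field; everything else reduces to the elementary sign computations above together with the monotonicity of $w_{1}$, $w_{2}$, and $w_{1}^{-1}$ already recorded in the text.
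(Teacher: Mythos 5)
Your proof is correct and takes essentially the same route as the paper's: sign analysis of the vector field in the region between the two curves (motion weakly up and to the right in case 1), the observation that the trajectory cannot cross either curve, and monotone bounded convergence to the corner rest point $\overline{\boldsymbol{p}}$, using that the only stationary states in the closed lens are its two corners. Your write-up is if anything more careful than the paper's—explicit forward-invariance of the closed lens, the uniqueness/no-finite-time-arrival point at the corners, and the degenerate case $\mathbf{p}(t)=\underline{\boldsymbol{p}}$—but the underlying argument is the same.
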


\begin{proof}
We first show why we can assume w.l.o.g. that $\boldsymbol{p}\left(t\right)$
is strictly between the two curves. If $\boldsymbol{p}\left(t\right)$
crosses one of the curves and is strictly above (resp., below) the
other curve, then the dynamics must take the populations to a
state that is strictly below one of the curves and strictly above
the other curve. This is so because at the intersection point one
of the $\dot{p_{i}}$'s is zero and the other derivative $\dot{p}_j$
is negative (resp., positive), which implies that the dynamics take
the trajectory below and to the right (resp., above and to the left)
of the curve that was crossed.

\noindent Next assume that $p_{i}\left(t\right)\in\left(w_{1}^{-1}\left(p_{1}\left(t\right)\right),w_{2}\left(p_{1}\left(t\right)\right)\right)$
(resp., $p_{i}\left(t\right)\in\left(w_{2}\left(p_{1}\left(t\right)\right),w_{1}^{-1}\left(p_{1}\left(t\right)\right)\right).$
By the classification presented in the proof of Lemma \ref{lem:conv-to-area-between-curves},
the trajectory must move upward and to the right, i.e., $\dot{p_{1}},\dot{p_{2}}>0$
(resp., downward and to the left, i.e., $\dot{p_{1}},\dot{p_{2}}<0$).
This implies that the trajectory must cross one of the curves. The
intersection point cannot be only on the curve of $w_{2}(p_{1})$ (resp.,
$w_{1}(p_{2})$), because at such a point the trajectory moves horizontally
to the left (vertically upward), which implies that it must cross
the lower (resp., higher) curve $w_{2}(p_{1})$ (resp.,$w_{1}(p_{2})$)
from the left side (resp., from below), and we get a contradiction.
This implies that the intersection point must be the closest intersection
point of the two curves to the right (resp., left) of $p_{i}\left(t\right)$,
namely, $\overline{\boldsymbol{p}}$ (resp., $\boldsymbol{\underline{p}}$).
\end{proof}

\begin{lem}\label{lem-stationary-char}
Let $\hat{\boldsymbol{p}}$
be a stationary state. $\hat{\boldsymbol{p}}$ is asymptotically stable if both of the following conditions hold:
\begin{enumerate}
\item \textbf{Left neighborhood}: If $\hat{p}_{1}>0$ then there exists
$\underline{p}_{1}\in\left(0,\hat{p}_{1}\right)$ such that $w_{2}\left(p_{1}\right)>w_{1}^{-1}\left(p_{1}\right)$
for any $p_{1}\in\left(\underline{p}_{1},\hat{p}_{1}\right)$, and
\item \textbf{Right neighborhood}: If $\hat{p}_{1}<1,$ then there exists
$\overline{p}_{1}\in\left(\hat{p}_{1},1\right)$ such that $w_{2}\left(p_{1}\right)<w_{1}^{-1}\left(p_{1}\right)$
for any $p_{1}\in\left(\hat{p}_{1},\overline{p}_{1}\right)$.
\end{enumerate}
Moreover, if either of the above two conditions is not satisfied, then
$\hat{\boldsymbol{p}}$ is unstable.
\end{lem}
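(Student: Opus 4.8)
The plan is to read everything off the geometry of the two nullclines $p_2=w_2(p_1)$ (where $\dot p_2=0$) and $p_2=w_1^{-1}(p_1)$ (where $\dot p_1=0$), and to feed this geometry into Lemmas~\ref{lem:conv-to-area-between-curves} and~\ref{lem:convergence-to-which-neighbour}. Recall from the proof of Lemma~\ref{lem:conv-to-area-between-curves} that both curves are strictly increasing and that the signs of $\dot p_1,\dot p_2$ are constant on each of the four regions they cut out: between the curves the flow points toward the nearer intersection, while above (resp.\ below) both curves it moves down-and-right (resp.\ up-and-left). The two hypotheses say precisely that the $w_2$-curve lies strictly above the $w_1^{-1}$-curve on a punctured left neighborhood of $\hat p_1$ and strictly below it on a punctured right neighborhood. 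Hence the between-curves region just left of $\hat{\boldsymbol p}$ is of the type in part~1 of Lemma~\ref{lem:convergence-to-which-neighbour} (flow up-and-right, converging to the right neighbor) and the one just right of $\hat{\boldsymbol p}$ is of the type in part~2 (flow down-and-left, converging to the left neighbor); in both cases the relevant neighboring stationary state is $\hat{\boldsymbol p}$ itself.

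For sufficiency I would first use the finiteness of the set of stationary states (Fact~2$'$) to fix a strip $p_1\in(\underline p_1,\overline p_1)$ on which $\hat{\boldsymbol p}$ is the only stationary state, so that its neighboring stationary states sit at $p_1\le\underline p_1$ and $p_1\ge\overline p_1$. I would then build the ``bowtie'' region bounded by the two nullcline arcs over $[\hat p_1-\delta,\hat p_1+\delta]$ (the union of the two between-curves wedges, pinched at $\hat{\boldsymbol p}$). Using that each curve is increasing, the flow on each bounding arc points into the bowtie: on the $w_2$-arc the motion is horizontal and on the $w_1^{-1}$-arc it is vertical, and monotonicity of the curves sends each of these back inside. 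This makes the bowtie positively invariant, and Lemma~\ref{lem:convergence-to-which-neighbour} then forces every trajectory inside it to converge to $\hat{\boldsymbol p}$. To upgrade this to a full neighborhood I would note that a point started just above (resp.\ below) both curves moves down-right (resp.\ up-left) and crosses the upper (resp.\ lower) nullcline into the bowtie in finite time; continuity of the flow guarantees that if the start is close enough to $\hat{\boldsymbol p}$ the crossing is close as well, which yields both Lyapunov stability and attraction, i.e.\ asymptotic stability.

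For the ``moreover'' (instability) direction I would argue the contrapositive. Because $w_2$ and $w_1^{-1}$ are polynomials, $w_2-w_1^{-1}$ has a definite sign on a one-sided neighborhood of $\hat p_1$; so if, say, the left condition fails then $w_2<w_1^{-1}$ throughout some interval $(\underline p_1,\hat p_1)$. There the adjacent between-curves region is of the down-and-left type, so by part~2 of Lemma~\ref{lem:convergence-to-which-neighbour} trajectories started in it converge to the left neighboring stationary state, which lies a fixed positive distance from $\hat{\boldsymbol p}$. Choosing initial points in this region with $p_1\uparrow\hat p_1$ gives states arbitrarily close to $\hat{\boldsymbol p}$ whose trajectories leave a fixed small neighborhood; hence $\hat{\boldsymbol p}$ is not Lyapunov stable, i.e.\ unstable. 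The case where the right condition fails is symmetric.

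The main obstacle is the Lyapunov-stability half of sufficiency: controlling the excursions of trajectories that start in the two regions lying above or below \emph{both} curves. A naive rectangular neighborhood is not invariant (at its outer corners the flow points outward), so the argument genuinely needs the nullcline-bounded bowtie together with the monotonicity of the curves to guarantee inward-pointing flow, plus a continuity/compactness estimate ensuring that starts near $\hat{\boldsymbol p}$ re-enter the bowtie near $\hat{\boldsymbol p}$ rather than far away. Everything else reduces to bookkeeping with the sign analysis already established in Lemma~\ref{lem:conv-to-area-between-curves} and the convergence dichotomy of Lemma~\ref{lem:convergence-to-which-neighbour}.
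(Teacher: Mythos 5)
Your proof is correct and follows essentially the same route as the paper's: both directions reduce to the sign of $w_{2}-w_{1}^{-1}$ on the two one-sided neighborhoods of $\hat{p}_{1}$, fed into Lemma \ref{lem:conv-to-area-between-curves} (every trajectory reaches the between-curves region or converges to a stationary state) and Lemma \ref{lem:convergence-to-which-neighbour} (between the curves, trajectories converge to the appropriate neighboring stationary state), with instability obtained exactly as in the paper by starting in the adjacent between-curves wedge and converging to the other neighbor. Your ``bowtie''/continuity construction simply makes explicit the Lyapunov-stability step that the paper's proof leaves implicit.
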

\begin{proof}
Assume that Conditions 1 and 2 hold. Let $\boldsymbol{p}\neq\hat{\boldsymbol{p}}$
be any sufficiently close state. By Lemma \ref{lem:convergence-to-which-neighbour},
any trajectory beginning at $\boldsymbol{p}$ will enter one of the
two areas between the two curves on either side of $\hat{\boldsymbol{p}}$.
By Lemma \ref{lem:convergence-to-which-neighbour}, Condition
1 (resp., 2) implies convergence to $\hat{\boldsymbol{p}}$ if
the trajectory has entered the area between the curves to the left
(resp., right) of $\hat{\boldsymbol{p}}.$ This implies that any trajectory
that starts sufficiently close to $\hat{\boldsymbol{p}}$ must converge
to $\hat{\boldsymbol{p}}$, and, thus, $\hat{\boldsymbol{p}}$ is asymptotically
stable.

Next assume that $\hat{p}_{1}>0$ (resp., $\hat{p}_{1}<1$) and Condition 1 (resp., Condition 2) is not satisfied. This implies that $w_{2}\left(p_{1}\right)<w_{1}^{-1}\left(p_{1}\right)$
(resp., $w_{2}\left(p_{1}\right)>w_{1}^{-1}\left(p_{1}\right)$) for
any $p_{1}$ that is sufficiently close to $\hat{p}_{1}$ from the left (resp., right). By Lemma \ref{lem:convergence-to-which-neighbour},
this implies that a trajectory starting at $\left(p_{1},p_{2}\right)$
with $p_{1}$ sufficiently close to $\hat{p}_{1}$ from the left (resp., right)  and with $p_{2}\in\left(w_{2}\left(p_{1}\right),w_{1}^{-1}\left(p_{1}\right)\right)$
(resp., $p_{2}\in\left(w_{1}^{-1}\left(p_{1}\right),w_{2}\left(p_{1}\right)\right)$)
converges to the neighboring stationary point on the left (resp., right) side
of $\hat{\boldsymbol{p}}$. Thus, $\hat{\boldsymbol{p}}$ is unstable.
\end{proof}

Parts 1--3 of Proposition \ref{prop-w-stable} are immediately implied by Lemma \ref{lem-stationary-char}. To prove Part 4 assume w.l.o.g. that $\underline{p}_{1}<\overline{p}_{1}$.
By Lemma \ref{lem:convergence-to-which-neighbour}, the fact
that $\underline{\boldsymbol{p}}$ is asymptotically stable implies
that $w_{2}\left(p_{1}\right)>w_{1}^{-1}\left(p_{1}\right)$ for any
$p_{1}\in\left(\underline{p}_{1},\overline{p}_{1}\right),$ which, in turn implies that $\overline{\boldsymbol{p}}$ is unstable. Part 5 is implied by Lemmas \ref{lem:conv-to-area-between-curves} and \ref{lem:convergence-to-which-neighbour}. 

\subsection {Proof of Proposition \ref{prop:stationary-coincide} (Symmetric Environments)}\label{proof-prop-2-sym}

We first show that in symmetric coordination games, the set of stationary states coincide under both one-population dynamics and two-population dynamics. 
\begin{lem}\label{lemma:stationary-coincide}
    Let $(u,\theta)$ be a symmetric environment.  State $\textbf{p}$ is  stationary  under two-population dynamics iff the state is symmetric (i.e., $p\equiv p_1=p_2$), and $p$ is  stationary under one-population dynamics.
\end{lem}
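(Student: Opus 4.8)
The plan is to prove the two implications of the biconditional separately, using only one structural fact about the update map. In a symmetric environment $\theta_1=\theta_2$ and $u_1=u_2$, so the two maps defined in Eq. \eqref{eq:action-sampling-dyanmics-g} coincide; write $w\equiv w_1=w_2$. The sole property I will exploit is that $w$ is strictly increasing on $[0,1]$ (as noted after Eq. \eqref{eq:action-sampling-dyanmics-g}); the endpoint values $w(0)=0$ and $w(1)=1$ are not even needed.

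For the direction asserting that a symmetric, one-population-stationary state is two-population stationary, I would simply substitute. If $\textbf{p}=(p,p)$ with $w(p)=p$, then each two-population stationarity condition $p_i=w(p_j)$ reduces to $p=w(p)$, which holds by hypothesis; hence $\textbf{p}$ is stationary under the two-population dynamics.

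For the converse, suppose $\textbf{p}=(p_1,p_2)$ is stationary under the two-population dynamics, i.e., $p_1=w(p_2)$ and $p_2=w(p_1)$. The crux is to force $p_1=p_2$. I would argue by contradiction: assuming $p_1<p_2$ and applying the strictly increasing map $w$ gives $w(p_1)<w(p_2)$, which by the two stationarity relations reads $p_2<p_1$, a contradiction; the case $p_1>p_2$ is symmetric. Hence $p_1=p_2=:p$, and the relation $p_2=w(p_1)$ collapses to $p=w(p)$, so the state is symmetric and $p$ is stationary under the one-population dynamics.

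I do not anticipate a genuine obstacle: the entire content is the one-line monotonicity contradiction, which is the static counterpart of the dynamic intuition in the text (an asymmetric state is pushed toward the $45^\circ$ line). The only point requiring care is to state explicitly that symmetry of the environment forces $w_1=w_2$; this is precisely where the hypothesis enters, since without it the composition $w_1\circ w_2$ can (and in genuinely asymmetric games does) possess asymmetric fixed points, so no such collapse occurs.
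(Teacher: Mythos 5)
Your proposal is correct and follows essentially the same route as the paper's own proof: both reduce to the observation that $w_1=w_2\equiv w$ is strictly increasing, derive the contradiction $p_2=w(p_1)<w(p_2)=p_1$ from an assumed asymmetric stationary state, and then note that on the diagonal both notions of stationarity collapse to $p=w(p)$. No gap, and no meaningful difference in method.
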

\begin{proof}
    Assume that $\textbf{p}$ is stationary under two-population dynamics. Assume to the contrary that $p_1\neq p_2$. Assume w.l.o.g. that $p_1<p_2$. Let $w\equiv w_1=w_2$ represent the symmetric learning dynamics. Since $w$ is strictly increasing, we have $p_2=w(p_1)<w(p_2)=p_1$, a contradiction.Therefore, all stationary states under two-population dynamics are symmetric (i.e., of the form $(p,p)$). Observe that a symmetric state $(p,p)$ is stationary under two-population dynamics iff $p$ is stationary under one-population dynamics, because stationarity under both dynamics is characterized by the equation $p=w(p)$.
\end{proof}

Lemma \ref{lemma:stationary-coincide} implies that all asymptotically stable states under two-population dynamics are symmetric. The strict monotonicity of $w\equiv w_1=w_2$ implies that state $p$ satisfies Conditions 1--3 in Fact \ref{fact-w-stable} iff the symmetric state $(p,p)$ satisfies Conditions 1--3 in Propostion \ref{prop-w-stable}. This, in turn, implies that state $p$ is asymptotically stable under one-population dynamics iff  $(p,p)$ is asymptotically stable under two-population dynamics.

\subsection{Proof of Theorem \ref{thm:global-stability-pure} (Homogeneous Populations)}\label{app-proof-of-thm1'}

The fact that $\theta_{i}\equiv k_{i}>1$ implies that $w_{i}\left(p\right)=F_{m_{i}}^{k_{i}}\left(p\right)$
for some $1\leq m_{i}\leq k_{i}$. This implies that any stationary
state $\boldsymbol{p}^{\ast}$ must satisfy $\left(F_{m_{1}}^{k_{1}}\circ F_{m_{2}}^{k_{2}}\right)\left(p_{1}^{\ast}\right)=p_{1}^{\ast}$.
Proposition \ref{prop-binomial} implies that this holds for at most
one interior state $\hat{\boldsymbol{p}}$. Therefore, the stationary state $\hat{\boldsymbol{p}}$ (if it exists) is a neighbor of both pure stationary states $\left(0,0\right)$ and $\left(1,1\right)$.
Part 2 of Proposition \ref{prop:pure-equilbiria-satbility} implies that state $\textbf{a}$ is asymptotically stable (because $\theta_1(1)=0$). Finally, Part 4 of Proposition  \ref{prop-w-stable}
implies that $\hat{\boldsymbol{p}}$ is unstable. \qedhere 

\subsection{Proof of Theorem \ref{thm:locally-stable_interior_if_pure_is_unstable} (Stable Miscoordination, $u_1,u_2>1$)}\label{Proof-Thm-2'}

We begin the proof by presenting some notation.
Let 
$w_{i}^{\theta_{i}}\left(p_{j}\right)$
denote the sampling dynamics induced by the distribution $\theta_{i}$ of sample sizes.
Given a distribution $\theta$ of sample sizes, sample size  $\bar{k}\notin \textrm{supp}(\theta),$ and mass $\alpha \in(0,1)$, let $\theta\alpha\bar{k}$ be the mixed sample size distribution that assigns mass $\alpha$ to distribution  $\theta$ and mass $1-\alpha$ to  $\bar{k}$. That is, 
\vspace{-5pt}
\[
\left(\theta\alpha\bar{k}\right)(k)=\begin{cases}
\alpha\cdot\theta\left(k\right) & k\neq\bar{k}\\
\left(1-\alpha\right) & k=\bar{k}.
\end{cases}
\]
\vspace{-5pt}
Let $\hat k_{i}$ be the maximal sample size in the support of $\theta_i$ for each player $i$, i.e., $\hat k_{i}=\max\left(\textrm{supp}(\theta_{i})\right)\leq u_i+1$. In what follows, we show that there exist $\alpha_1,\alpha_2\in(0,1)$, such that the environment $\left(\boldsymbol{u},\left(\theta_{1}\alpha_{1}\bar{k},\theta _{2}\alpha_{2}\bar{k}\right)\right)$
admits an asymptotically stable interior state $\boldsymbol{\hat{p}}$ for any   sufficiently large $\bar{k}$. 
 The fact that  $\hat k_{i}\leq u_i+1$ implies that

\vspace{-15pt}
\[
w_{i}^{\theta_{i}}\left(p_{j}\right)=\sum_{k_{i}}\theta_{i}\left(k_{i}\right)\Pr\left(X\left(k_{i},p_{j}\right)\geq1\right)=
\]
\vspace{-15pt}
\[
1-\sum_{k_{i}}\theta_{i}\left(k_{i}\right)\left(1-p_{j}\right)^{k_{i}}=\sum_{k_{i}}\theta_{i}\left(k_{i}\right)\left(k_{i}p_{j}-\left(\begin{array}{c}
k_{i}\\
2
\end{array}\right)\left(p_{j}\right)^{2}\right)+f\left(p_{j}\right),
\]
where $f(p_j)=O\left(\left(p_{j}\right)^{3}\right).$
Fix a sufficiently small $\epsilon>0$. Let $\hat{p}\in\left(0,p_{1}^{NE}-\epsilon\right)$
be sufficiently small such that $f(p_j)<\epsilon$
for any $p_{j}<\hat{p}$. 
Observe that $\sum_{k_{i}}\theta_{i}\left(k_{i}\right)k_{i}=\mathbb{E}(\theta_i)>1.$ 
Observe that for each $i\in\left\{ 1,2\right\} $,
there exists an interval of $\alpha_i$'s in $(0,1)$ that satisfy (1) $\alpha_{i}\sum_{k_{i}}\theta_{i}\left(k_{i}\right)k_{i}>1$
and 
(2) $\alpha_{i}\sum_{k_{i}}\theta_{i}\left(k_{i}\right)\left(k_{i}-\left(\begin{array}{c}
k_{i}\\
2
\end{array}\right)\hat{p}\right)<1-2\epsilon$. Observe that (1) implies that $w_{i}^{\theta_i}\left(p_{j}\right)>\frac{p_{j}}{\alpha_{i}}$
in a right neighborhood of zero, and that (2) implies that $w_{i}^{\theta_{i}}\left(p_{j}\right)<\frac{p_{j}}{\alpha_{i}}$ in a left neighborhood of $\hat{p}$. Further observe that $\lim_{k\rightarrow\infty}w_{2}^{k}\left(\hat{p}\right)=0$
and $\lim_{k\rightarrow\infty}\left(w_{1}^{k}\right)^{-1}\left(\hat{p}\right)=p_{2}^{NE}$.
This implies that there exists sufficiently large $\overline{k}$ such that $w_{2}^{\overline{k}}\left(\hat{p}_{1}\right)<\epsilon$
and $\left(w_{1}^{\overline{k}}\right)^{-1}\left(\hat{p}\right)>p_{2}^{NE}-\epsilon>\hat{p}$.

Observe that $w_{i}^{\theta_{i}\alpha_{i}\overline{k}}\left(p_{j}\right)>p_{j}$
in a right neighborhood of zero, and $w_{i}^{\theta_{i}\alpha_{i}\overline{k}}\left(p_{j}\right)<p_{j}$
in a left neighborhood of $\hat{p}$. This, in turn, implies that
$w_{2}^{\theta_{2}\alpha_{2}\overline{k}}\left(p_{1}\right)>p_{1}>\left(w_{1}^{\theta_{1}\alpha_{1}\overline{k}}\right)^{-1}\left(p_{1}\right)$
in a right neighborhood of zero, and $w_{2}^{\theta_{2}\alpha_{2}\overline{k}}\left(p_{1}\right)<p_{1}<\left(w_{1}^{\theta_{1}\alpha_{1}\overline{k}}\right)^{-1}\left(p_{1}\right)$
in a left neighborhood of $\hat{p}$. Thus, there exists a stationary
state $\tilde{\boldsymbol{p}}$ that satisfies $0<\tilde{p}_{1}<\hat{p}<1$,
and $w_{2}^{\theta_{i}\alpha_{i}\overline{k}}\left(p_{1}\right)>\left(w_{1}^{\theta_{i}\alpha_{i}\overline{k}}\right)^{-1}\left(p_{1}\right)$
(resp., $w_{2}^{\theta_{i}\alpha_{i}\overline{k}}\left(p_{1}\right)<\left(w_{1}^{\theta_{i}\alpha_{i}\overline{k}}\right)^{-1}\left(p_{1}\right)$)
in a left (resp., right) neighborhood of $\tilde{p}_{1}$. This implies,
by Part 3 of Proposition \ref{prop-w-stable}, that
$\tilde{\boldsymbol{p}}$ is asymptotically stable.  The above argument is illustrated in the left panel of Fig. \ref{fig:fig3-local-stablity}.

\subsection{Proof of Theorem \ref{thm:locally-stableinterior_ASD} (Stable
Miscoordination > 50\%)\label{subsec:Proof-of-Theorem-3}}
We use the notation introduced in  the proof of Theorem \ref{thm:locally-stable_interior_if_pure_is_unstable}. Observe that
\vspace{-15pt}
\[\vspace{-5pt}
w_{1}^{\theta_{1}}\left(p_{2}\right)=\Pr\left(\textrm{a new agent in Pop. 1 has at least 1 }a_{2}\textrm{ in her sample}\right)=1-\sum_{k_{1}}\theta_{1}\left(k_{1}\right)(1-p_{2})^{k_{1}}
\vspace{-15pt}
\]
\vspace{-5pt}
is the same for all values of $u_{1}>\hat k_{1}\equiv\max\left(\textrm{supp}(\theta_{i})\right)$, and, similarly, 
\vspace{-10pt}
\[
w_{2}^{\theta_{2}}\left(p_{1}\right)=\Pr\left(\textrm{a new agent in Pop. 2 has only }a_{1}\textrm{'s in her sample}\right)=\sum_{k_{2}}\theta_{2}\left(k_{2}\right)(p_{1})^{k_{2}}
\vspace{-15pt}
\]
\noindent
is the same for all values of $u_{2}<\frac{1}{\hat k_{2}}$. Fix a sufficiently
small $\epsilon>0$. 
Let $\bar{u}_{1}>\hat k_{1},\bar{u}_{2}<\frac{1}{\hat k_{2}}$ be payoffs satisfying
$p_{2}^{NE}=\frac{1}{1+\bar{u}_{1}}<\frac{1}{2^{\hat k_{2}+1}}-\epsilon$
and $p_{1}^{NE}=\frac{1}{1+\bar{u}_{2}}>1-\frac{1}{2^{\hat k_{1}+1}}+\epsilon$.
Fix any
payoff profile $\textbf{u}$ satisfying $u_{1}>\bar{u}_{1}$ and $u_{2}<\bar{u}_{2}$. Let $\overline{k}$
be sufficiently large such that $w_{i}^{\overline{k}}\left(p_{j}^{NE}-\epsilon\right)<\epsilon$,
$w_{i}^{\overline{k}}\left(p_{j}^{NE}+\epsilon\right)>1-\epsilon$.
In what follows, we show that  $\left(\textbf{u},(\theta_{1}\frac{1}{2}\bar{k},\theta_{2}\frac{1}{2}\bar{k})\right)$
admits an asymptotically stable interior state $\hat{\boldsymbol{p}}$
that satisfies $\hat{p}_{2}<\frac{1}{2}<\hat{p}_{1}$. 
Observe that: 
\begin{equation}
\frac{1}{2}>w_{2}^{\theta_{2}\frac{1}{2}\bar{k}}\left(\frac{1}{2}\right)=\frac{1}{2}\left(w_{2}^{\theta_{2}}\left(\frac{1}{2}\right)+w_{2}^{\bar{k}}\left(\frac{1}{2}\right)\right)>\frac{1}{2^{\hat{k}_{2}+1}}+\frac{1}{2}w_{2}^{\bar{k}}\left(\frac{1}{2}\right)>p_{2}^{NE}+\epsilon>\left(w_{1}^{\theta_{1}}\right)^{-1}\left(\frac{1}{2}\right),
\label{eq:w_2^=00005Ctheta_2}
\end{equation}
\begin{equation}
\frac{1}{2}<w_{1}^{\theta_{1}\frac{1}{2}\bar{k}}\left(\frac{1}{2}\right)<\frac{1}{2}\left(1-\frac{1}{2^{\hat k_{1}}}+w_{1}^{\bar{k}}\left(\frac{1}{2}\right)\right)<1-\frac{1}{2^{\hat k_{1}+1}}<p_{1}^{NE}-\epsilon<\left(w_{2}^{\theta_{2}}\right)^{-1}\left(\frac{1}{2}\right).\label{eq:w_2^=00005Ctheta_2-1}
\end{equation}
These inequalities imply that the curve $w_{2}^{\theta_{2}\frac{1}{2}\bar{k}}\left(p_{1}\right)$
is (1) above the curve $\left(w_{1}^{\theta_{1}\frac{1}{2}\bar{k}}\right)^{-1}\left(p_{1}\right)$
at $p_{1}=\frac{1}{2}$, and (2) below $\left(w_{1}^{\theta_{1}\frac{1}{2}\bar{k}}\right)^{-1}\left(p_{1}\right)$
at $w_{1}^{\theta_{1}\frac{1}{2}\bar{k}}\left(\frac{1}{2}\right)>\frac{1}{2}$ 
(recall that an increasing curve is below another increasing curve iff it is to the right of the other curve). This
implies that there is a stationary state $\hat{\boldsymbol{p}}$ between
$\frac{1}{2}$ and $w_{1}^{\theta_{1}\frac{1}{2}\bar{k}}\left(\frac{1}{2}\right)$ such
that the curve $w_{2}^{\theta_{2}\frac{1}{2}\bar{k}}\left(p_{1}\right)$ is above (resp.,
below) the curve $\left(w_{1}^{\theta_{1}\frac{1}{2}\bar{k}}\right)^{-1}\left(p_{1}\right)$
in a left (resp., right) neighborhood of $\hat{\boldsymbol{p}}$.
Part 3 of Proposition \ref{prop-w-stable}, then, implies
that $\hat{\boldsymbol{p}}$ is asymptotically stable. Note that
$\hat{p}_{2}<\frac{1}{2}<\hat{p}_{1}$, implying that the  miscoordination probability induced by $\hat{\boldsymbol{p}}$ is >50\% because
\begin{align*}
    Pr\left(\textrm{miscoordination}\right)&=p_{1}\left(1-p_{2}\right)+p_{2}\left(1-p_{1}\right)=p_{1}-2p_{1}p_{2}+p_{2}\\
    &=p_{1}-p_{2}\left(2p_{1}-1\right)>p_{1}-\frac{1}{2}\left(2p_{1}-1\right)=\frac{1}{2}.
\end{align*}

\subsection{Proof of Theorem \ref{thm4} (Globally Stable Miscoordination)}\label{subsec:Proof-of-Theorem-4}

The following proposition, which may be of independent interest, characterizes the asymptotic stability of pure equilibria and is key to proving Theorem \ref{thm4}.

\begin{prop}
~\label{prop:pure-equilbiria-satbility} \label{Prop-1}
\vspace{-10pt}
\begin{enumerate}
\item $\mathbb{E}_{<\frac{1}{u_{1}}+1}\left(\theta_{1}\right)\cdot\mathbb{E}_{<\frac{1}{u_{2}}+1}\left(\theta_{2}\right)=\theta_{1}\left(1\right)\cdot\mathbb{E}_{<\frac{1}{u_{2}}+1}\left(\theta_{2}\right)>1\,\Rightarrow$
\textbf{a}=$\left(a_{1},a_{2}\right)$ is unstable; 
\item $\mathbb{E}_{<\frac{1}{u_{1}}+1}\left(\theta_{1}\right)\cdot\mathbb{E}_{<\frac{1}{u_{2}}+1}\left(\theta_{2}\right)=\theta_{1}\left(1\right)\cdot\mathbb{E}_{<\frac{1}{u_{2}}+1}\left(\theta_{2}\right)<1\,\Rightarrow$ $\boldsymbol{a}$
is asymptotically stable;
\item $\mathbb{E}_{\leq u_{1}+1}\left(\theta_{1}\right)\cdot\mathbb{E}_{\leq u_{2}+1}\left(\theta_{2}\right)>1\,\Rightarrow$
\textbf{b}=$\left(b_{1},b_{2}\right)$ is unstable; and
\item $\mathbb{E}_{\leq u_{1}+1}\left(\theta_{1}\right)\cdot\mathbb{E}_{\leq u_{2}+1}\left(\theta_{2}\right)<1\,\Rightarrow$ \textbf{b}=$\left(b_{1},b_{2}\right)$ is asymptotically stable.
\end{enumerate}
\end{prop}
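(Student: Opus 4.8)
The plan is to prove all four parts at once by reducing the stability of each pure state to the behaviour of the single increasing polynomial $h\equiv w_1\circ w_2$ relative to the identity near an endpoint, and then to read off the sign of $h(p_1)-p_1$ from the derivative $h'$ at that endpoint. Working with the composition $h$ rather than with $w_1^{-1}$ directly is what keeps the argument clean: $h$ is a polynomial and hence smooth even in the degenerate cases where some $w_i$ has a vanishing endpoint derivative (in which $w_1^{-1}$ would fail to be differentiable).

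First I would translate Proposition \ref{prop-w-stable} into statements about $h$. Since $w_1$ is strictly increasing, the condition ``$w_2(p_1)>w_1^{-1}(p_1)$'' is equivalent to ``$h(p_1)>p_1$'', and likewise with the inequalities reversed. Thus Parts 1--2 of Proposition \ref{prop-w-stable}, together with the instability half of Lemma \ref{lem-stationary-char}, say that $\mathbf{b}=(0,0)$ is asymptotically stable if $h(p_1)<p_1$ in a right neighbourhood of $0$ and unstable if $h(p_1)>p_1$ there, while $\mathbf{a}=(1,1)$ is asymptotically stable if $h(p_1)>p_1$ in a left neighbourhood of $1$ and unstable if $h(p_1)<p_1$ there. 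Since $h(0)=0$ and $h(1)=1$, everything is decided by the first-order behaviour of $h$: near $0$ one has $h(p_1)-p_1=(h'(0)-1)\,p_1+o(p_1)$, and near $1$ one has $h(p_1)-p_1=(h'(1)-1)(p_1-1)+o(p_1-1)$. Because the hypotheses impose strict inequalities ($h'(0)\neq1$, $h'(1)\neq1$), the linear term dominates in the relevant one-sided neighbourhood, so the sign of $h-\mathrm{id}$ is the sign of $h'(0)-1$ to the right of $0$ and the sign of $1-h'(1)$ to the left of $1$.

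The computational heart is therefore evaluating $h'(0)=w_1'(0)\,w_2'(0)$ and $h'(1)=w_1'(1)\,w_2'(1)$, which I would obtain from the derivative formula in Eq. \eqref{eq:green1983_eqn}. Writing $w_i=\sum_k\theta_i(k)\,F_{m_i(k)}^k$ with $m_i(k)=\lceil k/(u_i+1)\rceil$, Eq. \eqref{eq:green1983_eqn} gives $(F_m^k)'(0)=m\binom{k}{m}0^{\,m-1}$, which is nonzero (equal to $k$) exactly when $m=1$, and $(F_m^k)'(1)=m\binom{k}{m}0^{\,k-m}$, which is nonzero (equal to $k$) exactly when $m=k$. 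Now $m_i(k)=1\iff k\le u_i+1$, so $w_i'(0)=\sum_{k\le u_i+1}\theta_i(k)\,k=\mathbb{E}_{\le u_i+1}(\theta_i)$; and $m_i(k)=k\iff (k-1)u_i<1\iff k<\tfrac{1}{u_i}+1$ (the tie-breaking rule, which sends ties to $a$, is precisely what makes this inequality strict), so $w_i'(1)=\sum_{k<\frac{1}{u_i}+1}\theta_i(k)\,k=\mathbb{E}_{<\frac{1}{u_i}+1}(\theta_i)$.

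Combining the two steps finishes the proof. Near $0$, $h'(0)=\mathbb{E}_{\le u_1+1}(\theta_1)\,\mathbb{E}_{\le u_2+1}(\theta_2)$, so $\mathbf{b}$ is unstable when this product exceeds $1$ (Part 3) and asymptotically stable when it is below $1$ (Part 4). Near $1$, $h'(1)=\mathbb{E}_{<\frac{1}{u_1}+1}(\theta_1)\,\mathbb{E}_{<\frac{1}{u_2}+1}(\theta_2)$, so $\mathbf{a}$ is unstable when this product exceeds $1$ (Part 1) and asymptotically stable when it is below $1$ (Part 2); the stated simplification $\mathbb{E}_{<\frac{1}{u_1}+1}(\theta_1)=\theta_1(1)$ is immediate because $u_1\ge1$ forces $\tfrac{1}{u_1}+1\le2$, leaving only $k=1$ in that truncated sum. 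I expect the only real obstacle to be bookkeeping in the derivative step: verifying that only the $m=1$ (resp.\ $m=k$) terms survive at each endpoint and translating the resulting index constraints into the truncated-expectation sets with the correct strict/weak boundary inherited from the tie-breaking convention.
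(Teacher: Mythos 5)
Your proof is correct, but it takes a genuinely different route from the paper's. The paper linearizes the two\-/dimensional system at each pure state: it derives the perturbed dynamics $\dot{\epsilon}_{i}=\epsilon_{j}\cdot\mathbb{E}_{<\frac{1}{u_{i}}+1}\left(\theta_{i}\right)-\epsilon_{i}$ (resp., with $\mathbb{E}_{\leq u_{i}+1}$) by a probabilistic ``neglect higher-order terms'' argument, computes the $2\times2$ Jacobian, finds its eigenvalues $-1\pm\sqrt{a_{\theta_{1}}a_{\theta_{2}}}$, and concludes via standard ODE linearization theorems (citing Perko). You instead collapse the problem to one dimension: using Proposition \ref{prop-w-stable} together with the instability half of Lemma \ref{lem-stationary-char}, you reduce the stability of each pure state to the sign of $h-\mathrm{id}$ near an endpoint, where $h=w_{1}\circ w_{2}$, and you compute $h'(0)$ and $h'(1)$ exactly from Eq. \eqref{eq:green1983_eqn}; your identities $w_{i}'(0)=\mathbb{E}_{\leq u_{i}+1}\left(\theta_{i}\right)$ and $w_{i}'(1)=\mathbb{E}_{<\frac{1}{u_{i}}+1}\left(\theta_{i}\right)$, including the strict versus weak cutoffs inherited from the tie-breaking rule and the simplification $\mathbb{E}_{<\frac{1}{u_{1}}+1}\left(\theta_{1}\right)=\theta_{1}(1)$, are all verified correctly, and there is no circularity since Proposition \ref{prop-w-stable} is established earlier in the paper. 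The two proofs ultimately compare the same first-order coefficients to $1$, but yours has two advantages: it stays entirely within machinery the paper has already proved (no appeal to an external linearization theorem), and it replaces the informal $o(\epsilon_{j})$ bookkeeping with an exact polynomial-derivative computation, which also sidesteps the non-differentiability of $w_{1}^{-1}$ at an endpoint where $w_{1}'$ vanishes. What the paper's Jacobian approach buys in exchange is generality: eigenvalue analysis extends directly to settings where each population's state is no longer a scalar---indeed the paper reuses precisely this technique for the $M$-action generalization (Proposition \ref{thm-1-generelized}), where the composition trick $w_{1}\circ w_{2}$ has no one-dimensional analogue.
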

\begin{proof}
    
We are interested in deriving conditions for the stability of pure stationary states. In what follows, we compute the Jacobian of sampling dynamics in the pure state $\boldsymbol{a}=\left(0,0\right)$
(resp., $\boldsymbol{b}=\left(1,1\right)$). To do so, we consider
a slightly perturbed state with a ``very small'' $\epsilon_{i}$
share of agents playing $b_{i}$ (resp., $a_{i})$ in population $i$.
By ``very small,'' we mean that higher-order terms of $\epsilon_{i}$
and $\epsilon_{j}$ are neglected.

Consider a new agent in population $i$ who has a sample size of $k_{i}.$
Action $b_{i}$ (resp., $a_{i}$) has a weakly (resp., strictly) higher
mean payoff against a sample size of $k_{i}$ iff (we neglect the rare cases where an agent has of having multiple $b_{j}$'s (resp., $a_{j}$'s) in her sample).
(1) the sample includes the opponent's action $b_{j}$ (resp.,
$a_{j}$), and (2) $k_{i}<\frac{1}{u_{i}}+1$ (resp., $k_{i}\leq u_{i}+1$).
The probability of (1) is $k_{i}\cdot\epsilon_{j}+o(\epsilon_{j})$,
where $o(\epsilon_{j})$ denotes terms that are 
asymptotically negligible compared to $\epsilon_{j}$,
and, thus, it does not affect the Jacobian as $\epsilon_{j}\rightarrow0$.
This implies that the probability that a new agent in population $i$
(with a random sample size distributed according to $\theta_{i}$) who has a higher mean payoff for action $b_{i}$ (resp., $a_{i}$) against
her sample is $w_{i}(1-\epsilon_{j})=\epsilon_{j}\cdot\mathbb{E}_{<\frac{1}{u_{i}}+1}\left(\theta_{i}\right)+o(\epsilon_{j})$
(resp., $w_{i}(\epsilon_{j})=\epsilon_{j}\cdot\mathbb{E}_{\leq u_{i}+1}\left(\theta_{i}\right)+o(\epsilon_{j})$). Therefore, the sampling dynamics at $(\epsilon_{1},\epsilon_{2})$
(resp., $(1-\epsilon_{1},1-\epsilon_{2})$) can be written as follows
(ignoring the higher-order terms of $\epsilon_{1}$ and $\epsilon_{2}$):
\vspace{-10px}
\begin{equation}
\vspace{-10px}
\dot{\epsilon}_{i}=\epsilon_{j}\cdot\mathbb{E}_{<\frac{1}{u_{i}}+1}\left(\theta_{i}\right)-\epsilon_{i}\,\,\,(\textrm{resp.,}\,\dot{\epsilon}_{i}=\epsilon_{i}-\epsilon_{j}\cdot\mathbb{E}_{\leq u_{i}+1}\left(\theta_{i}\right)).\label{eq:a_b_theta}
\end{equation}
Define: $a_{\theta_{i}}=\mathbb{E}_{<\frac{1}{u_{i}}+1}\left(\theta_{i}\right)$
(resp., $b_{\theta_{i}}=\mathbb{E}_{\leq u_{i}+1}\left(\theta_{i}\right)$).
The Jacobian of the above system of Equations \eqref{eq:a_b_theta}
is then given by $J_{a}=\left(\begin{array}{cc}
-1 & a_{\theta_{1}}\\
a_{\theta_{2}} & -1
\end{array}\right)$ (resp., $J_{a}=\left(\begin{array}{cc}
1 & -b_{\theta_{1}}\\
-b_{\theta_{2}} & 1
\end{array}\right)$). The eigenvalues of $J_{a}$ (resp., $J_{b}$) are $-1-\sqrt{a_{\theta_{1}}a_{\theta_{2}}}$
and $-1+\sqrt{a_{\theta_{1}}a_{\theta_{2}}}$ (resp., $-1-\sqrt{b_{\theta_{1}}b_{\theta_{2}}}$
and $-1+\sqrt{b_{\theta_{1}}b_{\theta_{2}}}$). Observe that (1)
if $a_{\theta_{1}}a_{\theta_{2}}<1$ (resp., $b_{\theta_{1}}b_{\theta_{2}}$\textgreater 1)
then both eigenvalues are negative, which implies that the pure state
$\boldsymbol{a}$ (resp., $\boldsymbol{b}$) is asymptotically stable,
and (2) if $a_{\theta_{1}}a_{\theta_{2}}>1$ (resp., $b_{\theta_{1}}b_{\theta_{2}}$\textgreater 1)
then one of the eigenvalues is positive, which implies that this state
is unstable (see, e.g., \citealp[Theorems 1 and 2 in Section 2.9]{perko2013differential}).
\end{proof}
 Next we show that if any interior initial state converges to one of the pure equilibria, then this equilibrium is asymptotically stable.
\begin{lem}
\label{cla-convergence-to-pure-iff-asymp} Assume that $\textbf{p}(0)\neq(0,0)$
and $\lim_{t\rightarrow\infty}\mathbf{p}\left(t\right)=(0,0)$; then
$(0,0)$ is asymptotically stable. The same result holds when $(1,1)$ replaces $(0,0).$ 
\end{lem}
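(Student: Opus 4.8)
My plan is to reduce asymptotic stability of $(0,0)$ to a single sign condition and then extract that sign from the convergence hypothesis. By Part~1 of Proposition~\ref{prop-w-stable}, $(0,0)$ is asymptotically stable if and only if $w_2(p_1)<w_1^{-1}(p_1)$ throughout a right neighborhood of $0$; so it suffices to show that the existence of a trajectory with $\boldsymbol p(0)\neq(0,0)$ and $\boldsymbol p(t)\to(0,0)$ forces this strict inequality. Write $g(p_1):=w_2(p_1)-w_1^{-1}(p_1)$. Since $w_1$ is a polynomial with strictly positive derivative on $[0,1]$, the inverse $w_1^{-1}$ is real-analytic, hence so is $g$, and $g(0)=0$. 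Real-analyticity gives a clean trichotomy: on some interval $(0,\delta)$ exactly one of $g<0$, $g>0$, or $g\equiv0$ holds. I would show that the convergence hypothesis is incompatible with the last two alternatives, leaving $g<0$ and therefore asymptotic stability.

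The alternative $g\equiv0$ means $w_2=w_1^{-1}$ on an interval, i.e.\ $w_1\circ w_2=\mathrm{id}$ as polynomials; since $\deg(w_1\circ w_2)=\max(\mathrm{supp}(\theta_1))\cdot\max(\mathrm{supp}(\theta_2))$, this forces $\theta_1\equiv\theta_2\equiv1$ and $w_1\equiv w_2\equiv\mathrm{id}$. Then $\dot p_1=p_2-p_1$ and $\dot p_2=p_1-p_2$, so $p_1+p_2$ is conserved while $p_1-p_2\to0$, and every trajectory converges to the symmetric state $\bigl(\tfrac{p_1(0)+p_2(0)}2,\tfrac{p_1(0)+p_2(0)}2\bigr)$, which equals $(0,0)$ only if $\boldsymbol p(0)=(0,0)$; this contradicts the hypothesis. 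For the alternative $g>0$ I would argue that no feasible trajectory can converge to $(0,0)$: near the corner the region between the nullclines is $\{w_1^{-1}(p_1)<p_2<w_2(p_1)\}$, where both $\dot p_1>0$ and $\dot p_2>0$, so by Part~1 of Lemma~\ref{lem:convergence-to-which-neighbour} any orbit reaching this region converges to the upper neighbor $\overline{\boldsymbol p}\neq(0,0)$. In the two remaining regions (above both curves, below both curves) one coordinate is strictly increasing and hence bounded away from $0$, so a corner-converging orbit cannot remain there and must pass into the between region; Lemma~\ref{lem:convergence-to-which-neighbour} then sends it to $\overline{\boldsymbol p}$, contradicting convergence to $(0,0)$. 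Thus convergence rules out $g>0$, and only $g<0$ survives.

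The delicate point, and the main obstacle, is exactly the $g>0$ case, because there $(0,0)$ may be an unstable saddle (precisely the case $\mathbb E_{\le u_1+1}(\theta_1)\cdot\mathbb E_{\le u_2+1}(\theta_2)>1$ of Proposition~\ref{prop:pure-equilbiria-satbility}), and an unstable equilibrium can a~priori still attract orbits along a stable manifold. The reason this does not happen here is structural: the sampling dynamics are cooperative (the off-diagonal Jacobian entries $w_1'(p_2),w_2'(p_1)$ are positive), and at a pure saddle the stable eigenvector of $\left(\begin{smallmatrix}-1&w_1'(0)\\ w_2'(0)&-1\end{smallmatrix}\right)$ has components of opposite sign, so the stable manifold is tangent to a direction leaving $[0,1]^2$ and carries no feasible orbit into the corner. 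This is the same monotone redirection captured abstractly by Lemma~\ref{lem:convergence-to-which-neighbour}, which is why I rely on that lemma rather than on a hyperbolicity case analysis (the latter would leave the nonhyperbolic borderline $\mathbb E_{\le u_1+1}(\theta_1)\cdot\mathbb E_{\le u_2+1}(\theta_2)=1$ untreated). Having excluded $g\equiv0$ and $g>0$, Part~1 of Proposition~\ref{prop-w-stable} yields that $(0,0)$ is asymptotically stable. The statement for $(1,1)$ follows by the symmetric argument under the relabeling $a_i\leftrightarrow b_i$ (equivalently $p_i\mapsto1-p_i$), using Part~2 of Proposition~\ref{prop-w-stable} in place of Part~1.
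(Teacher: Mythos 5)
Your argument is, at its core, the same as the paper's: both proofs rest on the region classification behind Lemmas \ref{lem:conv-to-area-between-curves} and \ref{lem:convergence-to-which-neighbour} --- an orbit cannot converge to $(0,0)$ while it stays above both curves or below both curves (one coordinate is strictly increasing there), so it must enter the strip between the two curves, and Lemma \ref{lem:convergence-to-which-neighbour} then forces the limit to be a neighboring intersection point; convergence to $(0,0)$ therefore pins down the ordering $w_2(p_1)<w_1^{-1}(p_1)$ on a right neighborhood of $0$, which is exactly the criterion of Part 1 of Proposition \ref{prop-w-stable}. Your extra scaffolding --- the sign trichotomy for $g=w_2-w_1^{-1}$ and the explicit dispatch of the case $g\equiv 0$ (i.e., $\theta_1\equiv\theta_2\equiv 1$, where $p_1+p_2$ is conserved) --- is sound and makes explicit a degenerate case that the paper's proof handles only implicitly (there the open strip between the coinciding curves is empty, so the hypothesis is vacuous).

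One justification, however, is false as stated: $w_1$ need \emph{not} have strictly positive derivative on the closed interval $[0,1]$, so $w_1^{-1}$ need not be real-analytic (or even differentiable) at $0$. Indeed, $w_1'(0)=\mathbb{E}_{\leq u_1+1}(\theta_1)$, which vanishes whenever every sample size in $\mathrm{supp}(\theta_1)$ exceeds $u_1+1$; for instance, $u_1=1.2$ and $\theta_1\equiv 3$ give $w_1(p_2)=3p_2^2-2p_2^3$, whence $w_1^{-1}(q)\sim\sqrt{q/3}$ near $0$, which is not even Lipschitz at $0$. Analyticity of $g$ on the open interval $(0,1)$ alone does not exclude zeros of $g$ accumulating at the endpoint $0$, so your trichotomy is unjustified as written. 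The repair is immediate and uses only what the paper already provides: $g(p_1)=0$ for $p_1\in(0,1)$ iff $w_1\bigl(w_2(p_1)\bigr)=p_1$, and $w_1\circ w_2$ is a polynomial, so either $w_1\circ w_2$ is identically the identity (your degenerate case, by your degree argument) or $g$ has finitely many zeros in $(0,1)$ (this is Fact~$2'$), in which case $g$ has a constant sign on some interval $(0,\delta)$. With that substitution your proof is complete and correct.
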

\begin{proof}
If $\textbf{p}$ is below (resp., above) both curves, then by the
classification presented in the proof of Lemma  \ref{lem:conv-to-area-between-curves}
it must be that $\dot{p_{2}}>0$ (resp., $\dot{p_{1}}>0$), which implies
that convergence to (0,0) is possible only if the trajectory passes
through a state that is strictly between the two curves, and that
the closest intersection point of the two curves to the left of this
state is (0,0). By the classification presented in the proof of Lemma
\ref{lem:convergence-to-which-neighbour} it must be that the curve of
$w_{2}(p_{1})$ is strictly below the curve of $w_{1}(p_{2})$ in
a right neighborhood of (0,0), which implies that (0,0) is asymptotically
stable because any sufficiently close initial state would converge
to (0,0). 
\end{proof}
We now use Proposition \ref{prop:pure-equilbiria-satbility} and Lemma \ref{cla-convergence-to-pure-iff-asymp} to prove Theorem \ref{thm4}.
\begin{enumerate}
\item Proposition \ref{prop:pure-equilbiria-satbility} implies that both
pure stationary states are unstable. Part (5) of Proposition \ref{prop-w-stable} and Lemma \ref{cla-convergence-to-pure-iff-asymp} imply that from any interior state, the population converges
to an interior stationary state. 
\item Proposition \ref{prop:pure-equilbiria-satbility} implies that at least one of the pure equilibria is asymptotically stable, which implies
that some interior initial states converge to a pure equilibrium.

\end{enumerate}

\begin{adjustwidth}{-5pt}{-11pt}
\begin{singlespace}
\bibliographystyle{chicago}
\bibliography{mybibdata}

\begin{thebibliography}{}

\bibitem[\protect\citeauthoryear{Arieli and Arigapudi}{Arieli and
  Arigapudi}{2024}]{arieli2024private}
Arieli, I. and S.~Arigapudi (2024).
\newblock Private signals and fast product adoption under incomplete
  information.
\newblock {\em Games and Economic Behavior\/}~{\em 147}, 377--387.

\bibitem[\protect\citeauthoryear{Arieli, Babichenko, Peretz, and Young}{Arieli
  et~al.}{2020}]{arieli2020speed}
Arieli, I., Y.~Babichenko, R.~Peretz, and H.~P. Young (2020).
\newblock The speed of innovation diffusion in social networks.
\newblock {\em Econometrica\/}~{\em 88\/}(2), 569--594.

\bibitem[\protect\citeauthoryear{Arigapudi and Heller}{Arigapudi and
  Heller}{2024}]{arigapudi2022sampling}
Arigapudi, S. and Y.~Heller (2024).
\newblock Stable mixing in hawk–dove games under best experienced payoff
  dynamics.
\newblock Working Paper.

\bibitem[\protect\citeauthoryear{Arigapudi, Heller, and Milchtaich}{Arigapudi
  et~al.}{2021}]{arigapudi2020instability}
Arigapudi, S., Y.~Heller, and I.~Milchtaich (2021).
\newblock Instability of defection in the prisoner’s dilemma under best
  experienced payoff dynamics.
\newblock {\em Journal of Economic Theory\/}~{\em 197}, 105174.

\bibitem[\protect\citeauthoryear{Avoyan and Ramos}{Avoyan and
  Ramos}{2023}]{ramos2023road}
Avoyan, A. and J.~Ramos (2023).
\newblock A road to efficiency through communication and commitment.
\newblock {\em American Economic Review\/}~{\em 113\/}(9), 2355--2381.

\bibitem[\protect\citeauthoryear{Bilancini, Boncinelli, Ille, and
  Vicario}{Bilancini et~al.}{2022}]{bilancini2022memory}
Bilancini, E., L.~Boncinelli, S.~Ille, and E.~Vicario (2022).
\newblock Memory retrieval and harshness of conflict in the hawk--dove game.
\newblock {\em Economic Theory Bulletin\/}~{\em 10\/}(2), 333--351.

\bibitem[\protect\citeauthoryear{Brunner, Camerer, and Goeree}{Brunner
  et~al.}{2011}]{brunner2011stationary}
Brunner, C., C.~F. Camerer, and J.~K. Goeree (2011).
\newblock Stationary concepts for experimental 2$\times$2 games: Comment.
\newblock {\em American Economic Review\/}~{\em 101\/}(2), 1029--1040.

\bibitem[\protect\citeauthoryear{C{\'a}rdenas, Mantilla, and
  Sethi}{C{\'a}rdenas et~al.}{2015}]{cardenas2015stable}
C{\'a}rdenas, J., C.~Mantilla, and R.~Sethi (2015).
\newblock Stable sampling equilibrium in common pool resource games.
\newblock {\em Games\/}~{\em 6\/}(3), 299--317.

\bibitem[\protect\citeauthoryear{Crawford}{Crawford}{1989}]{crawford1989learning}
Crawford, V.~P. (1989).
\newblock Learning and mixed-strategy equilibria in evolutionary games.
\newblock {\em Journal of Theoretical Biology\/}~{\em 140\/}(4), 537--550.

\bibitem[\protect\citeauthoryear{Danenberg and Spiegler}{Danenberg and
  Spiegler}{2022}]{danenberg2022representative}
Danenberg, T. and R.~Spiegler (2022).
\newblock Representative sampling equilibrium.
\newblock Working Paper.

\bibitem[\protect\citeauthoryear{Fudenberg and Levine}{Fudenberg and
  Levine}{1995}]{fudenberg1995consistency}
Fudenberg, D. and D.~K. Levine (1995).
\newblock Consistency and cautious fictitious play.
\newblock {\em Journal of Economic Dynamics and Control\/}~{\em 19\/}(5--7),
  1065--1089.

\bibitem[\protect\citeauthoryear{Genesove and Mayer}{Genesove and
  Mayer}{2001}]{genesove2001loss}
Genesove, D. and C.~Mayer (2001).
\newblock Loss aversion and seller behavior: Evidence from the housing market.
\newblock {\em Quarterly Journal of Economics\/}~{\em 116\/}(4), 1233--1260.

\bibitem[\protect\citeauthoryear{Goeree, Holt, and Palfrey}{Goeree
  et~al.}{2003}]{goeree2003risk}
Goeree, J.~K., C.~A. Holt, and T.~R. Palfrey (2003).
\newblock Risk averse behavior in generalized matching pennies games.
\newblock {\em Games and Economic Behavior\/}~{\em 45\/}(1), 97--113.

\bibitem[\protect\citeauthoryear{Green}{Green}{1983}]{green1983fixed}
Green, D.~N. (1983).
\newblock Fixed point properties of the binomial function.
\newblock {\em Journal of the Franklin Institute\/}~{\em 316\/}(1), 51--62.

\bibitem[\protect\citeauthoryear{Harsanyi and Selten}{Harsanyi and
  Selten}{1988}]{harsanyi1988general}
Harsanyi, J.~C. and R.~Selten (1988).
\newblock {\em A General Theory of Equilibrium Selection in Games}.
\newblock MIT Press.

\bibitem[\protect\citeauthoryear{Hauert and Miekisz}{Hauert and
  Miekisz}{2018}]{hauert2018effects}
Hauert, C. and J.~Miekisz (2018).
\newblock Effects of sampling interaction partners and competitors in
  evolutionary games.
\newblock {\em Physical Review E\/}~{\em 98\/}(5), 052301.

\bibitem[\protect\citeauthoryear{Heller}{Heller}{2017}]{heller2017instability}
Heller, Y. (2017).
\newblock Instability of belief-free equilibria.
\newblock {\em Journal of Economic Theory\/}~{\em 168}, 261--286.

\bibitem[\protect\citeauthoryear{Heller and Mohlin}{Heller and
  Mohlin}{2018}]{heller2018social}
Heller, Y. and E.~Mohlin (2018).
\newblock Social learning and the shadow of the past.
\newblock {\em Journal of Economic Theory\/}~{\em 177}, 426--460.

\bibitem[\protect\citeauthoryear{Hwang and Newton}{Hwang and
  Newton}{2017}]{hwang2017payoff}
Hwang, S.-H. and J.~Newton (2017).
\newblock Payoff-dependent dynamics and coordination games.
\newblock {\em Economic Theory\/}~{\em 64}, 589--604.

\bibitem[\protect\citeauthoryear{Izquierdo and Izquierdo}{Izquierdo and
  Izquierdo}{2022}]{izquierdo2022stability}
Izquierdo, S.~S. and L.~R. Izquierdo (2022).
\newblock Stability of strict equilibria in best experienced payoff dynamics:
  Simple formulas and applications.
\newblock {\em Journal of Economic Theory\/}~{\em 206}, 105553.

\bibitem[\protect\citeauthoryear{Kandori, Mailath, and Rob}{Kandori
  et~al.}{1993}]{kandori1993learning}
Kandori, M., G.~J. Mailath, and R.~Rob (1993).
\newblock Learning, mutation, and long run equilibria in games.
\newblock {\em Econometrica\/}~{\em 61\/}(1), 29--56.

\bibitem[\protect\citeauthoryear{Kaniovski and Young}{Kaniovski and
  Young}{1995}]{kaniovski1995learning}
Kaniovski, Y.~M. and H.~P. Young (1995).
\newblock Learning dynamics in games with stochastic perturbations.
\newblock {\em Games and Economic Behavior\/}~{\em 11\/}(2), 330--363.

\bibitem[\protect\citeauthoryear{Koster and Rouwendal}{Koster and
  Rouwendal}{2024}]{koster2024housing}
Koster, H.~R. and J.~Rouwendal (2024).
\newblock Housing market discount rates: Evidence from bargaining and bidding
  wars.
\newblock {\em International Economic Review\/}~{\em 65\/}(2), 955--1002.

\bibitem[\protect\citeauthoryear{Kreindler and Young}{Kreindler and
  Young}{2013}]{kreindler2013fast}
Kreindler, G.~E. and H.~P. Young (2013).
\newblock Fast convergence in evolutionary equilibrium selection.
\newblock {\em Games and Economic Behavior\/}~{\em 80}, 39--67.

\bibitem[\protect\citeauthoryear{Larsen}{Larsen}{2021}]{larsen2021efficiency}
Larsen, B.~J. (2021).
\newblock The efficiency of real-world bargaining: Evidence from wholesale
  used-auto auctions.
\newblock {\em Review of Economic Studies\/}~{\em 88\/}(2), 851--882.

\bibitem[\protect\citeauthoryear{Lyu, Li, and Xu}{Lyu
  et~al.}{2022}]{lyu-in-Xu-2022}
Lyu, J., Z.~Li, and Y.~Xu (2022).
\newblock Inexact information, strategic sophistication and equilibrium
  transition: A quasi-continuous-time experiment.
\newblock Working Paper.

\bibitem[\protect\citeauthoryear{Mantilla, Sethi, and C{\'a}rdenas}{Mantilla
  et~al.}{2018}]{mantilla2018efficiency}
Mantilla, C., R.~Sethi, and J.~C. C{\'a}rdenas (2018).
\newblock Efficiency and stability of sampling equilibrium in public goods
  games.
\newblock {\em Journal of Public Economic Theory\/}~{\em 22\/}(2), 355--370.

\bibitem[\protect\citeauthoryear{Maynard~Smith}{Maynard~Smith}{1982}]{smith1982evolution}
Maynard~Smith, J. (1982).
\newblock {\em Evolution and the Theory of Games}.
\newblock Cambridge University Press.

\bibitem[\protect\citeauthoryear{Maynard~Smith and Price}{Maynard~Smith and
  Price}{1973}]{smith1973logic}
Maynard~Smith, J. and G.~R. Price (1973).
\newblock The logic of animal conflict.
\newblock {\em Nature\/}~{\em 246\/}(5427), 15--18.

\bibitem[\protect\citeauthoryear{McKelvey, Palfrey, and Weber}{McKelvey
  et~al.}{2000}]{mckelvey2000effects}
McKelvey, R.~D., T.~R. Palfrey, and R.~A. Weber (2000).
\newblock The effects of payoff magnitude and heterogeneity on behavior in
  2$\times$2 games with unique mixed strategy equilibria.
\newblock {\em Journal of Economic Behavior \& Organization\/}~{\em 42\/}(4),
  523--548.

\bibitem[\protect\citeauthoryear{Morris, Rob, and Shin}{Morris
  et~al.}{1995}]{morris1995p}
Morris, S., R.~Rob, and H.~S. Shin (1995).
\newblock p-dominance and belief potential.
\newblock {\em Econometrica\/}~{\em 63\/}(1), 145--157.

\bibitem[\protect\citeauthoryear{Myerson and Weibull}{Myerson and
  Weibull}{2015}]{myerson2015tenable}
Myerson, R. and J.~Weibull (2015).
\newblock Tenable strategy blocks and settled equilibria.
\newblock {\em Econometrica\/}~{\em 83\/}(3), 943--976.

\bibitem[\protect\citeauthoryear{Nax and Newton}{Nax and
  Newton}{2022}]{nax2022deep}
Nax, H.~H. and J.~Newton (2022).
\newblock Deep and shallow thinking in the long run.
\newblock {\em Theoretical Economics\/}~{\em 17\/}(4), 1501--1527.

\bibitem[\protect\citeauthoryear{Oprea, Henwood, and Friedman}{Oprea
  et~al.}{2011}]{oprea2011separating}
Oprea, R., K.~Henwood, and D.~Friedman (2011).
\newblock Separating the hawks from the doves: Evidence from continuous time
  laboratory games.
\newblock {\em Journal of Economic Theory\/}~{\em 146\/}(6), 2206--2225.

\bibitem[\protect\citeauthoryear{Osborne and Rubinstein}{Osborne and
  Rubinstein}{1998}]{osborne1998games}
Osborne, M.~J. and A.~Rubinstein (1998).
\newblock Games with procedurally rational players.
\newblock {\em American Economic Review\/}~{\em 88\/}(4), 834--847.

\bibitem[\protect\citeauthoryear{Osborne and Rubinstein}{Osborne and
  Rubinstein}{2003}]{osborne2003sampling}
Osborne, M.~J. and A.~Rubinstein (2003).
\newblock Sampling equilibrium, with an application to strategic voting.
\newblock {\em Games and Economic Behavior\/}~{\em 45\/}(2), 434--441.

\bibitem[\protect\citeauthoryear{Oyama, Sandholm, and Tercieux}{Oyama
  et~al.}{2015}]{oyama2015sampling}
Oyama, D., W.~H. Sandholm, and O.~Tercieux (2015).
\newblock Sampling best response dynamics and deterministic equilibrium
  selection.
\newblock {\em Theoretical Economics\/}~{\em 10\/}(1), 243--281.

\bibitem[\protect\citeauthoryear{Patil and Salant}{Patil and
  Salant}{2024}]{patil2024optimal}
Patil, S. and Y.~Salant (2024).
\newblock Optimal sample sizes and statistical decision rules.
\newblock {\em Theoretical Economics\/}~{\em 19\/}(2), 583--604.

\bibitem[\protect\citeauthoryear{Perko}{Perko}{2013}]{perko2013differential}
Perko, L. (2013).
\newblock {\em Differential {E}quations and {D}ynamical {S}ystems}, Volume~7.
\newblock Springer Science \& Business Media.

\bibitem[\protect\citeauthoryear{Ritzberger and Weibull}{Ritzberger and
  Weibull}{1995}]{ritzberger1995evolutionary}
Ritzberger, K. and J.~W. Weibull (1995).
\newblock Evolutionary selection in normal-form games.
\newblock {\em Econometrica\/}~{\em 63\/}(6), 1371--1399.

\bibitem[\protect\citeauthoryear{Salant and Cherry}{Salant and
  Cherry}{2020}]{salant2020statistical}
Salant, Y. and J.~Cherry (2020).
\newblock Statistical inference in games.
\newblock {\em Econometrica\/}~{\em 88\/}(4), 1725--1752.

\bibitem[\protect\citeauthoryear{Sandholm}{Sandholm}{2001}]{sandholm2001almost}
Sandholm, W.~H. (2001).
\newblock Almost global convergence to $p$-dominant equilibrium.
\newblock {\em International Journal of Game Theory\/}~{\em 30\/}(1), 107--116.

\bibitem[\protect\citeauthoryear{Sandholm}{Sandholm}{2010}]{sandholm2010population}
Sandholm, W.~H. (2010).
\newblock {\em Population Games and Evolutionary Dynamics}.
\newblock MIT Press.

\bibitem[\protect\citeauthoryear{Sandholm, Izquierdo, and Izquierdo}{Sandholm
  et~al.}{2019}]{sandholm2019best}
Sandholm, W.~H., S.~S. Izquierdo, and L.~R. Izquierdo (2019).
\newblock Best experienced payoff dynamics and cooperation in the centipede
  game.
\newblock {\em Theoretical Economics\/}~{\em 14\/}(4), 1347--1385.

\bibitem[\protect\citeauthoryear{Sandholm, Izquierdo, and Izquierdo}{Sandholm
  et~al.}{2020}]{sandholm2020stability}
Sandholm, W.~H., S.~S. Izquierdo, and L.~R. Izquierdo (2020).
\newblock Stability for best experienced payoff dynamics.
\newblock {\em Journal of Economic Theory\/}~{\em 185}, 104957.

\bibitem[\protect\citeauthoryear{Sawa and Wu}{Sawa and
  Wu}{2023}]{sawa2021statistical}
Sawa, R. and J.~Wu (2023).
\newblock Statistical inference in evolutionary dynamics.
\newblock {\em Games and Economic Behavior\/}~{\em 137}, 294--316.

\bibitem[\protect\citeauthoryear{Selten and Chmura}{Selten and
  Chmura}{2008}]{selten2008stationary}
Selten, R. and T.~Chmura (2008).
\newblock Stationary concepts for experimental 2x2-games.
\newblock {\em American Economic Review\/}~{\em 98\/}(3), 938--966.

\bibitem[\protect\citeauthoryear{Sethi}{Sethi}{2021}]{Raj}
Sethi, R. (2021).
\newblock Stable sampling in repeated games.
\newblock {\em Journal of Economic Theory\/}~{\em 197}, 105343.

\bibitem[\protect\citeauthoryear{Tversky and Kahneman}{Tversky and
  Kahneman}{1971}]{tversky1971belief}
Tversky, A. and D.~Kahneman (1971).
\newblock Belief in the law of small numbers.
\newblock {\em Psychological Bulletin\/}~{\em 76\/}(2), 105--110.

\bibitem[\protect\citeauthoryear{Van~Huyck, Battalio, and Beil}{Van~Huyck
  et~al.}{1990}]{vanHuyck1990tacit}
Van~Huyck, J.~B., R.~C. Battalio, and R.~O. Beil (1990).
\newblock Tacit coordination games, strategic uncertainty, and coordination
  failure.
\newblock {\em American Economic Review\/}~{\em 80\/}(1), 234--248.

\bibitem[\protect\citeauthoryear{Weibull}{Weibull}{1997}]{weibull1997evolutionary}
Weibull, J.~W. (1997).
\newblock {\em Evolutionary Game Theory}.
\newblock MIT Press.

\bibitem[\protect\citeauthoryear{Young}{Young}{1993}]{young1993evolution}
Young, H.~P. (1993).
\newblock The evolution of conventions.
\newblock {\em Econometrica\/}~{\em 61\/}(1), 57--84.

\bibitem[\protect\citeauthoryear{Young}{Young}{1998}]{young1998conventional}
Young, H.~P. (1998).
\newblock Conventional contracts.
\newblock {\em Review of Economic Studies\/}~{\em 65\/}(4), 773--792.

\end{thebibliography}
\end{singlespace}
\end{adjustwidth}

\newpage
\pagenumbering{arabic}

\section{Online Appendix\label{sec:appendix}}\label{online-appendix}

 \subsection{Standard Representation of Coordination Games}\label{subsec-general-coord}
\paragraph{Symmetric Coordination Games}
We first show that the one-parameter standard representation (left panel of Table \ref{tab-symmetric}) w.l.o.g. captures 
 any two-action symmetric coordination game (in line with \citeauthor*{harsanyi1988general}'s (\citeyear{harsanyi1988general}) Axiom 2 of best-response invariance). 

A two-action symmetric coordination game is characterized by two strict equilibria on the main diagonal, as represented by the right panel of Table \ref{tab-symmetric}. Sampling dynamics (defined in \ref{eq:dynamics-sym} and \ref{eq:action-sampling-dyanmics}) depend solely on the differences between the payoffs a player can achieve by choosing different actions. (This property also holds for best-response and logit dynamics, implying that the sets of Nash equilibria, quantal response equilibria, and evolutionarily stable strategies depend only on these payoff differences.) These differences remain unchanged when a constant is subtracted from all of a player’s payoffs while holding the opponent's action fixed
(e.g., subtracting $u_{21}$ from all of Player 1's first-column payoffs).
Additionally, the sampling dynamics (and the other dynamics and solution concepts mentioned above) are invariant to dividing a player’s payoffs by a positive constant (which preserves vN--M utility). The right panel of Table \ref{tab-symmetric} is reduced to the left panel by the following steps (which do not affect the sampling dynamics): 
\begin{enumerate}
\item Subtracting\textcolor{blue}{{}
$u_{21}$} from Player 1's payoffs in her first column. 
\item Subtracting\textcolor{blue}{{}
$u_{12}$} from Player 1's payoffs in her second column.
\item Dividing
Player 1's payoffs by ${\color{blue}u_{22}-u_{12}}$
\end{enumerate}

\paragraph{General Coordination Games}\label{subsec-general-coord}
Next, we show that two-parameter standard representation (Table \ref{tab:standard-coordination-game-gen}, with $u_1\geq1$) can capture 
any two-action game that admits two strict Nash
equilibria. By relabeling Player 1's actions, we can assume w.l.o.g.
that these two pure equilibria are $\left(a_{1},a_{2}\right)$ and
$\left(b_{1},b_{2}\right).$ If the two pure equilibria are instead
$\left(a_{1},b_{2}\right)$ and $\left(b_{1},a_{2}\right)$, we  switch the Player 1's action labels: $a_{1}$$\leftrightarrow$$b_{1}).$
This implies that the left panel of Table \ref{tab:normalization-of-coordiantion-game}
shows a parametric representation of all two-action coordination games.

\begin{table}[b]
\caption{\label{tab:normalization-of-coordiantion-game}Normalization of General
Two-Action Coordination Games}

\centering{}%
\begin{tabular}{|c|c|c|c|c|c|c|}
\multicolumn{3}{c}{Original Representation} & \multicolumn{1}{c}{} & \multicolumn{3}{c}{Standard Representation}\tabularnewline
\cline{1-3} \cline{2-3} \cline{3-3} \cline{5-7} \cline{6-7} \cline{7-7} 
 & \textcolor{red}{\emph{$a_{2}$}} & \textcolor{red}{\emph{$b_{2}$}} & \multirow{3}{*}{$\begin{array}{c}
\\
\\
\Rightarrow\\
\\
\\
\end{array}$} &  & \textcolor{red}{\emph{$a_{2}$}} & \textcolor{red}{\emph{$b_{2}$}}\tabularnewline
\cline{1-3} \cline{2-3} \cline{3-3} \cline{5-7} \cline{6-7} \cline{7-7} 
\textcolor{blue}{\emph{$~~a_{1}~~$}} & \emph{\Large{}$\,\,{}_{\underset{\,}{{\color{blue}u_{11}}}}{\color{red}\,^{\overset{\,}{v_{11}}}}$~~} & \emph{\Large{}$\,\,{}_{\underset{\,}{{\color{blue}u_{12}}}}{\color{red}\,^{\overset{\,}{v_{12}}}}$~~} &  & \textcolor{blue}{\emph{$~~a_{1}~~$}} & \emph{\Large{}$_{\color{blue}u_1=}\frac{{\color{blue}u_{11}}-{\color{blue}u_{21}}}{{\color{blue}u_{22}-}{\color{blue}u_{1\underset{\,}{2}}}}{\color{red}\,^{\overset{}{u_2=\frac{v_{11}-v_{12}}{v_{22}-v_{21}}}}}$} & \emph{\Large{}$\,\,\,\,{}_{\underset{\,}{{\color{blue}0}}}{\color{red}\,\,\,\,^{\overset{\,}{0}}}$~~~~}\tabularnewline
\cline{1-3} \cline{2-3} \cline{3-3} \cline{5-7} \cline{6-7} \cline{7-7} 
\textcolor{blue}{\emph{$~~b_{1}~~~$}} & \emph{\Large{}$\,\,{}_{\underset{\,}{{\color{blue}u_{21}}}}{\color{red}\,^{\overset{\,}{v_{21}}}}$~~} & \emph{\Large{}$\,\,{}_{\underset{\,}{{\color{blue}u_{22}}}}{\color{red}\,^{\overset{\,}{v_{22}}}}$~~} &  & \textcolor{blue}{\emph{$~~b_{1}~~~$}} & \emph{\Large{}$\,\,{}_{\underset{\,}{{\color{blue}0}}}{\color{red}\,\,\,\,\,\,\,\,\,^{\overset{\,}{0}}}$~~} & \emph{\Large{}$\,\,{}_{\underset{\,}{{\color{blue}1}}}\,\,\,{\color{red}\,^{\overset{\,}{1}}}$~~}\tabularnewline
\cline{1-3} \cline{2-3} \cline{3-3} \cline{5-7} \cline{6-7} \cline{7-7} 
\multicolumn{7}{c}{\textcolor{blue}{$\underset{}{{\color{blue}u_{11}>u_{21}},u_{22}>u_{12}}$,}\textcolor{red}{{}
$\underset{}{v_{11}>v_{12}},\underset{}{v_{22}>v_{21}}$}}\tabularnewline
\end{tabular}
\end{table}
As mentioned above, sampling dynamics (as defined in \ref{eq:action-sampling-dyanmics-g}) depend only on the differences between the payoffs
a player can get by playing different actions. These differences are invariant to (1) subtracting a constant from all the payoffs of a player while fixing the opponent's action, and (2) dividing
player's payoffs by a positive constant (which preserves the
vN--M utility). The left panel of Table \ref{tab:normalization-of-coordiantion-game}
is reduced to the right panel by the following steps (none of which affect the sampling dynamics):
\begin{enumerate}
\item Three changes to Player 1's payoffs: (I) subtracting\textcolor{blue}{{}
$u_{21}$} from Player 1's payoffs in her first column, (II) subtracting\textcolor{blue}{{}
$u_{12}$} from Player 1's payoffs in her second column, and (III) dividing
Player 1's payoffs by ${\color{blue}u_{22}-}u_{12}$; and 
\item Three changes to Player 2's payoffs: (I) subtracting \textcolor{red}{$v_{12}$}
from Player 1's payoffs in her first row, (II) subtracting \textcolor{blue}{${\color{red}v_{21}}$}
from Player 1's payoffs in her first row, and (III) dividing Player 1's
payoffs by ${\color{red}v_{22}-v_{21}}$.
\end{enumerate}
Observe that the assumption that ${\color{blue}u_{1}=}\frac{{\color{blue}u_{11}}-{\color{blue}u_{21}}}{{\color{blue}u_{22}-}{\color{blue}u_{12}}}\geq1$
in the standard representation of Table \ref{tab:standard-coordination-game-gen}
is w.l.o.g.. If $\frac{{\color{blue}u_{11}}-{\color{blue}u_{21}}}{{\color{blue}u_{22}-}{\color{blue}{\color{blue}u_{12}}}}<1$,
then we can multiply all of Player 1's payoffs by $\frac{{\color{blue}u_{22}-}{\color{blue}{\color{blue}u_{12}}}}{{\color{blue}u_{11}}-{\color{blue}u_{21}}}$
and all of Player 2's payoffs by ${\color{red}\frac{v_{22}-v_{21}}{v_{11}-v_{12}}}$,
relabel the actions $a_{i}\leftrightarrow b_{i}$ for both players,
and obtain a standard representation in which $u_{1}\geq1$ as in Table \ref{tab:standard-coordination-game-gen}.
\begin{example}
\label{exa:motivating-hawk--dove} Consider a \emph{hawk--dove} (aka chicken)
game, 
which can be interpreted as bargaining over the price of an asset
(e.g., a house) between a buyer and a seller. Each player can either
insist on a more favorable price (“hawk”) or agree to a less favorable
price in order to close the deal (“dove”). The left panel
of Table \ref{tab:normalization-of-hawk--dove-games} shows the 
payoffs of a hawk--dove game. Two doves agree on an equally favorable
price 
(which gives both players a relatively high payoff normalized to one). A hawk obtains a favorable price when matched with a
dove (which increases the payoff to the hawk by $g\in(0,1)$, while reducing the dove's payoff by $l\in(0,1)$), but faces a high probability of bargaining failure when matched with
another hawk (which yields a low payoff of zero to both hawks).

Observe that a hawk--dove game can be transformed to our standard representation
of a coordination game (the right panel of Table \ref{tab:normalization-of-hawk--dove-games})
as follows: (1) relabel the actions of Player 1 such that $a_{1}=d_{1}$
and $b_{1}=h_{1}$ (while keeping the actions of Player 2 as $a_{2}=h_{2}$ and $b_{2}=d_{2}$), (2)
subtract a payoff of 1 from Player 1's payoffs in her second column
and from Player 2's payoffs in her first column, and (3) divide all
the payoffs of Player 1 by $g$, and all of the payoffs of Player 2 by $1-l$.
Observe that the induced standard representation is 
antisymmetric, i.e.,  $u_{1}=\frac{{\color{blue}1-l}}{g}=\frac{1}{u_{2}}$. 

\begin{table}
\caption{\label{tab:normalization-of-hawk--dove-games}Normalization of Hawk--Dove
Games ($g,l\in\left(0,1\right)$)}

\centering{}%
\begin{tabular}{|c|c|c|c|c|c|c|}
\multicolumn{3}{c}{Original Representation} & \multicolumn{1}{c}{} & \multicolumn{3}{c}{Standard Representation}\tabularnewline
\cline{1-3} \cline{2-3} \cline{3-3} \cline{5-7} \cline{6-7} \cline{7-7} 
 & \textcolor{red}{\emph{$h_{2}$}} & \textcolor{red}{\emph{$d_{2}$}} & \multirow{3}{*}{$\begin{array}{c}
\\
\\
\Rightarrow\\
\\
\\
\end{array}$} &  & \textcolor{red}{\emph{$a_{2}=h_{2}$}} & \textcolor{red}{\emph{$b_{2}=d_{2}$}}\tabularnewline
\cline{1-3} \cline{2-3} \cline{3-3} \cline{5-7} \cline{6-7} \cline{7-7} 
\textcolor{blue}{\emph{$~h_{1}$}} & \emph{\Large{}$\,\,{}_{\underset{\,}{{\color{blue}0}}}{\color{red}\,^{\overset{\,}{0}}}$~~} & $_{\underset{\,}{{\color{blue}1+g}}}\,\,\,{\color{red}\,^{\overset{\,}{1-l}}}$ &  & \textcolor{blue}{\emph{$~~~a_{1}=d_{1}~$}} & $\,\,{}_{\underset{\,}{{\color{blue}\frac{{\color{blue}1-l}}{g}}}}\,\,\,{\color{red}\,^{\overset{\,}{\frac{g}{1-l}}}}$ & \emph{\Large{}$\,{}_{\underset{\,}{{\color{blue}0}}}{\color{red}\,^{\overset{\,}{0}}}$~}\tabularnewline
\cline{1-3} \cline{2-3} \cline{3-3} \cline{5-7} \cline{6-7} \cline{7-7} 
\textcolor{blue}{\emph{$~d_{1}$}} & $\,\,{}_{\underset{\,}{{\color{blue}1-l}}}\,\,\,{\color{red}\,^{\overset{\,}{1+g}}}$ & \emph{\Large{}$_{\underset{\,}{{\color{blue}1}}}\,\,\,{\color{red}\,^{\overset{\,}{1}}}$} &  & \textcolor{blue}{\emph{$~~~b_{1}=h_{1}~~$}} & \emph{\Large{}$\,\,{}_{\underset{\,}{{\color{blue}0}}}{\color{red}\,\,\,\,\,\,\,^{\overset{\,}{0}}}$~~} & \emph{\Large{}$\,\,{}_{\underset{\,}{{\color{blue}1}}}{\color{red}\,^{\overset{\,}{1}}}$~~}\tabularnewline
\hline 
\end{tabular}
\end{table}
\end{example}

\subsection{Rephrasing Our Results in Terms of $q$-Dominance}\label {subsec:stability-pure-q-dominance}
In this appendix, we rephrase our results in terms of $q$-dominance (\cite{morris1995p} of the equilibria, rather than payoffs $(u_1,u_2)$. This serves two purposes: (1) to facilitate comparison with the literature, which often presents results in terms of $q$-dominance (e.g., \citealp{oyama2015sampling}), and (2) to provide a clearer interpretation of the results that is invariant to payoff normalization. Specifically, normalizing a game to fit the standard two-parameter representation (see Appendix \ref{subsec-general-coord}), alters the payoffs, yet preserves the $q$-dominance of the equilibria.

Fix $q\in\left[0,1\right]$. 
We say that action
$a_{i}$ (resp., $b_{i}$)
is \emph{$q$-dominant} (\citealp{morris1995p})
for player $i$ if it is a strict best response to any opponent's mixed
action that assigns a mass of at least $q$ to the 
counterpart action
$a_{j}$ (resp., $b_j$). Notice that both actions are 1-dominant (due
to being part of a strict equilibrium). 
Additionally, it can be noted that as $q$ decreases, the $q$-dominance condition becomes more stringent. In other words, if an action is $q$-dominant, it also satisfies $r$-dominance for any $r$ between $q$ and 1. 
Lastly, it can be observed that action $a_i$ (resp., $b_i$)
is $q$-dominant iff $q>\frac{1}{1+u_i}$ (resp., $q>\frac{u_i}{1+u_i}$).

Observe that $q$-dominance depends only on the differences between
the payoffs a player can get by playing the different actions. This
implies that $q$-dominance is invariant to the payoff transformations detailed in Appendix \ref{subsec-general-coord}. Thus, an action is $q$-dominant
in the standard representation (left panel of Table \ref{tab:normalization-of-coordiantion-game})
iff it is $q$-dominant in the original representation (right panel of Table \ref{tab:normalization-of-coordiantion-game}).
By contrast, payoff dominance is not invariant to these two transformations. Specifically, adding a constant to the two payoffs of 
Player 1 (Player 2) in the same column (row) might change a Pareto-dominated equilibrium into a Pareto-dominant equilibrium.

The following definition will be useful for our characterization. 
\begin{defn}
Action profile $\textbf{a}$ (resp., $\textbf{b}$) is tightly $\textbf{q}$-dominant if each action $a_i$ (resp., $b_i$) is $r_i$-dominant iff $r_i>q_i$.
\end{defn}

Observe that:
\begin{enumerate}
\item $\textbf{a}$ is  tightly $(q_1,q_2)$-dominant iff $\textbf{b}$ is  tightly $(1-q_1,1-q_2)$-dominant. 
    \item If the payoffs of the coordination game (in its standard representation) are $(u_1,u_2)$, then equilibrium $\textbf{a}$ is  tightly ($\frac{1}{1+u_1},\frac{1}{1+u_2})$-dominant.
    \item if equilibrium $\textbf{a}$ is tightly $(q_1,q_2)$-dominant, then the standard representation of the payoff matrix is $u_i=\frac{1-q_i}{q_i}.$
\end{enumerate} 

We redefine an environment as a pair $(\textbf{q},
\boldsymbol{\theta})$, 
where $\textbf{q}$ represents the level of tight dominance of equilibrium $\textbf{a}$ and $\boldsymbol{\theta})$ is the sample-size distribution profile. W.l.o.g. we assume $q_1\geq\frac{1}{2}.$
The above observations imply the following rephrasing of our results.
\theoremstyle{plain}
\newtheorem*{theorem2hat}{Theorem $\hat{2}$}
\begin{theorem2hat}[Rephrasing of Theorem \ref{thm:locally-stable_interior_if_pure_is_unstable}]
If for each population $i$, $1<\max(\emph{supp}(\theta_i))< \frac{1}{q_i}.$ Then, there exists  a proportion $\alpha_i$ such that significantly increasing the sample sizes of $\alpha_i$ of the agents in each population $i$ induces an environment with an asymptotically stable interior state.
\end{theorem2hat}


\theoremstyle{plain}
\newtheorem*{theorem3hat}{Theorem $\hat{3}$}
\begin{theorem3hat}[Rephrasing of Theorem \ref{thm3}]
For any sample size distribution profile, if $q_1$ is sufficiently small and $q_2$ is sufficiently large, then significantly increasing the sample size of half of the agents in each population induces an asymptotically stable interior state with a miscoordination probability of at least 50\%.
\end{theorem3hat}


\theoremstyle{plain}
\newtheorem*{theorem4hat}{Theorem $\hat{4}$}
\begin{theorem4hat}[Rephrasing of Theorem \ref{thm4}]
\,\\
\vspace{-20px}
\begin{enumerate}
\item \textbf{Global convergence to miscoordination}: Assume that \vspace{-10px}
\[
\theta_{1}\left(1\right)\cdot\mathbb{E}_{<\frac{1}{1-q_2}}\left(\theta_{2}\right)>1\,\,\textrm{\,\,and}\,\,\,\,\theta_{2}\left(1\right)\cdot\mathbb{E}_{
< \frac{1}{q_1}}\left(\theta_{1}\right)>1.
\]
If $\boldsymbol{p}(0)\notin\left\{ (0,0),(1,1)\right\} $, then $\lim_{t\rightarrow\infty}\mathbf{p}\left(t\right)\notin\left\{ (0,0),(1,1)\right\} $. 
\item \textbf{Local convergence to coordination}: Assume that 
\[
\theta_{1}\left(1\right)\cdot\mathbb{E}_{<\frac{1}{1-q_2}}\left(\theta_{2}\right)<1\,\,\textrm{\,\,or}\,\,\,\,\theta_{2}\left(1\right)\cdot\mathbb{E}_{
< \frac{1}{q_1}}\left(\theta_{1}\right)<1.
\]
Then at least one of the pure equilibria is asymptotically stable.
\end{enumerate}
\end{theorem4hat}

\theoremstyle{plain}
\newtheorem*{prop4hat}{Proposition $\hat{4}$}
\begin{prop4hat}
Assume that pure equilibrium $\textbf{a}$ is tightly $(q_1,q_2)$-dominant. Then\footnote{The slight difference in using strict inequalities for $\textbf{a}$ and weak inequalities for $\textbf{b}$ is due to our $a_i$-favorable tie-breaking rule.} 
 
\begin{enumerate}
    \item $\mathbb{E}_{<\frac{1}{1-q_1}}\left(\theta_{1}\right)\cdot\mathbb{E}_{<\frac{1}{1-q_2}}\left(\theta_{2}\right)>1\,\Rightarrow$ $\left(a_{1},a_{2}\right)$ is unstable;
    \item $\mathbb{E}_{<\frac{1}{1-q_1}}\left(\theta_{1}\right)\cdot\mathbb{E}_{<\frac{1}{1-q_2}}\left(\theta_{2}\right)<1\,\Rightarrow$ $\left(a_{1},a_{2}\right)$ is asymptotically stable;
    \item $\mathbb{E}_{\leq\frac{1}{q_1}}\left(\theta_{1}\right)\cdot\mathbb{E}_{\leq\frac{1}{q_2}}\left(\theta_{2}\right)>1\,\Rightarrow$ $\left(b_{1},b_{2}\right)$ is unstable; and
    \item $\mathbb{E}_{\leq\frac{1}{q_1}}\left(\theta_{1}\right)\cdot\mathbb{E}_{\leq\frac{1}{q_2}}\left(\theta_{2}\right)<1\,\Rightarrow$ $\left(b_{1},b_{2}\right)$ is asymptotically stable.
\end{enumerate}
\end{prop4hat}

\subsection{Coordination Games with More Than Two Actions}\label{sec-multiple-actions}

\subsubsection{Extended model}
We redefine the underlying game as a two-player
coordination game with $M\geq2$ actions. The action sets $A_{i}=\left(a_{i}^{1},...,a_{i}^{M}\right)$ are finite, with positive payoffs on the main diagonal and zero off-diagonal payoffs:  (1) $u_{i}^{m}\equiv u_{i}\left(a_{1}^{m},a_{2}^{m}\right)>0$
for each $m$, and (2) $u_{i}\left(a_{1}^{m},a_{2}^{n}\right)=0$
for each $m\neq n$. The payoff matrix is shown in Table \ref{tab:Payoff-Matrix-for-coordination-games-M-actions}. To insure results hold for all tie-breaking rules, we take the following mild genericity assumption: there do not exist two pure equilibria that yield the same payoff profile
(i.e., if $u_i^m={u_i^m}'$ then $u_j^m\neq {u_j^m}'$).

This class of coordination games is significant as it captures common
situations in which players must collectively agree on the terms of
a joint venture that could yield positive benefits for both parties
involved. Each action profile on the main diagonal represents potential
mutually agreed-upon terms, while off-diagonal action profiles represent
disagreement, resulting in the failure of the joint venture and, consequently,
a low payoff (normalized to zero) for both players. These games  are called pure coordination games or contracting games
(see, e.g., \citealp{young1998conventional, hwang2017payoff}).
\begin{center}
\begin{table}
\begin{centering}
\caption{\label{tab:Payoff-Matrix-for-coordination-games-M-actions}Payoff
Matrix for Coordination Games with $M\protect\geq2$ Actions}
\begin{tabular}{|c|c|c|c|c|c|}
\hline 
 & \textcolor{red}{$a_{2}^{1}$} & \textcolor{red}{...} & \textcolor{red}{$a_{2}^{m}$} & \textcolor{red}{...} & \textcolor{red}{$a_{2}^{M}$}\tabularnewline
\hline 
\hline 
\textcolor{blue}{$a_{1}^{1}$} & ${\color{blue}u_{1}^{1}},\,{\color{red}u_{2}^{1}}$ & \textcolor{blue}{0},\textcolor{red}{0} & \textcolor{blue}{0},\textcolor{red}{0} & \textcolor{blue}{0},\textcolor{red}{0} & \textcolor{blue}{0},\textcolor{red}{0}\tabularnewline
\hline 
\textcolor{blue}{...} & \textcolor{blue}{0},\textcolor{red}{0} & ... & \textcolor{blue}{0},\textcolor{red}{0} & \textcolor{blue}{0},\textcolor{red}{0} & \textcolor{blue}{0},\textcolor{red}{0}\tabularnewline
\hline 
\multicolumn{1}{|c|}{\textcolor{blue}{$a_{1}^{m}$}} & \textcolor{blue}{0},\textcolor{red}{0} & \textcolor{blue}{0},\textcolor{red}{0} & ${\color{blue}u_{1}^{m}},\,{\color{red}u_{2}^{m}}$ & \textcolor{blue}{0},\textcolor{red}{0} & \textcolor{blue}{0},\textcolor{red}{0}\tabularnewline
\hline 
\textcolor{blue}{...} & \textcolor{blue}{0},\textcolor{red}{0} & \textcolor{blue}{0},\textcolor{red}{0} & \textcolor{blue}{0},\textcolor{red}{0} & ... & \textcolor{blue}{0},\textcolor{red}{0}\tabularnewline
\hline 
\textcolor{blue}{$a_{1}^{M}$} & \textcolor{blue}{0},\textcolor{red}{0} & \textcolor{blue}{0},\textcolor{red}{0} & \textcolor{blue}{0},\textcolor{red}{0} & \textcolor{blue}{0},\textcolor{red}{0} & ${\color{blue}u_{1}^{M}},\,\,{\color{red}u_{2}^{M}}$\tabularnewline
\hline 
\end{tabular}
\par\end{centering}
\end{table}
\par\end{center}
\vspace{-5px}
We redefine an environment as  a tuple $\left(G,\boldsymbol{\theta}\right)$,
where $G$ is a two-player coordination game with $M\geq2$ actions, and $\boldsymbol{\theta}$
is the profile of sample size distributions. A state of population
$i$ is a distribution $p_{i}\in\Delta\left(A_{i}\right)$ over the
actions of player $i$. As in the baseline model, each new agent
with sample size $k$ samples $k$ random actions of her opponent and
plays the action that maximizes her payoff against the
sample. The results of this section hold under any tie-breaking rule.
All other aspects of the model remain unchanged.

A pure equilibrium $\boldsymbol{a}^{m}=\left(a_{1}^{m},a_{2}^{m}\right)$
is \emph{Pareto efficient} if for each $n$,
$u_{1}^{m}<u_{1}^{n}$
implies that $u_{2}^{m}>u_{2}^{n}$.
Let $\bar{u}_{i}$ denote the
highest feasible payoff of player $i$, i.e., $\bar{u_{i}}=\max_{m\leq M}\left(u_{i}^{m}\right)$.
Let $\bar{m}_{i}$ be the index of an action that induces payoff
$\bar{u_{i}}$, i.e., $u_{i}^{\bar{m}_{i}}=\bar{u_{i}}$.

\subsubsection{Generalized Result}

In what follows, we generalize Theorem \ref{thm:global-mixed} to coordination games with
$M\geq2$ actions. Specifically, we show that similar to the case
of two actions, the stability of each pure equilibrium depends on
whether the product of the share of agents with sample size one and
the truncated expectation of the sample size is larger or smaller than
one. Formally,\footnote{The fact that the truncated expectation has strict inequality in part
(1) and weak inequality in part (2) allows these conditions to be
valid under any tie-breaking rule.}
\begin{prop}[Generalization of Theorem \ref{thm:global-mixed}]~\\
\label{thm-1-generelized}
\vspace{-20px}
\begin{enumerate}
\item Assume that for each pure equilibrium $\boldsymbol{a}^{m},$
\vspace{-10px}
\[
\theta_{1}\left(1\right)\cdot\mathbb{E}_{<\frac{\bar{u}_{2}}{u_{2}^{m}}+1}\left(\theta_{2}\right)>1\,\,\textrm{or}\,\,\theta_{2}\left(1\right)\cdot\mathbb{E}_{<\frac{\bar{u}_{1}}{u_{1}^{m}}+1}\left(\theta_{1}\right)>1.
\]

\vspace{-15px} Then all pure equilibria are unstable.
\vspace{-10px}
\item Assume that there exists a Pareto-efficient pure equilibrium $\boldsymbol{a}^{m}$
that satisfies, 
\vspace{-10px}
\[
\theta_{1}\left(1\right)\cdot\mathbb{E}_{\leq\frac{\bar{u}_{2}}{u_{2}^{m}}+1}\left(\theta_{2}\right)<1\,\,\textrm{and}\,\,\theta_{2}\left(1\right)\cdot\mathbb{E}_{\leq\frac{\bar{u}_{1}}{u_{1}^{m}}+1}\left(\theta_{1}\right)<1.
\]

\vspace{-18px} Then equilibrium $\boldsymbol{a}^{m}$ is asymptotically stable.
\qedhere \vspace{10px} 
\end{enumerate}
\end{prop}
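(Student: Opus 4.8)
The plan is to reproduce, in the $M$-action setting, the linearization argument of Proposition \ref{prop:pure-equilbiria-satbility}, exploiting the fact that the off-diagonal payoffs vanish. Fix a pure equilibrium $\boldsymbol{a}^{m}$ and perturb the state so that, in population $i$, a small share $\epsilon_{i}^{n}$ plays each off-equilibrium action $a_{i}^{n}$ ($n\neq m$) and the remainder plays $a_{i}^{m}$. To first order in the $\epsilon$'s, a new agent's sample consists of $a_{j}^{m}$'s together with at most one rare observation $a_{j}^{n}$; the event of seeing two or more rare actions is $O(\epsilon^{2})$ and is discarded. Because payoffs are diagonal, against a sample with $k-1$ copies of $a_{j}^{m}$ and one $a_{j}^{n}$ the only actions yielding positive payoff are $a_{i}^{m}$ (payoff $u_{i}^{m}(k-1)$) and $a_{i}^{n}$ (payoff $u_{i}^{n}$), so the agent switches to $a_{i}^{n}$ iff $k<\frac{u_{i}^{n}}{u_{i}^{m}}+1$. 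Summing over $k$ and weighting by $\theta_{i}$, the linearized flow in direction $n$ is $\dot{\epsilon}_{i}^{n}=\epsilon_{j}^{n}\cdot\mathbb{E}_{<\frac{u_{i}^{n}}{u_{i}^{m}}+1}(\theta_{i})-\epsilon_{i}^{n}$, exactly as in Eq.~\eqref{eq:a_b_theta}. Since distinct directions $n$ decouple at first order, the Jacobian is block-diagonal with $2\times2$ blocks $J^{n}=\bigl(\begin{smallmatrix}-1 & \mathbb{E}_{<\frac{u_{1}^{n}}{u_{1}^{m}}+1}(\theta_{1})\\ \mathbb{E}_{<\frac{u_{2}^{n}}{u_{2}^{m}}+1}(\theta_{2}) & -1\end{smallmatrix}\bigr)$, whose eigenvalues are $-1\pm\sqrt{\mathbb{E}_{<\frac{u_{1}^{n}}{u_{1}^{m}}+1}(\theta_{1})\cdot\mathbb{E}_{<\frac{u_{2}^{n}}{u_{2}^{m}}+1}(\theta_{2})}$. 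Thus block $n$ is destabilizing iff its product of truncated expectations exceeds $1$.

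For Part 1 (instability), I would select a single destabilizing direction. Taking $n=\bar{m}_{2}$ (the action attaining $u_{2}^{\bar{m}_{2}}=\bar{u}_{2}$) and using $\mathbb{E}_{<\frac{u_{1}^{\bar{m}_{2}}}{u_{1}^{m}}+1}(\theta_{1})\geq\theta_{1}(1)$ (valid because $u_{1}^{\bar{m}_{2}}>0$ forces the bound above $1$, so $k=1$ is always counted), the block product dominates $\theta_{1}(1)\cdot\mathbb{E}_{<\frac{\bar{u}_{2}}{u_{2}^{m}}+1}(\theta_{2})$. Symmetrically, $n=\bar{m}_{1}$ yields a product dominating $\theta_{2}(1)\cdot\mathbb{E}_{<\frac{\bar{u}_{1}}{u_{1}^{m}}+1}(\theta_{1})$. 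Hence if either hypothesis of Part 1 holds, the corresponding block has a positive eigenvalue and $\boldsymbol{a}^{m}$ is unstable; the strict truncation makes this robust to the tie-breaking rule, since a strict payoff advantage for $a_{i}^{n}$ forces the switch regardless of ties.

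For Part 2 (asymptotic stability) I would show every block of the Pareto-efficient equilibrium $\boldsymbol{a}^{m}$ is stable. Pareto efficiency together with the genericity assumption gives, for each $n\neq m$, that at least one of $u_{1}^{n}<u_{1}^{m}$ or $u_{2}^{n}<u_{2}^{m}$ holds. In the first case $\frac{u_{1}^{n}}{u_{1}^{m}}+1<2$, so $\mathbb{E}_{\leq\frac{u_{1}^{n}}{u_{1}^{m}}+1}(\theta_{1})=\theta_{1}(1)$, while monotonicity in the truncation bound gives $\mathbb{E}_{\leq\frac{u_{2}^{n}}{u_{2}^{m}}+1}(\theta_{2})\leq\mathbb{E}_{\leq\frac{\bar{u}_{2}}{u_{2}^{m}}+1}(\theta_{2})$; the block product is then at most $\theta_{1}(1)\cdot\mathbb{E}_{\leq\frac{\bar{u}_{2}}{u_{2}^{m}}+1}(\theta_{2})<1$. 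The second case is symmetric and bounded by $\theta_{2}(1)\cdot\mathbb{E}_{\leq\frac{\bar{u}_{1}}{u_{1}^{m}}+1}(\theta_{1})<1$. Using the weak truncation here makes the conclusion hold for the worst-case tie-breaking. Since every $J^{n}$ then has two strictly negative eigenvalues, the full Jacobian is Hurwitz, and the standard linearization theorem (\citealp[Theorems 1 and 2 in Section 2.9]{perko2013differential}) yields asymptotic stability of $\boldsymbol{a}^{m}$.

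The main obstacle will be justifying the block-diagonal first-order expansion rigorously: one must confirm that the probability of two or more rare observations in a single sample is genuinely $O(\epsilon^{2})$ uniformly over the directions $n$, so that the cross-partials $\partial w_{i}^{n}/\partial\epsilon_{\cdot}^{n'}$ with $n'\neq n$ vanish in the limit and the Jacobian truly decouples. A secondary subtlety is that $\boldsymbol{a}^{m}$ is a vertex of the product simplex, so the perturbations $\epsilon_{i}^{n}\geq0$ are one-sided; I would note that the simplex is forward-invariant and that the Hurwitz conclusion applies to the flow restricted to it, so the linearization theorem still delivers asymptotic stability. The role of the genericity assumption and of the strict-versus-weak truncation bounds is precisely to control the boundary case where some $\frac{u_{i}^{n}}{u_{i}^{m}}$ equals an integer, i.e.\ where a sample of exactly that size produces a payoff tie.
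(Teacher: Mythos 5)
Your proposal is correct and follows essentially the same route as the paper's own proof: linearize at $\boldsymbol{a}^{m}$, exploit the zero off-diagonal payoffs to get a block-diagonal Jacobian with one $2\times2$ block per off-equilibrium direction, pick the direction $\bar{m}_{2}$ (resp.\ $\bar{m}_{1}$) and bound its block product below by $\theta_{1}(1)\cdot\mathbb{E}_{<\frac{\bar{u}_{2}}{u_{2}^{m}}+1}(\theta_{2})$ for Part 1, and use Pareto efficiency plus genericity to force $u_{i}^{m'}<u_{i}^{m}$ for some $i$ (hence a factor of $\theta_{i}(1)$) in every block for Part 2, with the strict/weak truncations handling tie-breaking exactly as in the paper's footnote. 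The only cosmetic difference is that the paper phrases Part 1 as differential inequalities for a single perturbation direction rather than as one block of the full Jacobian, and your explicit treatment of the decoupling of directions and of the one-sided (vertex) perturbations makes rigorous what the paper leaves implicit.
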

\begin{proof}[Sketch of Proof]
 See Appendix \ref{apx-proof-thm-4} for a formal proof.
\begin{enumerate}
\item Assume that $\theta_{1}\left(1\right)\cdot\mathbb{E}_{<\frac{\bar{u}_{2}}{u_{2}^{m}}+1}\left(\theta_{2}\right)>1$
(resp., $\theta_{2}\left(1\right)\cdot\mathbb{E}_{<\frac{\bar{u}_{1}}{u_{1}^{m}}+1}\left(\theta_{1}\right)>1$).
Observe that in any initial state in which in each population $i$
almost all agents play $a_{i}^{m}$, while a few agents play $a_{i}^{\bar{m}_{2}}$
(resp., $a_{i}^{\bar{m}_{1}}$), the product of the shares of agents
who play $a_{i}^{\bar{m}_{2}}$ (resp., $a_{i}^{\bar{m}_{1}}$) in
each population $i$ would increase by analogous arguments to those in
Proposition \ref{prop:pure-equilbiria-satbility}. This implies that equilibrium $\boldsymbol{a}^{m}$
is unstable. If condition (1) holds for all pure equilibria, then
all of those equilibria are unstable.
\item Consider any initial state in which almost all agents play $a_{i}^{m}$. The fact that $\boldsymbol{a}^{m}$ is Pareto efficient
implies that $u_{\hat{i}}^{m'}< u_{i}^{m}$ for some $i\in\left\{ 1,2\right\} $.
By analogous arguments to those in Proposition \ref{prop:pure-equilbiria-satbility}, the small share of agents who
play $a_{i}^{m'}$ would decrease if $\theta_{i}\left(1\right)\mathbb{E}_{\leq\frac{u_{j}^{m'}}{u_{j}^{m}}+1}\left(\theta_{j}\right)<1$. This inequality holds because  $\mathbb{E}_{\leq\frac{u_{j}^{m'}}{u_{j}^{m}}+1}\left(\theta_{j}\right)\leq\mathbb{E}_{\leq\frac{\bar{u}_{j}}{u_{j}^{m}}+1}\left(\theta_{2}\right)$
and $\theta_{i}\left(1\right)\mathbb{E}_{\leq\frac{\bar{u}_{j}}{u_{j}^{m}}+1}\left(\theta_{2}\right)<1$.
This implies that 
$\boldsymbol{a}^{m}$ is asymptotically stable.\qedhere
\end{enumerate}
\end{proof}

Proposition \ref{thm-1-generelized} immediately implies that if all agents have the same sample size  size, then all pure states are asymptotically stable. \begin{cor}[Adaptation of Theorem \ref{thm:global-stability-pure}]
Assume that $\theta_{i}\equiv k_{i}>1$ for each $i\in\left\{ 1,2\right\} $. Then, all the Pareto-efficient pure equilibria are asymptotically stable.
\end{cor}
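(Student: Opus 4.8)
The plan is to derive this corollary directly from Part 2 of Proposition \ref{thm-1-generelized}, applied separately to each Pareto-efficient pure equilibrium. The single observation that drives everything is that the hypothesis $\theta_{i}\equiv k_{i}>1$ places all of the mass of $\theta_{i}$ on a sample size strictly larger than one, so that $\theta_{i}(1)=0$ for each $i\in\{1,2\}$.

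First I would fix an arbitrary Pareto-efficient pure equilibrium $\boldsymbol{a}^{m}$ and verify the two inequalities required by Part 2 of Proposition \ref{thm-1-generelized}, namely $\theta_{1}(1)\cdot\mathbb{E}_{\leq\frac{\bar{u}_{2}}{u_{2}^{m}}+1}(\theta_{2})<1$ and $\theta_{2}(1)\cdot\mathbb{E}_{\leq\frac{\bar{u}_{1}}{u_{1}^{m}}+1}(\theta_{1})<1$. Since $\theta_{1}(1)=\theta_{2}(1)=0$, both left-hand sides equal $0$, which is strictly below $1$. The one small point worth checking is that each truncated expectation is a genuine finite number---so that the product is truly $0$ rather than an indeterminate $0\cdot\infty$---but this is immediate because $\theta_{i}$ has finite support, making $\mathbb{E}_{\leq\cdot}(\theta_{i})$ a finite sum. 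Hence the hypotheses of Part 2 hold, and the proposition yields that $\boldsymbol{a}^{m}$ is asymptotically stable.

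Because the argument in the previous paragraph does not depend on which Pareto-efficient equilibrium $\boldsymbol{a}^{m}$ was chosen, repeating it for every Pareto-efficient pure equilibrium shows that all of them are asymptotically stable, completing the proof. I do not anticipate any real obstacle: the content of the corollary is entirely contained in Proposition \ref{thm-1-generelized}, and the role of the homogeneity assumption $\theta_{i}\equiv k_{i}>1$ is simply to annihilate the share $\theta_{i}(1)$ of agents with sample size one, which is precisely the term that could otherwise have destabilized a Pareto-efficient equilibrium. The only care needed is the bookkeeping remark that the two conditions must be checked for each Pareto-efficient equilibrium individually, though they hold uniformly for the same reason.
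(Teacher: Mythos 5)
Your proposal is correct and matches the paper's own reasoning: the corollary is stated there as an immediate consequence of Proposition \ref{thm-1-generelized}, precisely because homogeneity ($\theta_{i}\equiv k_{i}>1$) forces $\theta_{i}(1)=0$, so both products in Part 2 vanish and the stability conclusion applies to every Pareto-efficient pure equilibrium. Your added remarks---that finite support rules out a $0\cdot\infty$ issue and that Part 2 must be invoked equilibrium by equilibrium---are sound bookkeeping but do not change the argument.
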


Thus, heterogeneity in sample size is important for the stability of miscoordination also
in this extended setup.
 Finally, observe that it is straightforward to adapt Theorem 3 to
the setup with $M\geq2$ actions, by assuming that (1) the payoffs
of two of the pure equilibria satisfy the conditions of Theorem 3,
and (2) these two equilibria Pareto dominate the remaining pure equilibria.
Formally,
\begin{cor}[Generalization of Theorem \ref{thm:locally-stableinterior_ASD}]\label{cor-locally-stable-interior-M}
If $\frac{u^1_1}{u^2_1}$ and $\frac{u^1_2}{u^2_2}$ are not too close to one and the $u^m_i$-s are sufficiently small for each $m>2$, then replacing some of the agents in each population by agents with sufficiently large sample sizes, can induce an asymptotically stable interior state $\boldsymbol{\hat{p}}$.\\ 
\end{cor}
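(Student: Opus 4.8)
The plan is to reduce the $M$-action game to the two-action subgame supported on $\{a_i^1,a_i^2\}$ and then invoke Theorem \ref{thm3}. First I would use the fact that a best-responding new agent in population $i$ picks the action $m$ maximizing $u_i^m$ times the number of $a_j^m$ in her sample. The smallness of the $u_i^m$ for $m>2$ then guarantees that $a_i^m$ is almost never a best response in any state whose mass is concentrated on the first two actions: for $a_i^m$ to beat $a_i^1$ the sampled count of $a_j^m$ must exceed $(u_i^1/u_i^m)$ times the count of $a_j^1$, and as $u_i^m\to 0$ this threshold diverges, so the event has negligible probability whenever the share of $a_j^m$ is small and the shares of $a_j^1,a_j^2$ are bounded away from zero. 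Consequently, on a neighborhood of the face $\{p_i(a_i^m)=0:m>2\}$ the dynamics are a vanishing perturbation of the two-population, two-action sampling dynamics for the coordination game whose diagonal payoffs are $(u_i^1,u_i^2)$.

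Second, I would normalize this two-action subgame to standard form, obtaining effective parameters $u_1^{(2)}=u_1^1/u_1^2$ and $1/u_2^{(2)}=u_2^2/u_2^1$. The hypothesis that $u_1^1/u_1^2$ and $u_2^1/u_2^2$ are far from one (the former large, the latter small) is exactly the statement that $u_1^{(2)}$ and $1/u_2^{(2)}$ are both large, i.e., that the subgame is a battle-of-the-sexes with strong, opposite preferences. Theorem \ref{thm3} then applies verbatim: replacing half of the agents in each population by agents with a sufficiently large sample size yields an asymptotically stable interior state $\hat{\boldsymbol{p}}^{(2)}=(\hat p_1,\hat p_2)$ of the two-action dynamics with $\hat p_2<\tfrac12<\hat p_1$.

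Third, I would lift $\hat{\boldsymbol{p}}^{(2)}$ to a stationary state $\hat{\boldsymbol{p}}$ of the full $M$-action dynamics and prove asymptotic stability via a Jacobian eigenvalue argument, mirroring the computation in the proof of Proposition \ref{prop:pure-equilbiria-satbility}. Existence of a nearby fixed point follows by continuity, since each full map $w_i$ is $C^1$-close to its two-action restriction; at this fixed point the mass on every $a_i^m$, $m>2$, is positive but of order $u_i^m$. For stability I would order the coordinates so that the Jacobian splits into a block tangent to the two-action face and the transverse directions. The tangent block is close to the Jacobian of the stable two-action dynamics and therefore has eigenvalues with negative real part; each transverse direction $p_i(a_i^m)$ contributes the death term $\approx-1$ on the diagonal while its inflow derivatives $\partial w_i(a_i^m)/\partial p_j(\cdot)$ are $O(u_i^m)$ by the first step, so its eigenvalues lie near $-1$ and the inter-block coupling is negligible. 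The linearization theorem cited in Proposition \ref{prop:pure-equilbiria-satbility} then yields that $\hat{\boldsymbol{p}}$ is asymptotically stable.

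The hard part will be this last step, namely certifying that the extra actions do not destabilize the interior state. Two points require care. First, Theorem \ref{thm3} is proved through the planar phase-plane analysis of Proposition \ref{prop-w-stable}, which is special to two actions; here I must instead argue through the spectrum of the full Jacobian, confirming that the tangent block inherits the negativity furnished by Theorem \ref{thm3} and that the transverse eigenvalues remain near $-1$. Second, the bounds on the $u_i^m$ must be chosen uniformly so that, even for the large-sample agents introduced by Theorem \ref{thm3}, a single sampled opponent playing $a_j^m$ cannot tip the best response toward $a_i^m$; this is precisely the condition making the transverse inflow derivatives negligible, and it is where the quantitative interplay between large sample sizes and small $u_i^m$ for $m>2$ must be balanced.
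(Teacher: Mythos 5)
Your overall route---working on the two-action face spanned by $\{a_i^1,a_i^2\}$, normalizing that subgame to $u_1=u_1^1/u_1^2$, $u_2=u_2^1/u_2^2$, invoking Theorem \ref{thm3} there, and then checking the transverse directions---is the same as the paper's intended argument (the paper omits the proof, describing it as ``essentially the same as the proof of Theorem 3'' once the two equilibria satisfying Theorem \ref{thm3}'s conditions Pareto dominate the remaining ones). However, your transverse-stability step contains a genuine error. You claim the inflow derivatives $\partial w_i(a_i^m)/\partial p_j(a_j^m)$ are $O(u_i^m)$ because ``as $u_i^m\to 0$ the threshold diverges.'' That argument silently assumes the sample contains at least one $a_j^1$ or $a_j^2$, and it fails exactly for sample-size-1 agents, which the arbitrary distribution $\theta_i$ may well contain (Theorem \ref{thm3} is stated for any distribution profile, so you cannot exclude them): an agent whose single draw is $a_j^m$ compares $u_i^m\cdot 1>0$ against $0$ for every other action and plays $a_i^m$ no matter how small $u_i^m>0$ is. Writing $\theta'_i$ for the post-replacement distribution, the linearized inflow into $a_i^m$ is therefore $\theta'_i(1)\cdot p_j(a_j^m)$, so the off-diagonal transverse Jacobian entries equal $\theta'_i(1)$---which can be as large as $\tfrac12$---and are not $O(u_i^m)$. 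For the same reason, your claim that the lifted fixed point carries mass ``of order $u_i^m$'' on each extra action is wrong: the face is exactly invariant (against a sample drawn only from actions $1$ and $2$, every $a_i^m$ with $m>2$ earns $0$ while the best of $a_i^1,a_i^2$ earns strictly positive payoff), so the fixed point has exactly zero mass on the extra actions, and ``interior'' must be read as interior to that face, i.e., a mixed miscoordination state.

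The conclusion survives, but through a different mechanism than the one you propose. For any agent with sample size $k\geq 2$, a single observation of $a_j^m$ can never tip the best response, provided only that $u_i^m<\min(u_i^1,u_i^2)$: the remaining $k-1\geq 1$ draws contain some face action, yielding payoff at least $\min(u_i^1,u_i^2)>u_i^m$. Hence each extra action $m$ contributes a decoupled $2\times 2$ transverse block with diagonal entries $-1$ and off-diagonal entries $\theta'_1(1),\theta'_2(1)$, whose eigenvalues are $-1\pm\sqrt{\theta'_1(1)\,\theta'_2(1)}$; these are at most $-\tfrac12$ precisely because Theorem \ref{thm3}'s construction moves half of each population to the large sample size $\bar k$, forcing $\theta'_i(1)\leq\tfrac12$. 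In other words, transverse stability comes from the large-sample agents capping the mass of size-1 samplers, while the smallness of the $u_i^m$'s is needed only to secure Pareto dominance at this step; your worry about balancing ``large $\bar k$ against small $u_i^m$'' is a red herring, since one face observation defeats one $a_j^m$ observation regardless of $\bar k$. Finally, the issue you flagged but did not resolve is real: Theorem \ref{thm3}'s asymptotic stability is obtained by the phase-plane argument of Proposition \ref{prop-w-stable}, which does not guarantee that the face block of the Jacobian is hyperbolic, so a pure eigenvalue argument is incomplete; this is patchable (the transverse inflow has zero derivative in face directions, so the Jacobian is block triangular, and exponential transverse contraction plus within-face asymptotic stability yields stability of the full state), but it must be argued rather than asserted.
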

\noindent The proof (which is essentially the same as the proof of Theorem 3) is omitted for brevity.

\subsubsection{Formal Proof of Proposition \ref{thm-1-generelized}}\label{apx-proof-thm-4} 
\paragraph {Part 1.} 
We have to show that pure equilibrium $\boldsymbol{a}^{m}$ is unstable
if either $\theta_{1}\left(1\right)\cdot\mathbb{E}_{<\frac{\bar{u}_{2}}{u_{2}^{m}}+1}\left(\theta_{2}\right)>1\,\,\textrm{or}\,\,\theta_{2}\left(1\right)\cdot\mathbb{E}_{<\frac{\bar{u}_{1}}{u_{1}^{m}}+1}\left(\theta_{2}\right)>1.$
Assume that $\theta_{1}\left(1\right)\cdot\mathbb{E}_{<\frac{\bar{u}_{2}}{u_{2}^{m}}+1}\left(\theta_{2}\right)>1$
(resp., $\theta_{2}\left(1\right)\cdot\mathbb{E}_{<\frac{\bar{u}_{1}}{u_{1}^{m}}+1}\left(\theta_{2}\right)>1$).
Consider a slightly perturbed state near $\boldsymbol{a}^{m}$, where
in each population $i$ a small share $\epsilon_{i}<<1$ of the agents
play action $a_{i}^{\bar{m}_{2}}$ (resp., $a_{i}^{\bar{m}_{1}}$),
while all the other agents play action $a_{i}^{m}$. Observe that: (1)
a new agent in population 1 (resp., 2) with sample size 1 who observes
the rare action $a_{2}^{\bar{m}_{2}}$ (resp., $a_{1}^{\bar{m}_{1}}$)
plays action $a_{1}^{\bar{m}_{2}}$ (resp., $a_{2}^{\bar{m}_{1}}$),
and (2) a new agent in population 2 (resp., 1) with sample size $k$
who observes the rare action $a_{1}^{\bar{m}_{2}}$ (resp., $a_{2}^{\bar{m}_{1}}$)
once in her sample plays $a_{2}^{\bar{m}_{2}}$ (resp., $a_{1}^{\bar{m}_{1}}$)
if $\left(k-1\right)u_{2}^{m}<\bar{u}_{2}\Leftrightarrow k<\frac{\bar{u}_{2}}{u_{2}^{m}}+1$
(resp., $k<\frac{\bar{u}_{1}}{u_{1}^{m}}+1$). This gives the following
lower bound for the change in the share of agents who play the rare
action (neglecting terms that are of order $O\left(\epsilon_{i}^{2}\right)$): 
\[
\dot{\epsilon}_{1}\geq\theta_{1}\left(1\right)\epsilon_{2}-\epsilon_{1}\,\,\textrm{and}\,\,\dot{\epsilon}_{2}\geq\mathbb{E}_{<\frac{\bar{u}_{2}}{u_{2}^{m}}+1}\left(\theta_{2}\right)\epsilon_{1}-\epsilon_{2}
\]
 \vspace{-20px}
\[
\left(\textrm{resp.,\,\,}\dot{\epsilon}_{1}\geq\mathbb{E}_{<\frac{\bar{u}_{1}}{u_{1}^{m}}+1}\left(\theta_{1}\right)\epsilon_{1}-\epsilon_{2}\,\,\textrm{and}\,\,\dot{\epsilon}_{2}\geq\theta_{2}\left(1\right)\epsilon_{1}-\epsilon_{2}\right).
\vspace{-10px}
\] 
Observe that the Jacobian of the above system of equations is given
by $J=\left(\begin{array}{cc}
-1 & a_{1}\\
a_{2} & -1
\end{array}\right)$ with $a_{1}\geq\theta_{1}\left(1\right)$ and $a_{2}\geq\mathbb{E}_{<\frac{\bar{u}_{2}}{u_{2}^{m}}+1}\left(\theta_{2}\right)$
(resp., $a_{2}\geq\theta_{2}\left(1\right)$ and $a_{1}\geq\mathbb{E}_{<\frac{\bar{u}_{1}}{u_{1}^{m}}+1}\left(\theta_{1}\right)$).
The larger eigenvalue is given by $-1+\sqrt{a_{1}a_{2}}$, which is
larger than $-1+\sqrt{\theta_{1}\left(1\right)\mathbb{E}_{<\frac{\bar{u}_{2}}{u_{2}^{m}}+1}\left(\theta_{2}\right)}>-1+1=0$
(resp., $-1+\sqrt{\theta_{2}\left(1\right)\mathbb{E}_{<\frac{\bar{u}_{1}}{u_{1}^{m}}+1}\left(\theta_{2}\right)}>-1+1=0$). The fact that this eigenvalue is positive implies that the state is
unstable (see, e.g., \citealp[Section 2.9]{perko2013differential}).

\paragraph{Part 2:}
Any perturbed state near $\boldsymbol{a}^{m}$ can be represented
as a vector $\left(\epsilon_{1}^{m'},\epsilon_{2}^{m'}\right)_{m'\neq m}$
with $2\times\left(M-1\right)$ components, where $\epsilon_{i}^{m'}<<1$
represents the share of agents in population $i$ who play action
$a_{i}^{m'}$ (where the remaining share $1-\sum_{m'\neq m}\epsilon_{i}^{m'}$
of agents in population $i$ play action $a_{i}^{m}$). Observe
that a new agent in population $i$ with sample size $k$ who observes
the rare action $a_{j}^{m'}$ once in her sample (and all other observed
actions are $a_{j}^{m})$ plays $a_{j}^{m'}$ only if $\left(k-1\right)u_{i}^{m}\leq u_{i}^{m'}\Leftrightarrow k\leq\frac{u_{i}^{m'}}{u_{i}^{m}}+1$.
This give the following upper bound for the share of agents who play
the rare action (neglecting terms that are of order $O\left(\epsilon_{i}^{m'}\right)^{2}$):

\[
\dot{\epsilon}_{i}^{m'}\leq\mathbb{E}_{\leq\frac{u_{i}^{m'}}{u_{i}^{m}}+1}\left(\theta_{i}\right)\epsilon_{j}^{m'}-\epsilon_{i}^{m'}.
\]
 Observe that the Jacobian of the above system of ($2\times\left(M-1\right)$)
equations is given by 

\[
J=\left(\begin{array}{cccccccc}
-1 & a_{1}^{1} &  & 0 & 0 &  & 0 & 0\\
a_{2}^{1} & -1 &  & 0 & 0 &  & 0 & 0\\
 &  &  & \vdots\\
0 & 0 & ... & -1 & a_{1}^{m'} & ... & 0 & 0\\
0 & 0 &  & a_{2}^{m'} & -1 &  & 0 & 0\\
 &  &  & \vdots\\
0 & 0 &  & 0 & 0 &  & -1 & a_{1}^{M}\\
0 & 0 &  & 0 & 0 &  & a_{2}^{M} & -1
\end{array}\right),
\]
 where $a_{1}^{m'}\leq\mathbb{E}_{\leq\frac{u_{i}^{m'}}{u_{i}^{m}}+1}\left(\theta_{i}\right).$
The Jacobian has the following $2\times\left(M-1\right)$ eigenvalues:
$\left(-1\pm a_{1}^{m'}\cdot a_{2}^{m'}\right)_{m'\neq m}$. The fact
that  $\boldsymbol{a}^{m}$ is Pareto efficient (and
the mild assumption that there do not exist two pure equilibria that
yield the same payoff profile)
implies that $u_{i}^{m'}<u_{i}^{m}$
for $i\in\left\{ 1.2\right\} $. This implies that $\mathbb{E}_{\leq\frac{u_{i}^{m'}}{u_{i}^{m}}+1}=\theta_{i}\left(1\right)$.
Observe that $\mathbb{E}_{\leq\frac{u_{j}^{m'}}{u_{j}^{m}}+1}\left(\theta_{j}\right)\leq\mathbb{E}_{<\frac{\bar{u}_{j}}{u_{j}^{m}}+1}\left(\theta_{j}\right)$.
This, in turn, implies that the eigenvalue $-1\pm a_{1}^{m'}\cdot a_{2}^{m'}$
is bounded by 
\[
-1\pm a_{i}^{m'}\cdot a_{j}^{m'}\leq-1+\mathbb{E}_{\leq\frac{u_{i}^{m'}}{u_{i}^{m}}+1}\left(\theta_{i}\right)\mathbb{E}_{\leq\frac{u_{j}^{m'}}{u_{j}^{m}}+1}\left(\theta_{j}\right)\leq-1+\theta_{i}\left(1\right)\mathbb{E}_{\leq\frac{\bar{u}_{j}}{u_{j}^{m}}+1}\left(\theta_{2}\right)<-1+1=0,
\]
which implies that all the eigenvalues are positive. Thus, 
$\boldsymbol{a}^{m}$ is asymptotically stable.

\subsection{Coordination Games with More Than Two Players}\label{sec-multiple-players}

In this appendix, we extend our analysis to minimum effort coordination
games (\citealp{vanHuyck1990tacit}) with $N>2$ players. We analyze one-population dynamics of these games, as this aligns with the typical experimental implementation, where the game is symmetric, and players cannot condition their behavior on specific roles. For simplicity, we assume each agent chooses between two effort levels,  $A=\left\{ L,H\right\} $ (the results are similar
with more effort levels). An agent selecting low effort $L$ receives a payoff of 1. An agent selecting high effort ($H$) earns a payoff of $2 - c$ if all opponents also choose $H$, and $1 - c$ otherwise, where $c \in (0,1)$ represents the cost of high effort. The game has two strict equilibria: the \textit{safe equilibrium} $\boldsymbol{L}$ and the \textit{efficient equilibrium} $\boldsymbol{H}$. Let $p\in[0,1]$ denote the share of agents playing action $a$ in the popualtion.

When generalizing the sampling dynamics to games with more than two players, different assumptions can be made about what each agent observes. We focus on two alternative assumptions on what each agent observes in each element of her sample:
\begin{enumerate}
\item the minimum effort level in a random round of the $N$-player game. This observation structure fits the best the typical feedback in experimental implementations of the minimum effort games (see, e.g., \citealp{vanHuyck1990tacit,ramos2023road}).
\item The action of a randomly chosen opponent.
\end{enumerate}

\subsubsection{Observation of Minimum Efforts}
Our first result characterizes the asymptotic stability of pure equilibria. The safe equilibrium is always asymptotically stable, while the efficient equilibrium becomes unstable iff the truncated expectation of the sample size is sufficiently large.
\begin{prop}
\label{prop:The-safe-equilibrium}State $L$
is asymptotically stable for all $c$-s. State $H$is:
\begin{enumerate}
\item asymptotically stable if $\mathbb{E}_{\leq\frac{1}{1-c}}\left(\theta\right)<\frac{1}{N}$,
and
\item unstable if $\mathbb{E}_{<\frac{1}{1-c}}\left(\theta\right)>\frac{1}{N}$.
\end{enumerate}
\end{prop}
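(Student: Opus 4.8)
The plan is to treat this as a one-dimensional problem and to apply the stability criteria of Fact \ref{fact-w-stable}, using the same linearization idea employed for the two-player pure states in Proposition \ref{prop:pure-equilbiria-satbility}. Write $w(p)$ for the share of new agents who choose $H$ when a share $p$ of incumbents plays $H$ (state $1$ is the efficient equilibrium $\boldsymbol{H}$, state $0$ the safe equilibrium $\boldsymbol{L}$). The first step is to pin down the best-response rule induced by the min-effort observations. An agent with sample size $k$ who observes $j$ rounds whose minimum is $L$ (and $k-j$ rounds whose minimum is $H$) earns a total of $(2-c)(k-j)+(1-c)j=(2-c)k-j$ from playing $H$ versus a total of $k$ from playing $L$; hence she plays $H$ iff $(1-c)k\geq j$ and plays $L$ iff $j>(1-c)k$. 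I would also record the probability that a single sampled round has minimum $L$: since a round involves $N$ players, this equals $1-p^{N}$.

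Next I would dispose of the safe equilibrium. Near $p=0$ a sampled round has minimum $H$ only with probability $p^{N}$, so the chance that a given agent observes even one such round is $O(p^{N})$. Because the rule above forces an agent to see at least one $H$-minimum round before she will ever choose $H$, we get $w(p)=O(p^{N})=o(p)$ for $N\geq 2$, so $w(p)<p$ in a right neighborhood of $0$. By Part 1 of Fact \ref{fact-w-stable}, $L$ is asymptotically stable for every $c$.

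Then I would linearize around $H$. Writing $p=1-\epsilon$, a single round has minimum $L$ with probability $1-(1-\epsilon)^{N}=N\epsilon+O(\epsilon^{2})$, so to first order each agent observes at most one $L$-minimum, i.e. $j\in\{0,1\}$. An agent with $j=1$ switches to $L$ iff $1>(1-c)k$, i.e. $k<\tfrac{1}{1-c}$; summing the first-order switching probabilities $k\cdot N\epsilon$ over the sample-size distribution gives a share of new agents playing $L$ equal to $N\epsilon\,\mathbb{E}_{<\frac{1}{1-c}}(\theta)+o(\epsilon)$, so that $w(1-\epsilon)=1-N\epsilon\,\mathbb{E}_{<\frac{1}{1-c}}(\theta)+o(\epsilon)$. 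By Part 2 of Fact \ref{fact-w-stable}, $H$ is asymptotically stable iff $w(p)>p$ just below $1$, which holds iff $N\,\mathbb{E}_{<\frac{1}{1-c}}(\theta)<1$, and is unstable iff the reverse strict inequality holds.

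Finally, to make the statement hold under all tie-breaking rules I would observe that the only ambiguity lies in the behavior of an agent for whom one $L$-observation produces a payoff tie, namely $k=\tfrac{1}{1-c}$: resolving ties toward $L$ adds this mass to the switchers (yielding $\mathbb{E}_{\leq\frac{1}{1-c}}$), while resolving them toward $H$ does not (yielding $\mathbb{E}_{<\frac{1}{1-c}}$). Demanding stability, respectively instability, under every tie-breaking rule then gives the conservative thresholds $\mathbb{E}_{\leq\frac{1}{1-c}}(\theta)<\tfrac1N$ and $\mathbb{E}_{<\frac{1}{1-c}}(\theta)>\tfrac1N$ of the proposition. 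The main obstacle is the bookkeeping in the $H$-linearization: obtaining the correct leading coefficient requires both the factor $N$ coming from $1-(1-\epsilon)^{N}$ and the observation that only single-$L$ events contribute at first order, combined with the careful tie-breaking argument; the remainder is a routine application of Fact \ref{fact-w-stable}.
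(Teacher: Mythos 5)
Your proposal is correct and follows essentially the same argument as the paper: near $L$ an agent must observe an $H$-minimum round (probability $O(\epsilon^N)=o(\epsilon)$) before ever choosing $H$, and near $H$ the first-order share of switchers is $N\epsilon\,\mathbb{E}_{<\frac{1}{1-c}}(\theta)$ since an agent observing a single $L$-minimum switches iff $k<\frac{1}{1-c}$, with the $\leq$/$<$ gap covering tie-breaking. The only cosmetic difference is that you invoke the one-dimensional criteria of Fact \ref{fact-w-stable} directly, whereas the paper phrases the same linearization as ``similar reasoning to Proposition \ref{prop:pure-equilbiria-satbility}''; in one dimension these coincide.
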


\begin{proof}
 Consider a perturbed state $\left(1-\epsilon\right)$ near $L,$ where $\epsilon<<1$ represents the small share of agents playing $H$. A new agent with sample size $k$ will only play $H$ if they observe $H$ as the minimum effort in their sample, but the probability of this is negligible: $O(k\cdot\epsilon^N)<\epsilon$. Thus, $L$ is asymptotically stable.

Now consider a perturbed state $\epsilon<<1$ near $H$. A new agent with sample size $k$ plays $H$ (resp. $L$) when they observe a rare minimum effort level $L$ once in their sample if $\left(k-1\right)\left(1-c\right)>c\Leftrightarrow k>\frac{1}{1-c}$ (resp. $k<\frac{1}{1-c}$). The probability of observing $L$ is approximately $N\cdot k\cdot\epsilon$. Following similar reasoning to Proposition \ref{prop:pure-equilbiria-satbility}, we conclude:
\begin{enumerate}
    \item converging to everyone playing $H$, so $H$ is asymptotically stable if
    $N\cdot\mathbb{E}_{\leq\frac{1}{1-c}}\left(\theta\right)<1.$    
    \item the share of agents playing $L$ increases, making $H$ unstable if  $N\cdot\mathbb{E}_{<\frac{1}{1-c}}\left(\theta\right)>1.$   \qedhere
\end{enumerate}
\end{proof}
Comparative statics for the stability of the efficient equilibrium align with experimental findings: the set of distributions for which the efficient equilibrium is stable decreases with both the cost of effort $c$ and the number of players $N$. (Numeric analysis also suggests that similar trends apply to the size of the efficient equilibrium's basin of attraction.)

\begin{rem}
In typical experiments of this game, there are 7 (rather than 2) levels of
effort, and a player's payoff is equal to the minimal effort
level chosen by any of the players minus $c$ times her own effort.
Simple adaptations to the proof of Proposition \ref{prop:The-safe-equilibrium}
show that the same condition for the asymptotic stability of the efficient equilibrium holds for any non-safe equilibrium. That is, if $\mathbb{E}_{<\frac{1}{1-c}}\left(\theta\right)>\frac{1}{N-1}$,
then only the safe equilibrium is asymptotically stable, while if
$\mathbb{E}_{\leq\frac{1}{1-c}}\left(\theta\right)<\frac{1}{N-1},$
then all pure equilibria are asymptotically stable.
\end{rem}
Next we show that heterogeneity induces stable miscoordination also with $N>2$ players. Specifically, we generalize Theorem \ref{thm:locally-stableinterior_ASD}, and show that if $k$ is not too large, one can always add players with sufficiently large sample sizes, and obtain an environment with stable miscoordination.

\begin{prop}[Adaptation of Theorem \ref{thm:locally-stableinterior_ASD}]\label{prop-hetro-stable-interoir-minimum}
For any $k<\frac{1}{1-c}$, there exists a minimum-effort environment with some agents having sample size $k$ and the others with sufficiently large samples, in which an asymptotically stable interior state exists.
\end{prop}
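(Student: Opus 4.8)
The plan is to set $p$ equal to the share of agents playing $H$ and to read off the learning map $w^{\theta}(p)$, the share of new agents who choose $H$. A sampled round has minimum effort $H$ exactly when all $N$ of its players chose $H$, so each element of an agent's sample is an ``$H$''-minimum independently with probability $p^{N}$, and the number of such observations in a sample of size $k'$ is $\mathrm{Bin}(k',p^{N})$. Comparing the total sample payoffs of the two actions as in the proof of Proposition~\ref{prop:The-safe-equilibrium} shows that $H$ is a best response iff the number of observed $H$-minima is at least $k'c$. For a small sample $k<\frac{1}{1-c}$ one has $k-1<kc<k$, so $\lceil kc\rceil=k$ (and no ties arise): such an agent plays $H$ only when \emph{all} $k$ observed minima are $H$, giving $w^{k}(p)=\bigl(p^{N}\bigr)^{k}=p^{Nk}$. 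For a large sample the empirical frequency of $H$-minima concentrates at $p^{N}$, so $w^{\bar k}(p)\to\mathbf{1}\{p>c^{1/N}\}$ as $\bar k\to\infty$.

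I would then take the two-point distribution $\theta=\alpha\,\delta_{k}+(1-\alpha)\,\delta_{\bar k}$ with $\bar k$ large, and choose $\alpha\in(0,1)$ to achieve two things at once. First, instability of the efficient state $H$: since $\bar k>\frac{1}{1-c}$ only the small-sample agents enter the truncated expectation, so $\mathbb{E}_{<\frac{1}{1-c}}(\theta)=\alpha k$, and Proposition~\ref{prop:The-safe-equilibrium}(2) makes $H$ unstable precisely when $\alpha>\frac{1}{Nk}$. Second, that $w^{\theta}$ rises above the diagonal at an interior point: evaluating just above the large-sample threshold gives $w^{\theta}(p)\to\alpha c^{k}+(1-\alpha)$ as $p\downarrow c^{1/N}$ and $\bar k\to\infty$ (using $(c^{1/N})^{Nk}=c^{k}$ and $w^{\bar k}\to1$ there), and this limit exceeds $c^{1/N}$ iff $\alpha<\frac{1-c^{1/N}}{1-c^{k}}$. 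Hence for any fixed $\alpha$ in the open interval $\bigl(\frac{1}{Nk},\frac{1-c^{1/N}}{1-c^{k}}\bigr)$ and all sufficiently large $\bar k$, there is an interior $\hat p$ with $w^{\theta}(\hat p)>\hat p$.

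The crux --- and the main obstacle --- is showing this interval is nonempty, i.e. $\frac{1}{Nk}<\frac{1-c^{1/N}}{1-c^{k}}$, equivalently $1-c^{k}<Nk\,(1-c^{1/N})$. I would establish this from two elementary geometric-sum estimates: $1-c^{k}=(1-c)\sum_{i=0}^{k-1}c^{i}<k(1-c)$, and, since $1-c=(1-c^{1/N})\sum_{i=0}^{N-1}(c^{1/N})^{i}$ with the sum less than $N$, $N(1-c^{1/N})=\frac{N(1-c)}{\sum_{i=0}^{N-1}(c^{1/N})^{i}}>1-c$. Multiplying the second bound by $k$ and chaining gives $Nk(1-c^{1/N})>k(1-c)>1-c^{k}$, as required; note the inequality holds for every $k\ge1$, while the hypothesis $k<\frac{1}{1-c}$ is what guarantees $\lceil kc\rceil=k$ in the first step.

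Finally I would assemble the global phase picture for $g\equiv w^{\theta}-p$, which is a polynomial and hence has finitely many zeros: $g(0)=g(1)=0$, $g<0$ in a right neighbourhood of $0$ (state $L$ is asymptotically stable by Proposition~\ref{prop:The-safe-equilibrium}), $g<0$ in a left neighbourhood of $1$ (state $H$ unstable), and $g(\hat p)>0$. A sign change of $g$ from $+$ to $-$ therefore occurs in $(\hat p,1)$, and by Part~3 of Fact~\ref{fact-w-stable} this down-crossing is an asymptotically stable interior stationary state. Apart from the nonemptiness of the $\alpha$-interval, the argument is the same curve-crossing strategy used for Theorems~\ref{thm:locally-stable_interior_if_pure_is_unstable} and \ref{thm:locally-stableinterior_ASD}.
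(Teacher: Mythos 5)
Your proof is correct, and it reaches the result by a genuinely different route than the paper. The paper works with the share $p$ of agents playing $L$ and brackets the stable state in $(0,\hat p)$ with $\hat p$ small: it destabilizes the efficient equilibrium via the linear condition $\alpha k(N-1)>1$, and obtains the down-crossing from a second-order Taylor (concavity) estimate of the small-sample response $w^{k}(p)=1-(1-p)^{k(N-1)}$ together with the bound $w^{\bar k}(p)<\epsilon$ for $p<\hat p<p^{NE}$. You instead track the share of $H$-players, use the exact form $w^{k}(p)=p^{Nk}$, destabilize $H$ by the same linearization (your $\alpha>\frac{1}{Nk}$ is the paper's condition (1), up to the paper's $N-1$ versus your $N$ per sampled round; you follow the convention of Proposition \ref{prop:The-safe-equilibrium}, while the paper's own proof of this proposition switches to minima over $N-1$ opponents --- harmless either way), and place the crossing witness at the large-sample threshold $c^{1/N}$, where $w^{\bar k}$ saturates at $1$; this replaces the Taylor bookkeeping by the closed-form constraint $\alpha<\frac{1-c^{1/N}}{1-c^{k}}$. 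The main thing your route buys is an explicit admissible interval $\left(\frac{1}{Nk},\frac{1-c^{1/N}}{1-c^{k}}\right)$ for $\alpha$, whose nonemptiness you prove by geometric-sum bounds --- precisely the step the paper leaves implicit (its conditions (1) and (2) are jointly satisfiable only when $\epsilon$ is small relative to $\hat p$, which is never checked there); what the paper's route buys is a sharper localization of the stable state, arbitrarily close to the efficient equilibrium, and a proof that mirrors its Theorem \ref{thm:locally-stable_interior_if_pure_is_unstable}. Two small remarks: your middle inequality $k(1-c)>1-c^{k}$ is an equality at $k=1$, but strictness of $Nk(1-c^{1/N})>k(1-c)$ keeps the conclusion strict for all $k\geq1$; and your deduction of $g<0$ near $p=1$ from instability is valid exactly because $g$ is a polynomial, hence of constant sign in a punctured left neighborhood of $1$, so Part 2 of Fact \ref{fact-w-stable} applies (alternatively, the sign can be read directly off the linearization in Proposition \ref{prop:The-safe-equilibrium}).
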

\begin{proof}
Let $p^{NE}\in\left(0,1\right)$ be the symmetric interior Nash equilibrium
of the minimum effort coordination game. Fix a sufficiently small
$\epsilon>0$. Let $w^{\theta}\left(p\right)$ be the probability
of a new agent playing action $L$ when the
new agent's sample size is distributed according to $\theta$, and
when the share of agents playing $L$ is $p$. 
An agent with sample size $k$ would play action $L$ if her sample
includes at least one observation of $L$. The probability of this is
\[
w^{k}\left(p\right)=1-\left(1-p\right)^{k\left(N-1\right)}=k\left(N-1\right)p-\left(\begin{array}{c}
k\left(N-1\right)\\
2
\end{array}\right)p^{2}+O\left(p^{3}\right).
\]
Fix a sufficiently small $\epsilon>0$. Let $\hat{p}\in\left(0,p^{NE}-\epsilon\right)$
be sufficiently small such that the term $O\left(p^{3}\right)<\epsilon$
is negligible for any $p<\hat{p}$. Let $\alpha\in\left(0,1\right)$
be such that: (1) $\alpha k\left(N-1\right)>1$ and (2) $\alpha\left(k\left(N-1\right)-\left(\begin{array}{c}
k\left(N-1\right)\\
2
\end{array}\right)p\right)<1-2\epsilon$. This implies that $w^{k}\left(p\right)>\frac{p}{\alpha}$ in a right
neighborhood of zero, and $w^{k}\left(p\right)<\frac{p}{\alpha}$
in a left neighborhood of $\hat{p}$. Observe that $w^{\bar{k}}\left(p\right)<\epsilon$
for a sufficiently large $\bar{k}$. This implies that $w^{k\alpha\bar{k}}\left(p\right)>p$
in a right neighborhood of zero, and $w^{k\alpha\bar{k}}\left(p\right)<p$
in a left neighborhood of $\hat{p}$. This, in turn, implies that
there exists a symmetric stationary state $\tilde{p}\in\left(0,\hat{p}\right)$
that satisfies (1) $w^{k\alpha\bar{k}}\left(\tilde{p}\right)=\tilde{p}$,
(2) $w^{k\alpha\bar{k}}\left(p\right)>p$ in a left neighborhood of
$\tilde{p}$, and (3) $w^{k\alpha\bar{k}}\left(p\right)<p$ in a right
neighborhood of $\tilde{p}$. Due to Part (3) of Fact \ref{fact-w-stable} $\tilde{p}$ is asymptotically stable.\qedhere
\end{proof}

\subsubsection{Observation of Actions}
Next, we characterize the asymptotic stability of pure equilibria when new agents observe actions instead of minimum efforts. 
\begin{prop}
\begin{enumerate}
\item The safe equilibrium $L$ is:
\begin{enumerate}
    \item  asymptotically stable if $\mathbb{E}_{\leq\left(\sqrt[\left(N-1\right)]{\frac{1}{1-c}}\right)}\left(\theta\right)<1,$ and
    \item unstable if $\mathbb{E}_{<\left(\sqrt[\left(N-1\right)]{\frac{1}{1-c}}\right)}\left(\theta\right)>1.$
\end{enumerate}
\item The efficient equilibrium
$H$is:
\begin{enumerate}
\item asymptotically stable if $\mathbb{E}_{\leq\left(\frac{1}{1-\sqrt[\left(N-1\right)]{c}}\right)}\left(\theta\right)<1,$ and
\item unstable if $\mathbb{E}_{<\left(\frac{1}{1-\sqrt[\left(N-1\right)]{c}}\right)}\left(\theta\right)>1.$
\end{enumerate}
\end{enumerate}
\end{prop}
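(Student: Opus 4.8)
The plan is to reuse the local linearization behind Proposition~\ref{prop:pure-equilbiria-satbility} and its minimum-effort analogue Proposition~\ref{prop:The-safe-equilibrium}, now for a single population of an $N$-player game in which each sampled item is one opponent's action. The one genuinely new ingredient is the best-response-to-sample rule for the multiplayer game: a new agent whose sample contains a fraction $\hat q$ of $H$-actions treats $\hat q$ as the population share playing $H$ and best responds against $N-1$ independent opponents who each play $H$ with probability $\hat q$. Since $H$ pays $2-c$ when all $N-1$ opponents play $H$ and $1-c$ otherwise, while $L$ always pays $1$, the expected gain from $H$ over $L$ is $\hat q^{\,N-1}-c$; hence the agent plays $H$ iff $\hat q^{\,N-1}\ge c$, equivalently $\hat q\ge\sqrt[N-1]{c}$. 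Everything after this point is the same bookkeeping as in the two cited proofs.

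First I would treat the safe equilibrium $L$. Perturbing to a state with a share $\epsilon\ll 1$ of the rare action $H$, each sampled action is $H$ with probability $\epsilon$, so to first order only the event ``exactly one $H$ in the sample'' matters (two or more contribute $O(\epsilon^{2})$); for a sample-$k$ agent it has probability $\approx k\epsilon$ and yields $\hat q=1/k$. Such an agent switches to $H$ iff $(1/k)^{N-1}\ge c$, i.e. iff $k\le\sqrt[N-1]{1/c}$. Aggregating the reinforcement over $\theta$ gives the scalar linearization $\dot\epsilon\approx\bigl(\mathbb E_{\le\sqrt[N-1]{1/c}}(\theta)-1\bigr)\epsilon$, and Fact~\ref{fact-w-stable} then delivers asymptotic stability when the truncated expectation is below $1$ and instability when it exceeds $1$; the gap between $\mathbb E_{<}$ and $\mathbb E_{\le}$ is precisely the tie-breaking ambiguity at any integer $k$ equal to $\sqrt[N-1]{1/c}$.

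The efficient equilibrium $H$ is the mirror image: the rare action is now $L$, so a sample-$k$ agent who sees exactly one $L$ has $\hat q=(k-1)/k$ and keeps playing the rare $L$ iff $\bigl((k-1)/k\bigr)^{N-1}<c$, i.e. iff $k<\tfrac{1}{1-\sqrt[N-1]{c}}$. The identical linearization gives $\dot\epsilon\approx\bigl(\mathbb E_{<\frac{1}{1-\sqrt[N-1]{c}}}(\theta)-1\bigr)\epsilon$ and hence the stated dichotomy at the cutoff $\tfrac{1}{1-\sqrt[N-1]{c}}$. It is worth noting that, in contrast with the minimum-effort observation model, no factor $N$ enters the cutoffs: there each sampled minimum equalled $L$ with probability $\approx N\epsilon$, whereas here a single observed opponent is $H$ (or $L$) with probability $\epsilon$, which is exactly what removes the $1/N$.

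I expect the main obstacle to be conceptual rather than computational: fixing and justifying the best-response-to-sample rule for the multiplayer game, namely that ``best responding to the sample'' should mean treating the empirical frequency as a per-opponent i.i.d.\ probability and comparing $\hat q^{\,N-1}$ with $c$. Once that is granted, the only delicate points are (i) checking that the neglected $O(\epsilon^{2})$ terms cannot change the sign of the governing eigenvalue, and (ii) handling the strict-versus-weak inequalities at the boundary so that the conclusions are valid under every tie-breaking rule, exactly as in Proposition~\ref{prop:pure-equilbiria-satbility}.
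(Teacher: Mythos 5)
Your method is exactly the paper's: perturb each pure state, count the first-order probability $k\epsilon$ of a single rare observation in a size-$k$ sample, determine which sample sizes then adopt the rare action, and read stability off the sign of the truncated expectation minus one, with the $<$ versus $\leq$ gap absorbing tie-breaking. Even your ``conceptual obstacle'' is resolved the way the paper resolves it: the paper also treats the empirical frequency as an i.i.d.\ per-opponent probability and compares the estimated all-$H$ probability with $c$. Your analysis of the efficient equilibrium $H$ coincides with the paper's and yields its stated cutoff $\frac{1}{1-\sqrt[N-1]{c}}$.

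For the safe equilibrium $L$, however, your cutoff $\sqrt[N-1]{1/c}$ is \emph{not} the cutoff $\sqrt[N-1]{\frac{1}{1-c}}$ appearing in the statement, and you should flag this rather than pass over it: the discrepancy is real, and your version is the correct one. An agent who sees one $H$ in a sample of size $k$ estimates the all-$H$ probability as $k^{-(N-1)}$, so $H$'s expected payoff $(1-c)+k^{-(N-1)}$ beats $L$'s payoff $1$ iff $k^{N-1}\leq 1/c$, exactly as you write. The paper's own proof instead asserts that the agent plays $H$ iff $\left(k^{N-1}-1\right)(1-c)>c$, i.e., iff $k^{N-1}>\frac{1}{1-c}$; this plugs in the all-$H$ probability $1-k^{-(N-1)}$ (the near-$H$ estimate, recycled from the minimum-observation analysis in Proposition \ref{prop:The-safe-equilibrium} with $k$ replaced by $k^{N-1}$) where $k^{-(N-1)}$ belongs, and it would imply the absurd conclusion that agents with arbitrarily large samples adopt the rare action $H$. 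Two checks confirm your cutoff: for $N=2$ the game is the two-action coordination game with normalized payoff $u=\frac{1-c}{c}$ for coordinating on $H$, and Proposition \ref{prop:pure-equilbiria-satbility} places the instability threshold for the all-$L$ state at sample sizes up to $u+1=\frac{1}{c}$, matching $\sqrt[N-1]{1/c}$ and not $\frac{1}{1-c}$; and as $c\to 0$ the statement's cutoff tends to $1$, which would wrongly predict that no agent with $k\geq 2$ ever adopts the nearly costless action $H$. In short, your argument is sound and follows the paper's own template, but it proves a corrected Part 1; as written, Part 1 of the proposition and the corresponding lines of the paper's proof contain a $c\leftrightarrow(1-c)$ error.
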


\begin{proof}
 \begin{enumerate}
     \item Consider a perturbed state $\left(1-\epsilon\right)$ near $L,$ where $\epsilon<<1$ represents the small share of agents playing $H$. A new agent with sample size $k$ who observes the rare action $H$ once in their sample estimates the probability that the minimum effort of $N-1$ random opponents is $H$ as $\frac{1}{k^{N-1}}$. This means that the agent will play $H$ (resp., $L$) if $\left(k^{(N-1)}-1\right)\left(1-c\right)>c\Leftrightarrow k^{(N-1)}>\frac{1}{1-c}$ (resp. $k^{(N-1)}<\frac{1}{1-c}$). The probability of observing $L$ is approximately $k\cdot\epsilon$. Following similar reasoning to Proposition \ref{prop:pure-equilbiria-satbility}, we obtain conditions (a) and (b).

\item Consider a perturbed state $\epsilon<<1$ near $H$. A new agent with sample size $k$ who observes the rare action $L$ once estimates the probability that the minimum effort of $N-1$ random opponents is $L$ as 
$1-\left(\frac{k-1}{k}\right)^{N-1}.$ 
This means that the agent will play $H$ (resp., $L$) if 
$\left(1-\left(\frac{k-1}{k}\right)^{N-1}\right)c>\left(1-c\right)\left(\frac{k-1}{k}\right)^{N-1} \Leftrightarrow k>\frac{1}{1-\sqrt[\left(N-1\right)]{c}}$ 
(resp. $k<\frac{1}{1-\sqrt[\left(N-1\right)]{c}}$). 
The probability of observing $L$ is approximately $k\cdot \epsilon$. Following similar reasoning to Proposition \ref{prop:pure-equilbiria-satbility},  we obtain conditions (a) and (b).\qedhere
 \end{enumerate} 

\end{proof}

Finally, we show that heterogeneity induces stable miscoordination also with $N>2$ players. Specifically, we generalize Theorem \ref{thm:locally-stableinterior_ASD}, and show that if $k$ is not too large, one can add players with sufficiently large sample sizes, and obtain an environment with stable miscoordination. The proof, which is analogous to Proposition \ref{prop-hetro-stable-interoir-minimum}, is omitted for brevity.

\begin{prop}[Adaptation of Theorem \ref{thm:locally-stableinterior_ASD}]\label{prop-hetro-stable-interoir-minimum}
For any $1<k<\frac{1}{1-\sqrt[\left(N-1\right)]{c}}$, there exists a minimum-effort environment with some agents having sample size $k$ and the others with sufficiently large samples, in which an asymptotically stable interior state exists.
\end{prop}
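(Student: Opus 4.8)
The plan is to mirror the proof of the preceding (minimum-effort observation) proposition, since the only structural change is the per-agent response function $w^k$. First I would fix the convention that the state $p\in[0,1]$ records the share of agents playing $L$, so that $p=0$ is the efficient equilibrium $H$ and $w^{\theta}(p)$ denotes the probability that a new agent plays $L$. The interior symmetric Nash equilibrium is $p^{NE}=1-\sqrt[N-1]{c}$, obtained by equating the payoff from $H$, namely $(1-c)+(1-p)^{N-1}$, to the payoff $1$ from $L$.

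The key computation is the response of a sample-size-$k$ agent near the efficient equilibrium. An agent who observes $j$ of her $k$ sampled opponents playing $L$ forms the maximum-likelihood estimate $j/k$ for the share of $L$, and therefore plays $H$ iff $\bigl(\tfrac{k-j}{k}\bigr)^{N-1}>c$, i.e. iff $j<k\bigl(1-\sqrt[N-1]{c}\bigr)$ (this is exactly the threshold used in the preceding stability proposition). Because the hypothesis $k<\frac{1}{1-\sqrt[N-1]{c}}$ forces $k\bigl(1-\sqrt[N-1]{c}\bigr)<1$, a single observed $L$ already flips her best response to $L$; hence $w^{k}(p)=1-(1-p)^{k}=kp-\binom{k}{2}p^{2}+O(p^{3})$. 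Since $1<k$, this is strictly expansive ($w^{k}(p)>p$) in a right neighborhood of $0$, so the small-sample agents destabilize $H$. By contrast, a large sample $\bar k$ concentrates on the true state, so for every $p<p^{NE}$ the best response to the sample is $H$ and $w^{\bar k}(p)\to 0$ as $\bar k\to\infty$.

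I would then carry out the parameter selection exactly as in the minimum-effort proof, with $k(N-1)$ replaced by $k$. Fix a small $\epsilon>0$; choose $\hat p\in(0,p^{NE}-\epsilon)$ small enough that the cubic remainder of $w^{k}$ is negligible relative to the linear term on $(0,\hat p)$, yet bounded below by a multiple of $\epsilon$ so that the positive quadratic term still creates a gap; choose $\alpha\in(0,1)$ satisfying (1) $\alpha k>1$ and (2) $\alpha\bigl(k-\binom{k}{2}\hat p\bigr)<1-2\epsilon$, which are jointly feasible precisely because $\binom{k}{2}\hat p>0$; and finally take $\bar k$ large enough that $w^{\bar k}(\hat p)$ is negligible. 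For the mixed distribution $\theta=k\alpha\bar k$ (mass $\alpha$ at $k$, mass $1-\alpha$ at $\bar k$) we have $w^{\theta}(p)\geq\alpha w^{k}(p)>p$ in a right neighborhood of $0$ by (1), while (2) together with the smallness of $w^{\bar k}(\hat p)$ gives $w^{\theta}(p)<p$ in a left neighborhood of $\hat p$. An intermediate-value argument then yields a stationary $\tilde p\in(0,\hat p)$ with $w^{\theta}>p$ to its left and $w^{\theta}<p$ to its right, which is asymptotically stable by Part 3 of Fact \ref{fact-w-stable}.

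I expect the main obstacle to be the bookkeeping that makes conditions (1) and (2) on $\alpha$ simultaneously satisfiable: this requires fixing the quantifiers in the right order, so that $\hat p$ is small enough for the Taylor approximation and for $\hat p<p^{NE}$ to hold, but not so small that the positive quadratic gap $\binom{k}{2}\hat p$ is swamped by the $O(\epsilon)$ errors coming from the cubic remainder and from the large-sample agents' residual probability $w^{\bar k}(\hat p)$. Everything else—the threshold characterization and the vanishing of $w^{\bar k}$ below $p^{NE}$—is inherited directly from the preceding stability proposition, which is why the paper records this proof as analogous and omits it.
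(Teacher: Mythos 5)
Your proposal is correct and follows essentially the same route as the paper: the paper omits this proof as ``analogous'' to the minimum-effort-observation case, and you carry out exactly that analogy --- the hypothesis $1<k<\frac{1}{1-\sqrt[N-1]{c}}$ guarantees that a single observed $L$ flips a size-$k$ agent to $L$, so $w^{k}(p)=1-(1-p)^{k}$, and the parameter selection ($\alpha k>1$, the quadratic-gap condition on $\hat p$, and $\bar k$ large) together with Part 3 of Fact \ref{fact-w-stable} goes through with $k(N-1)$ replaced by $k$. The quantifier-ordering/bookkeeping caveat you flag is real but is present at the same level of informality in the paper's own template proof, so it does not constitute a gap relative to the paper.
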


\subsection{Logit Dynamics}\label{sec-logit}\label{sec:logit}
In this appendix, we show that our result of heterogeneity inducing stable miscoordination applies to logit dynamics. Specifically, we find that (1) standard logit dynamics with uniform noise levels require implausibly high noise to achieve stable miscoordination, while (2) a variant with heterogeneous noise levels achieves stable miscoordination with much lower noise. This suggests our insight into the role of heterogeneity in noise may be relevant across different dynamics, not just sampling dynamics.

\paragraph{Standard (Homogeneous) Logit Dynamics }

Logit dynamics (introduced in \citealp{fudenberg1995consistency}; see \citealp[Section 6.2.3]{sandholm2010population}
for a textbook exposition, 
and see \citealp{nax2022deep} for a recent application) are characterized by a single parameter
$\eta_{i}$ that describes the \emph{noise level }for each population
$i$. If player $i$ plays action $a_{i}$, she gets a payoff
of $p_{j}\cdot u_{j}$. If she plays action $b_{i}$ she gets
a payoff of $\left(1-p_{j}\right)\cdot1$. Logit dynamics assume
that the probability of revising agents playing action $a_{i}$
is proportional to $e^{\frac{\textrm{Payoff of \ensuremath{a_{i}}}}{\eta}}$.
Specifically, logit dynamics are given by

\begin{equation}
w_{i}\left(p_{j}\right)\equiv w_{i}\left(\textbf{p}\right)=\frac{e^{\frac{p_{j}\cdot u_{j}}{\eta_{i}}}}{e^{\frac{1-p_{j}}{\eta_{i}}+}e^{\frac{p_{j}\cdot u_{j}}{\eta_{i}}}}.\label{eq:logit-dyanmics}
\end{equation}

Trivially, logit dynamics can induce substantial miscoordination by
having high values of noise. The interesting question is whether stable
miscoordination can be supported by a low level of noise. Our numerical
analysis suggests that the answer is negative. In what follows, we demonstrate that this is indeed the case. For example, when we revisit the two examples of Figure \ref{fig:fig3-local-stablity} ($u_{1}=u_{2}=2.5$
and $u_{1}=\frac{1}{u_{2}}=5$), then the minimal level of noise that
is required to sustain an asymptotically stable interior state in which
each action is played with a probability of at least 10\% is $\eta=1$
(see the left panel of Figure \ref{fig4-logit-hetro} for an illustration
of the case of $u_{1}=u_{2}=2.5$). Such a high level of noise implies
that 27\% of the revising agents make the obvious mistake of playing
$a_{i}$ when facing a population in which almost everyone
plays $b_{j}$; by contrast, this obvious mistake is never made under
action-sampling dynamics. Moreover, the average expected payoff obtained
by revising agents who follow logit dynamics against an opponent
population in which the share of agents playing action $a_{i}$ is
distributed uniformly is 85\% (resp., 71\%) of the maximal payoff
that can be obtained by payoff-maximizing agents in the first (resp.,
second) environment with $u_{1}=u_{2}=2.5$ (resp., $u_{1}=\frac{1}{u_{2}}=5$).
By contrast, this average expected payoff is 98\% (resp., 95\%) of
the maximal payoff under the sampling dynamics. Thus, stable cooperation
can be supported by standard (homogeneous) logit dynamics only when the agents have high levels of noise. 

\paragraph{Heterogeneous Logit Dynamics }

\begin{figure}[h]
\caption{Supporting Stable Coordination with Heterogeneous Logit Dynamics\label{fig4-logit-hetro}}
\includegraphics[scale=0.59]{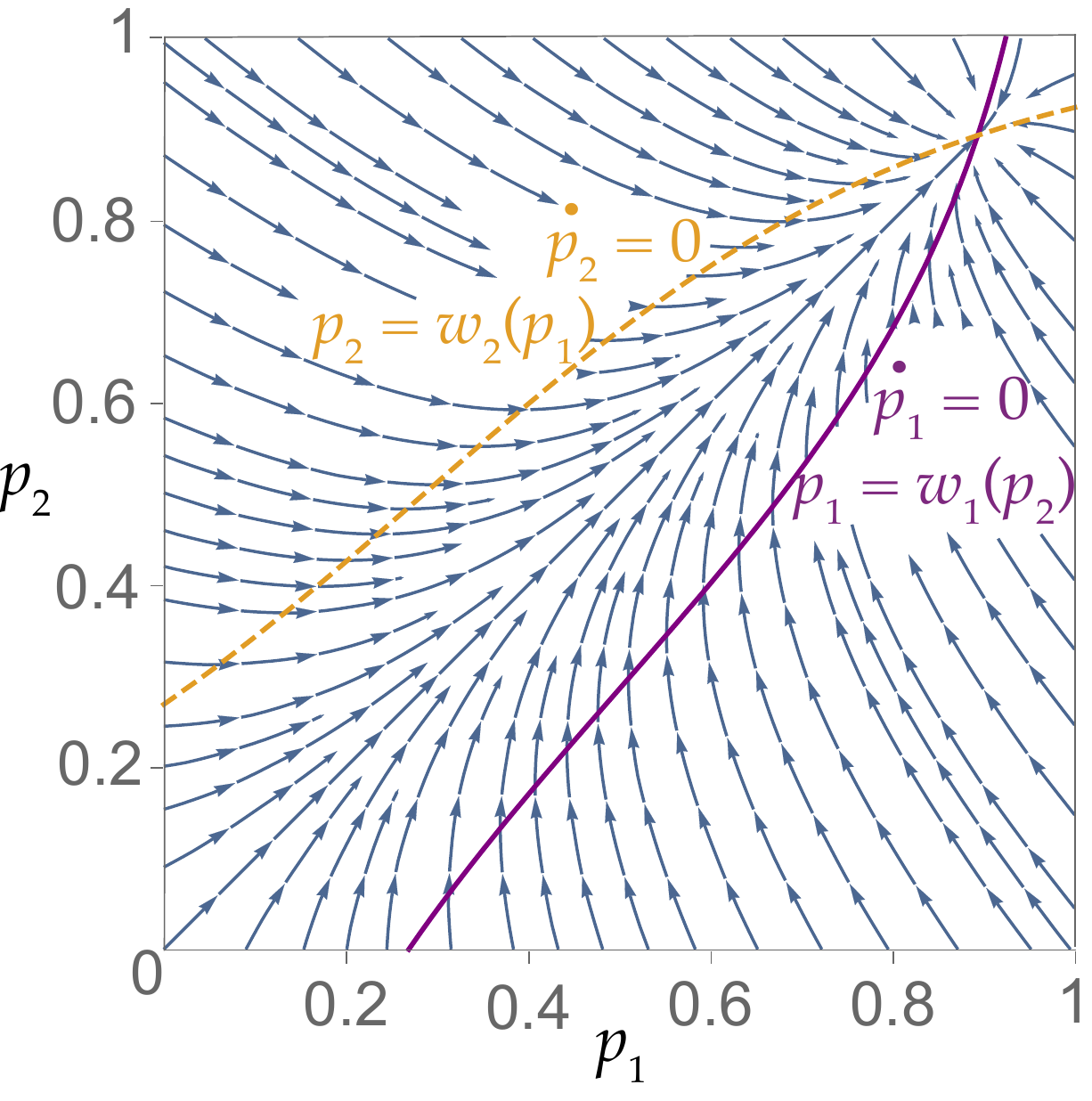}~~~~~~\includegraphics[scale=0.59]{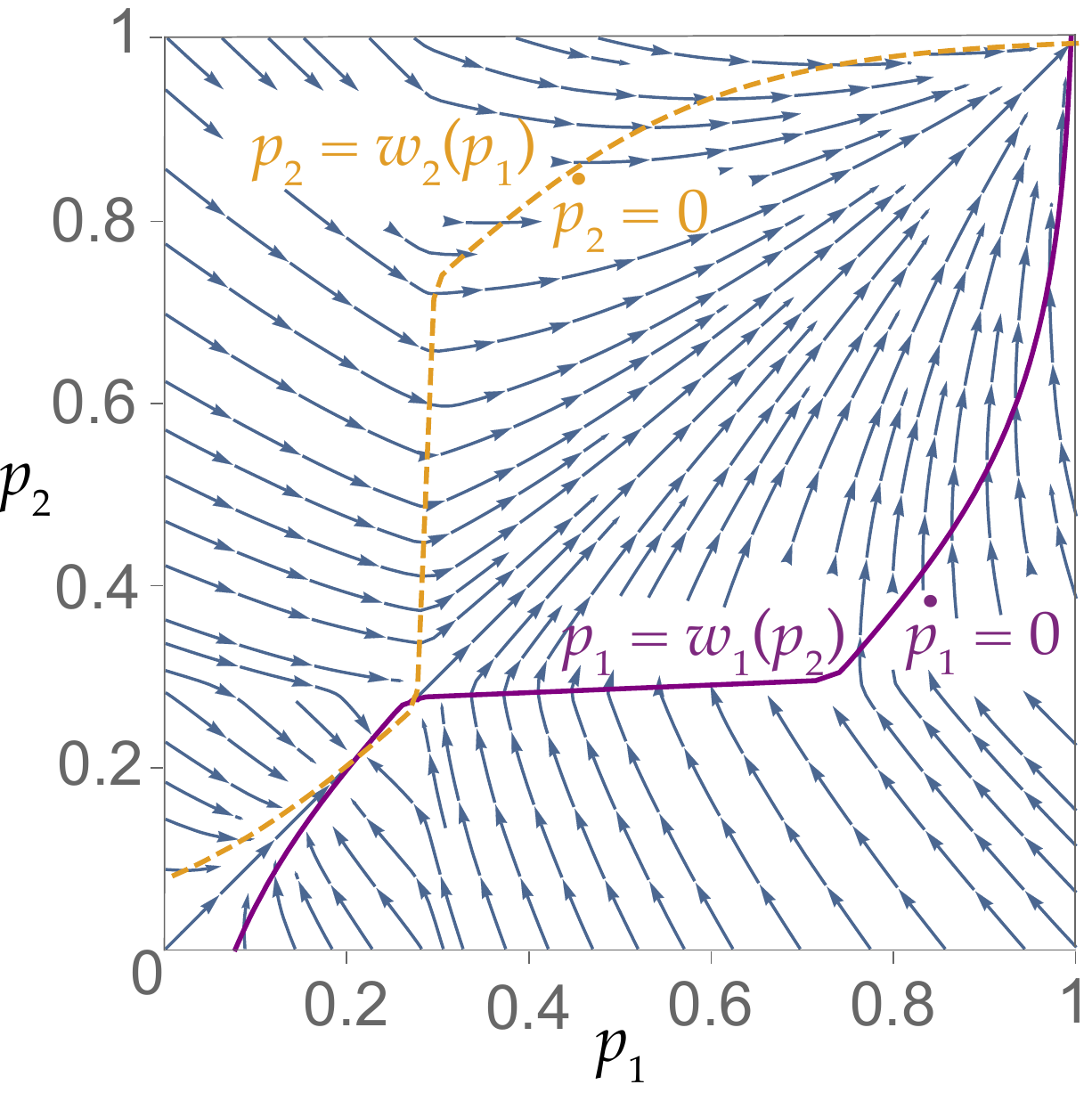}
\small{
The figure revisits the symmetric game presented in the left panel of Figure
\ref{fig:fig3-local-stablity} in which $u_{1}=u_{2}=2.5$. The left
panel shows the phase plot of the minimal homogeneous level of noise, $\eta_{i}=1$, that sustains an asymptotically stable state
in which each action is played with a probability of at least 10\%.
The right panel shows the phase plot of a heterogeneous variant of
logit dynamics in which 55\% of the the agents in each population
have a moderate level of noise $\eta_{i}=0.55$ and 45\% have a small
level of noise $\eta_{i}=0.01$.}
\end{figure}
Next, consider a variant of logit dynamics in which there is heterogeneity
in the level of noise for agents in each population. Specifically, in
a population in which there are $n$ groups, the size of the $l$-th
group is $\mu_{i}^{l},$ and its members have a noise level of $\eta_{i}^{l},$
the heterogeneous logit dynamics are given by
\begin{equation}
w_{i}\left(p_{j}\right)\equiv w_{i}\left(\textbf{p}\right)=\sum_{l}\mu_{i}^{l}\cdot\frac{e^{\frac{p_{j}\cdot u_{j}}{\eta_{i}^{l}}}}{e^{\frac{1-p_{j}}{\eta_{i}^{l}}+}e^{\frac{p_{j}\cdot u_{j}}{\eta_{i}^{l}}}}.\label{eq:logit-dyanmics-1}
\end{equation}
The numerical calculations demonstrate that heterogeneous noise levels
can induce asymptotically stable miscoordination with relatively low
levels of noise. Specifically, in both of the above examples ($u_{1}=u_{2}=2.5$,
which is illustrated in the right panel of Figure \ref{fig4-logit-hetro},
and $u_{1}=\frac{1}{u_{2}}=5$), populations in which 55\% of the agents have a moderate level of noise (i.e., $\eta=0.55$) and 45\% have
a small level of noise (i.e., $\eta=0.01$) induce asymptotically stable states
with miscoordination ($\left(0.21,0.21\right)$ in the right panel
of Figure \ref{fig4-logit-hetro}). Given these heterogeneous levels
of noise, only 8\% of the agents make the mistake of playing action
$a_{i}$ when facing a population in which everyone plays
$a_{j}$, and the average expected payoff obtained by playing against opponent
populations in which the share of agents playing action $a_{i}$ is
distributed uniformly is 96\% (resp., 89\%) of the maximal payoff
that can be obtained by payoff-maximizing revising agents in the environment
with $u_{1}=u_{2}=2.5$ (resp., $u_{1}=\frac{1}{u_{2}}=5$).

 \end{document}